\newif\ifcrazy
\newif\ifLONGVERSION
\newif\ifCONSTANTPATTERN
\newcommand{\appcrazy}{Section}
\newcommand{\appcrazy}{Appendix}
\begin{document}
\title{Static and Dynamic Semantics of NoSQL Languages}
\authorinfo{V\'eronique Benzaken$^1$ \and Giuseppe Castagna$^2$ \and Kim Nguy\~{\^e}n$^1$ \and J\'er\^ome Sim\'eon$^3$\\[2mm]\normalsize
$^1${LRI, Universit\'e Paris-Sud, Orsay, France}, \quad
$^2${CNRS, PPS, Univ Paris Diderot, Sorbonne Paris Cit\'e, Paris, France}\\
$^3${IBM Watson Research, Hawthorne, NY, USA}}{}

\maketitle

\begin{abstract}
  We present a calculus for processing semistructured data that spans
differences of application area among several novel query languages,
broadly categorized as ``NoSQL''. This calculus lets users define
their own operators, capturing a wider range of data processing
capabilities, whilst providing a typing precision so far typical only
of primitive hard-coded operators. The type inference algorithm is
based on semantic type checking, resulting in type information that is
both precise, and flexible enough to handle structured and
semistructured data.
  We illustrate the use of this calculus by encoding a large fragment
of Jaql, including operations and iterators over JSON, embedded SQL
expressions, and co-grouping, and show how the encoding directly
yields a typing discipline for Jaql as it is, namely without the
addition of any type definition or type
annotation in the code.
\end{abstract}

\let\thefootnote\relax\footnotetext{An extended abstract of this work is included in the proceeding of \emph{POPL\,13,  40th ACM Symposium on Principles of Programming Languages}, ACM Press, 2013.} 

\section{Introduction}
\label{label:introduction}
The emergence of Cloud computing, and the ever growing importance of
data in applications, has given birth to a whirlwind of new data
models~\cite{json,odata} and languages. Whether they are developed
under the banner of ``NoSQL''~\cite{UnQL,JSONiq}, for BigData
Analytics~\cite{OlstonRSKT08,BeyerEGBEKOS11,jaql}, for Cloud
computing~\cite{Asterix}, or as domain specific languages (DSL)
embedded in a host
language~\cite{Squeryl,Meijer11,DBLP:conf/icfp/OhoriU11}, most of them
share a common subset of SQL and the ability to handle semistructured
data. \changes{While there is no consensus yet on the precise boundaries of this class of languages, they all share two common traits: \emph{(i)} an
emphasis on sequence operations (\emph{eg}, through the popular MapReduce
paradigm) and \emph{(ii)} a lack of types for both data and programs
(contrary to, say, XML programming or relational databases where data schemas are pervasive).}
 In~\cite{MeijerB11,Meijer11}, Meijer argues that such languages
can greatly benefit from formal foundations, and suggests
comprehensions~\cite{WadTri89,DBLP:conf/icdt/TannenBW92,DBLP:journals/sigmod/BunemanLSTW94}
as a unifying model. Although we agree with Meijer for the need to
provide unified, formal foundations to those new languages, we argue
that such foundations should account for novel features critical to
various application domains that are not captured by
comprehensions. Also, most of those languages provide limited type
checking, or ignore it altogether. We believe type checking is
essential for many applications, with usage ranging from error
detection to optimization.  \changes{But we understand the designers and programmers
of those languages who are averse to any kind of type definition or annotation.}
In this paper, we propose a calculus which
is expressive enough to capture languages that go beyond SQL or
comprehensions. We show how the calculus adapts to various data models
while retaining a precise type checking  \changes{that can exploit in a flexible way limited type
information, information that is deduced directly from the structure of the
program even in the absence of any explicit type declaration or
annotation.}\looseness -1

\paragraph*{Example.}

We use Jaql~\cite{BeyerEGBEKOS11,jaql}, a language over
JSON~\cite{json} developed for BigData analytics, to illustrate how
our proposed calculus works. Our reason for using Jaql is that it
encompasses all the features found in the previously cited query
languages and includes a number of original ones, as well.  Like
Pig~\cite{OlstonRSKT08} it supports sequence iteration, filtering, and
grouping operations on non-nested queries.  Like AQL~\cite{Asterix}
and XQuery~\cite{xquery}, it features nested queries. Furthermore,
Jaql uses a rich data model that allows arbitrary nesting of data (it
works on generic sequences of JSON records whose fields can contain
other sequences or records) while other languages are limited to flat
data models, such as AQL whose data-model is similar to the standard
relational model used by SQL databases (tuples of scalars and of lists
of scalars). Lastly, Jaql includes SQL as an embedded sub-language for
relational data. For these reasons, although in the present work we focus
almost exclusively on Jaql, we believe that our work can be adapted without
effort to a wide array of sequence processing languages.


%
The following Jaql program illustrates some of those features. It
performs co-grouping~\cite{OlstonRSKT08} between one JSON input,
containing information about departments, and one relational input
containing information about employees. The query returns for each
department its name and id, from the first input, and the number of
high-income employees from the second input. A SQL expression is used
to select the employees with income above a given value, while a Jaql
filter is used to access the set of departments and the elements of
these two collections are processed by the group expression (in Jaql
``\verb|$|'' denotes the current element).\looseness -1
\begin{alltt}\ibm
  group
    (depts -> filter each x (x.size > 50))
          by g = $.depid  as ds,
    (SELECT * FROM employees WHERE income > 100)
          by g = $.dept as es
  into \{ dept: g,
         deptName: ds[0].name,
         numEmps: count(es) \};
\end{alltt}
The query blends Jaql expressions (\emph{eg}, \texttt{\ibm filter}
which selects, in the collection \texttt{\ibm depts}, departments with
a \texttt{\ibm size} of more than \texttt{\ibm 50} employees, and the
grouping itself) with a SQL statement (selecting employees in a
relational table for which the salary is more than \texttt{\ibm 100}).
Relations are naturally rendered in JSON as collections of records. In
our example, one of the key difference is that field access in SQL
requires the field to be present in the record, while the same
operation in Jaql does not. Actually, field selection in Jaql is very
expressive since it can be applied also to collections with the effect
that the selection is recursively applied to the components of the
collection and the collection of the results returned, and similarly
for \texttt{\ibm filter} and other iterators. In other words, the
expression \texttt{\ibm filter each x (x.size > 50)} above will work
as much when \texttt{\ibm x} is bound to a record (with or without a
\texttt{\ibm size} field: in the latter case the selection returns {\ibm\Null}),
as when \texttt{\ibm x} is bound to a collection of
records or of arbitrary nested collections thereof.  This accounts for
the semistructured nature of JSON compared to the relational
model. Our calculus can express both, in a way that illustrates the
difference in both the dynamic semantics and static typing.

In our calculus, the selection of all records
whose \emph{mandatory} field \texttt{income} is greater than 100 is
defined as:
\vspace{-1.5mm}
\begin{alltt}
let Sel =
    `nil => `nil
  | (\{income: x, .. \} as y , tail) =>
       if x > 100 then (y,Sel(tail)) else Sel(tail)
\end{alltt}\vspace{-1.5mm}
(collections are encoded as lists \emph{à la} Lisp) while the filtering
among records or arbitrary nested collections of records of those where the (optional) \texttt{size} field is present and larger than
50 is:\vspace{-1.5mm}
\begin{alltt}
let Fil =
    `nil => `nil
  | (\{size: x, .. \} as y,tail) =>
        if x > 50 then (y,Fil(tail)) else Fil(tail)
  | ((x,xs),tail) =>  (Fil(x,xs),Fil(tail))
  | (\_,tail) => Fil(tail)
\end{alltt}\vspace{-1.5mm}
The terms above show nearly all the basic building blocks of our
calculus (only composition is missing), building blocks that we dub
\emph{filters}. Filters can be defined recursively (\emph{eg},
\texttt{Sel(tail)} is a recursive call); they can perform pattern
matching as found in functional languages (the filter $\fpat{p}{f}$ executes $f$ in the environment
resulting from the matching of pattern $p$); they can be composed in
alternation ($\falt{f_1}{f_2}$ tries to apply $f_1$ and if it fails it
applies $f_2$), they can spread over the structure of their argument
(\emph{eg}, $\fprod{f_1}{f_2}$ ---of which \texttt{(x,Sel(tail))} is
an instance--- requires an argument of a product type and applies the
corresponding $f_i$ component-wise).

For instance, the filter \texttt{Fil} scans collections encoded as
lists \emph{\`a la} Lisp (\emph{ie}, by right associative pairs with
\NIL denoting the empty list). If its argument is the empty list, then
it returns the empty list; if it is a list whose head is a record with a
\texttt{size} field (and possibly other fields matched by
``\textbf{.\,.}''), then it captures the whole record in \texttt{y}, the
content of the field in \texttt{x}, the tail of the list in
\texttt{tail}, and keeps or discards \texttt{y} (\emph{ie}, the record)
according to whether \texttt{x} (\emph{ie}, the field)
 is larger than 50; if the head is also a list, then it recursively
applies both on the head and on the tail; if the head of the list is
neither a list, nor a record with a size field, then the head is
discarded.%
\ifLONGVERSION
\footnote{For the sake of precision, to comply with Jaql's semantics the last pattern should rather
  be \texttt{(\{..\},tail) => Fil(tail)}: \changes{since field selection $e\texttt{.size}$ fails
  whenever $e$ is not a record or a list, this definition would detect the possibility of this failure by a static type error.}  } 
\else
\ 
\fi
The encoding of the
whole grouping query is given in Section~\ref{sec:example}.



\changes{Our aim is not to propose yet another ``NoSQL/cloud computing/bigdata
analytics'' query language, but rather to show how to
\emph{express} and \emph{type} such languages
via an encoding into our core calculus.} Each such language can in this
way preserve its execution model but obtain for free a formal
semantics, a type inference system and, as it happens, a prototype
implementation. \changes{The type information is deduced via the encoding (without the
need of \emph{any} type annotation) and can be used for early error
detection and debugging purposes. The encoding also yields an
executable system that can be used for rapid prototyping.} Both possibilities are
critical in most typical usage scenarios of these languages, where
deployment is very expensive both in time and in resources. As
observed by Meijer~\cite{Meijer11} the advent of big data makes it
more important than ever for programmers (and, we add, for language
and system designers) to have a single abstraction that allows them to
process, transform, query, analyze, and compute across data presenting
utter variability both in volume and in structure, yielding a
``mind-blowing number of new data models, query languages, and
execution fabrics''~\cite{Meijer11} . The framework we present here,
we claim, encompasses them all.  A long-term goal is that the compilers
of these languages could use the type information inferred from the
encoding and the encoding itself to devise further optimizations.

\paragraph*{Types.}

Pig~\cite{OlstonRSKT08}, Jaql~\cite{jaql,biginsights},
AQL~\cite{Asterix} have all been conceived by considering just the
map-reduce execution model. The type (or, schema) of the manipulated
data did not play any role in their design. As a consequence these
languages are untyped and, when present, types are optional and
clearly added as an afterthought. Differences in data model or type
discipline are particularly important when embedded in a host language
(since they yield the so-called impedance mismatch).  The reason why
types were/are disregarded in such languages may originate in an
alleged tension between type inference and
heterogeneous/semistructured data: on the one hand these languages are
conceived to work with collections of data that are weakly or
partially structured, on the other hand current languages with
type inference (such as Haskell or ML) can work only on
homogeneous collections (typically, lists of elements of the same
type).\looseness -1

In this work we show that the two visions can coexist: we type data by
semantic subtyping~\cite{jacm08}, a type system conceived for
semi-structured data, and describe computations by our \emph{filters}
which are untyped combinators that, thanks to a technique of weak
typing introduced in~\cite{Castagna2008}, can polymorphically type the
results of data query and processing with a high degree of
precision. The conception of \emph{filters} is driven by the schema of
the data rather than the execution model and we use them
$(\romannumeral 1)$ to capture and give a uniform semantics to a wide
range of semi structured data processing capabilities, $(\romannumeral
2)$ to give a type system that encompasses the types defined for such
languages, if any, notably Pig, Jaql and AQL (but also XML query and
processing languages: see Section~\ref{sec:example}), $(\romannumeral
3)$ to infer the precise result types of queries written in these
languages \emph{as they are} (so without the addition of any explicit
type annotation/definition or new construct), and $(\romannumeral 4)$ to show how
minimal extensions/modifications of the current syntax of these languages can bring
dramatic improvements in the precision of the inferred types.\looseness-1

The types we propose here are extensible record types and
heterogeneous lists whose content is described by regular expressions
on types as defined by the following grammar:\\[2mm]
$\begin{array}{lrclr}
\textbf{Types}
  & t &::=  & v &\text{(singleton)} \\
  &   &\mid &\patrec{\ell\col t, \ldots, \ell\col t} &\text{(closed record)}\\
  &   &\mid &\patorec{\ell\col t, \ldots, \ell\col t} &\text{(open record)}\\
  &   &\mid &\LIST{r} &\text{(sequences)}\\
  &   &\mid &\texttt{int}\mid\texttt{char} &\text{(base)} \\
  &   &\mid &\ANY\mid\EMPTY\mid\texttt{null} &\text{(special)}\\
  &   &\mid &\pator{t}{t} &\text{(union)}\\
  &   &\mid &\patdiff{t}{t} &\text{(difference)}\\
\\
\textbf{Regexp} & r&::= &\varepsilon~\mid~ t ~\mid~ r\ks~\mid~ r\texttt{+}~\mid~ r\why~\mid~ r\,r~\mid~\pator{r}{r}\hspace*{-15mm}
\end{array}$\\[2mm]
where $\varepsilon$ denotes the empty word.  The semantics of types can
be expressed in terms of sets of \emph{values} (values are either
constants ---such as \texttt{1}, \texttt{2}, \texttt{true},
\texttt{false}, \Null, \texttt{'1'}, the latter denoting the character
\texttt{1}---, records of values, or lists of values). So the
singleton type $v$ is the type that contains just the value $v$ (in
particular \Null\ is the singleton type containing the value
\Null). The closed record type $\patrec{\fa\col\Int, \fb\col\Int}$
contains all record values with exactly two fields \fa~ and \fb~ with
integer values, while the open record type $\patorec{\fa\col\Int,
  \fb\col\Int}$ contains all record values with \emph{at least} two
fields \fa~ and \fb~ with integer values. The sequence type $\LIST{r}$
is the set of all sequences whose content is described by the
\emph{regular expression} $r$; so, for example \LIST{\texttt{char*}}
contains all sequences of characters (we will use \texttt{string} to
denote this type and the standard double quote notation to denote its values) while \LIST{(\patrec{\fa\col\Int}
  \patrec{\fa\col\Int})\texttt{+}} denotes nonempty lists of even
length containing record values of type $\patrec{\fa\col\Int}$.  The
union type $\pator{s}{t}$ contains all the values of $s$ and of $t$,
while the difference type $\patdiff{s}{t}$ contains all the values of
$s$ that are not in $t$.  We shall use \texttt{bool} as an
abbreviation of the union of the two singleton types containing true
and false: $\pator\true\false$. \ANY~ and \EMPTY~ respectively contain
all and no values. Recursive type definitions are also used (see Section~\ref{sec:types} for formal details).

These types can express all the types of Pig, Jaql and
AQL%
\ifLONGVERSION\footnote{The only exception are the ``bags'' types we did not
  include in order to focus on essential features.}%
\fi
, all XML types, and
much more. So for instance, AQL includes only homogeneous lists of type
$t$, that can be expressed by our types as \LIST{ t\ks\ }. In Jaql's
documentation one can find the type \texttt{\ibm [~long(value=1),
    string(value="a"), boolean* ]} which is the type of arrays whose
first element is 1, the second is the string "a" and all the other are
booleans. This can be easily expressed in our types as \LIST{\texttt{1
    "a" bool*}}. But while Jaql only allows a limited use of regular
expressions (Kleene star can only appear in tail position) our types
do not have such restrictions.  So for example \LIST{\texttt{char* '@'
    char* '.' (('f' 'r')|('i' 't'))}} is the type of all strings
(\emph{ie}, sequences of chars) that denote email addresses ending by
either \texttt{.fr} or \texttt{.it}. We use some syntactic sugar
to make terms as the previous one more readable (\emph{eg},
\LIST{\texttt{ .* '@' .* ('.fr'|'.it')}}). Likewise, henceforth we
use \texttt{\patrec{a?\col $t$}} to denote that the field
\texttt{a} of type $t$ is optional; this is just syntactic sugar for
stating that either the field is undefined or it contains a value of type $t$ (for the formal definition see 
\appcrazy~\ref{tsum}%
\ifcrazy
\ available online%
\fi
).

Coming back to our initial example, the filter \texttt{Fil} defined
before expects as argument a collection of  the following type:\\[1.5mm]
\centerline{\tt  type Depts = [ ( \{size?: int, ..\} | Depts )* ]}\\[1.5mm]
that is a, possibly empty, arbitrary nested list of records with an
optional \texttt{size} field of type \Int: notice that it is important
to specify the optional field and its type since a \texttt{size} field
of a different type would make the expression \texttt{x > 50} raise a
run-time error. \changes{This information is deduced just from the \emph{structure} 
of the filter (since \texttt{Fil} does not contain any type definition or annotation).} 

We define a type inference system that rejects any argument of
\texttt{Fil} that has not type \texttt{Depts}, and deduces for
arguments of type
{\texttt{[(\{size: int, addr: string\}| \{sec: int\} | Depts)+]}}
(which is a subtype of \texttt{Depts}) the result type
\texttt{[(\{size: int, addr: string\}|Depts)*]} (so it does not forget
the field \texttt{addr} but discards the field \texttt{sec}, and by
replacing \ks\ for \kss\ recognizes that the test may fail).



By encoding primitive Jaql operations into a formal core calculus we
shall provide them a formal and clean semantics as well as precise
typing. So for instance it will be clear that applying the following
dot selection \verb|[ [{a:3}] {a:5, b:true} ].a| the result will be
\verb|[ [3] 5 ]| and we shall be able to deduce that \verb|_.a|
applied to arbitrary nested lists of records with an optional integer
\texttt{a} field (\emph{ie}, of type $\texttt{\textit{t} =
  \texttt{\patrec{a?\col int} | [ \textit{t}* ]}}$ ) yields arbitrary
nested lists of \Int\ or \Null\ values (\emph{ie}, of type
$\texttt{\textit{u} = int | null | [ \textit{u}* ]}$).

Finally we shall show that if we accept to extend the current syntax
of Jaql (or of some other language) by some minimal filter syntax
(\emph{eg}, the pattern filter) we can obtain a huge
improvement in the precision of type inference.

\paragraph{Contributions.}
The main contribution of this work is the definition of a calculus
that encompasses structural operators scattered over
NoSQL languages
%
and that possesses some characteristics that make it
unique in the swarm of current semi-structured data processing
languages. In particular it is parametric (though fully embeddable)
in a host language; it uniformly handles both width and deep nested
data recursion (while most languages offer just the former and limited
forms of the latter); finally, it includes first-class arbitrary deep
composition (while most languages offer this operator only at top
level), whose power is nevertheless restrained by the type
system.\looseness -1


An important contribution of this work is that it directly compares a
programming language approach with the tree transducer one. Our
calculus implements transformations typical of top-down tree
transducers but has several advantages over the transducer approach:
$(1)$ the transformations are expressed in a formalism immediately intelligible
to any functional programmer; \changes{$(2)$ our calculus, in its untyped version,
is Turing complete; $(3)$ its transformations can be statically typed (at the expenses of Turing completeness) without any
annotation yielding precise result types} $(4)$ even if we restrict
the calculus only to well-typed terms (thus losing Turing completeness), it still is strictly more expressive than well-known
and widely studied deterministic top-down tree transducer formalisms.

The technical contributions are $(\romannumeral 1)$ the proof of
Turing completeness for our formalism, $(\romannumeral 2)$ the
definition of a type system that copes with records with computable labels
$(\romannumeral 3)$ the definition of a static type system for filters
and its correctness, $(\romannumeral 4)$ the definition of a static
analysis that ensures the termination (and the proof thereof) of the type inference algorithm with complexity bounds expressed
in the size of types and filters and $(\romannumeral 4)$ the proof
that the terms that pass the static analysis form a language strictly more
expressive than top-down tree transducers.

\paragraph{Outline.}
In Section~\ref{label:definitions} we present the syntax of
the three components of our system. Namely, a minimal set of \emph{expressions},
the calculus of \emph{filters} used to program user-defined operators
or to encode the operators of other languages, and the core
\emph{types} in which the types we just presented are to be
encoded.
Section~\ref{label:dynamic} defines the operational semantics
of filters and a declarative semantics for operators. The type system
as well as the type inference algorithm are described in
Section~\ref{label:typing}. In Section~\ref{label:jaql} we present how
to handle a large subset of Jaql. 
\ifLONGVERSION
Section~\ref{label:comment} reports
on some subtler design choices of our system. We compare
related works in Section~\ref{label:related} and
conclude in Section~\ref{label:conclusion}.
\else
Section~\ref{label:comment} reports
on some subtler design choices of our system and compare
with related work. 
\fi
\ifcrazy
Due to strict space limitations,  proofs, secondary results, encodings, and further extensions are only available in the version accessible online at any of the first three authors' web pages or from PC-chair: references in the text starting with capital letters (\emph{eg}, Section H.1) refer to it.
\else
In order to avoid blurring the presentation, proofs, secondary results, further encodings, and extensions are 
moved into a separate appendix.
\fi

\section{Syntax}
\label{label:definitions}
In this section we present the syntax of the three components of our
system: a minimal set of \emph{expressions}, the calculus of
\emph{filters} used to program user-defined operators or to encode the operators of other languages, and
the core \emph{types} in which the types presented in the
introduction are to be encoded.

The core of our work is the definition of filters and types. The key property
of our development is that filters can be grafted to any host language that satisfies
minimal requirements, by simply adding filter application to the
expressions of the host language. The minimal requirements of the host language for this to be possible are quite simple: it must have
constants (typically for types \Int, \Char, \String, and \Bool),
variables, and either pairs or record values (not necessarily both).
On the static side the host language must have at least basic and products
types and be able to assign a type to expressions in a given type environment (\emph{ie}, under some typing assumptions for variables).
By the addition of filter applications, the host language can acquire
or increase the capability to define polymorphic user-defined
iterators, query and processing expressions, and be enriched with a powerful
and precise type system.

\subsection{Expressions}
In this work we consider the following set of expressions 
\begin{definition}[expressions\vspace{-1mm}%
]\label{def:exp}
\begin{displaymath}\vspace{-3.5mm}
\begin{array}{lclr}
\textbf{Exprs}\quad
e & :: = & c & \text{(constants)}\\
&\mid& x & \text{(variables)}\\
&\mid& (e,e) & \text{(pairs)}\\
&\mid& \{e\col e,...,e\col e\} & \text{(records)}\\
&\mid& e + e & \text{(record concatenation)}\\
&\mid& e \setminus\ell  & \text{(field deletion)}\\
&\mid& \textup{\sf op}(e,...,e) &\text{(built-in operators)}\\
&\mid& fe & \text{(filter application)}\\[2mm]
\end{array}
\end{displaymath}
where $f$ ranges over \emph{filters} (defined later on),
$c$ over generic constants, and $\ell$ over \emph{string} constants.
\end{definition}
Intuitively, these expressions represent the syntax supplied by the
host language ---though only the first two and one of the next two are really needed--- that
we extend with (the missing expressions and) the expression of filter
application. Expressions are formed by constants,
variables, pairs, records, and operation on records: record
concatenation gives priority to the expression on the right. So if in
$r_1+r_2$ both records contains a field with the same label, it is the
one in $r_2$ that will be taken, while field deletion does not require
the record to contain a field with the given label (though this point
is not important).  The metavariable \textsf{op} ranges over operators
as well as functions and other constructions belonging to or defined by
the host language. Among expressions we single out a set of
\emph{values}, intuitively the results of computations, that are
formally defined as follows:
$$ \begin{array}{lcl}
   v & ::= & c ~|~ (v, v) ~|~ \{ \ell\col v;~ \ldots ;~ \ell\col v \}\\
 \end{array}
$$
 We use \textup{\texttt{"foo"}} for character
string constants, \textup{\texttt{'c'}} for characters,
\textup{\texttt{1 2 3 4 5}} and so on for
integers, and backquoted words, such as \textup{\texttt{`foo}}, for
atoms (\emph{ie}, user-defined constants). We use three distinguished atoms \NIL, \true, and \false. Double quotes can be omitted for strings that are labels of record fields: thus we write $\{\texttt{name}\col\texttt{"John"}\}$ rather than  $\{\texttt{"name"}\col\texttt{"John"}\}$.
Sequences (\emph{aka}, heterogeneous lists, ordered collections, arrays) are encoded \emph{à la LISP}, as nested pairs where the
atom \NIL denotes the empty list. We use 
  \LIST{$e_1$ ~\ldots~ $e_n$} 
as syntactic sugar for
  $(e_1, \ldots, (e_n, \NIL)...)$.

\subsection{Types}\label{sec:types}
\ifLONGVERSION
Expressions, in particular filter applications, are typed by
the following set of types (typically only basic, product,
recursive and ---some form of--- record types will be provided by the
host language):
\fi
\begin{definition}[types]\label{def:types}
\begin{displaymath}
\begin{array}{rclr}
\textbf{Types}\quad
t &::= & b & \text{(basic types)}\\
 &~|~& v & \text{(singleton types)}\\
 &~|~& \patpair{t}{t} & \text{(products)}\\
 &~|~& \patrec{\ell\col t, \ldots, \ell\col t} & \text{(closed records)}\\
 &~|~& \patorec{\ell\col t, \ldots, \ell\col t}\hspace*{-4mm} & \text{(open records)}\\
 &~|~& \pator{t}{t} & \text{(union types)}\\
 &~|~& \patand{t}{t}& \text{(intersection types)} \\
 &~|~& \synneg t & \text{(negation type)}\\
 &~|~& \tempty & \text{(empty type)}\\
 &~|~& \tany &\text{(any type)} \\
 &~|~& \mu T.t & \text{(recursive types)}\\
 &~|~& T& \text{(recursion variable)}\\
 &~|~& \textup{\sf Op}(t,...,t)& \text{(foreign type calls)}
\end{array}
\end{displaymath}
where every recursion is guarded, that is, every type
variable is separated from its binder by at least one application of a
type constructor (\emph{ie}, products, records, or
\textup{\textsf{Op}}).
\end{definition}
Most of these types were already explained in the introduction. We have basic types (\Int,
\Bool, ....) ranged over by $b$ and singleton types $v$
denoting the type that contains only the value $v$. Record types come
in two flavors: closed record types whose values are records with
exactly the fields specified by the type, and open record types whose
values are records with \emph{at least} the fields specified by the
type. Product types are standard and we have a complete set of type connectives, that is, finite
unions, intersections and negations. We use $\tempty$, to denote the
type that has no values and $\tany$ for the type of all values
(sometimes denoted by ``\texttt{\_}'' when used in patterns).
We added a term for recursive types, which allows us to encode both the regular expression types defined in the introduction and, more generally, the recursive type definitions we used there.
Finally,
we use \textup{\sf Op} (capitalized to distinguish it from expression
operators) to denote the host language's \emph{type} operators (if any). Thus,
when filter applications return values whose type belongs just to the foreign
language (\emph{eg}, a list of functions) we suppose the typing
of these functions be given by some type operators. For instance, if
\textsf{succ} is a user defined successor function, we will suppose to
be given its type in the form \textsf{Arrow}(\Int,\Int) and, similarly,
for its application, say \textsf{apply}(\textsf{succ},3) we will be
given the type of this expression (presumably \Int). Here \textsf{Arrow} is a type operator and \textsf{apply} an expression operator.

\rechanges{The denotational semantics of types as sets of values, that
  we informally described in the introduction, is at the basis of the
  definition of the subtyping relation for these types. We say that a
  type $t_1$ is a subtype of a type $t_2$, noted $t_1\sleq t_2$, if
  and only if the set of values denoted by $t_1$ is contained (in the
  set-theoretic sense) in the set of values denoted by $t_2$. For the
  formal definition and the decision procedure of this subtyping
  relation the reader can refer to the work on semantic
  subtyping~\cite{jacm08}.}\looseness -1
\subsection{Patterns}
%
Filters are our core untyped operators. All they can do are three different
things: $(1)$ they can structurally decompose and transform the values they
are applied to, or $(2)$ they can be sequentially composed, or $(3)$ they can do pattern
matching. In order to define filters, thus, we first need to define
patterns.
\begin{definition}[patterns]\label{patdef}
\ifCONSTANTPATTERN
\begin{displaymath}
  \begin{array}{lclr}
\textbf{Patterns}\quad
p & ::= & t                                              & \text{(type)}\\
  &  |  & x                                              & \text{(variable)}\\
  &  |  &  \patpair p p                                  & \text{(pair)}\\
  &  |  &  \patrec {\ell\col p, \ldots, \ell\col p}   & \text{(closed rec)}\\
  &  |  &  \patorec {\ell\col p, \ldots, \ell\col p}  & \text{(open rec)}\\
  &  |  &  \pator p p                                    & \text{(or/union)}\\
  &  |  &  \patand p p                                   & \text{(and/intersection)}\\
  &  |  &  \patcon{x}{c}                                 & \text{(constant)}
  \end{array}
 \end{displaymath}
\else
\begin{displaymath}
  \begin{array}{lclr}
\textbf{Patterns}\quad
p & ::= & t                                              & \text{(type)}\\
  &  |  & x                                              & \text{(variable)}\\
  &  |  &  \patpair p p                                  & \text{(pair)}\\
  &  |  &  \patrec {\ell\col p, \ldots, \ell\col p}   & \text{(closed rec)}\\
  &  |  &  \patorec {\ell\col p, \ldots, \ell\col p}  & \text{(open rec)}\\
  &  |  &  \pator p p                                    & \text{(or/union)}\\
  &  |  &  \patand p p                                   &\hspace*{-5mm} \text{(and/intersection)}
  \end{array}
 \end{displaymath}
\fi
where the subpatterns forming pairs, records, and intersections have distinct capture variables, and those forming unions have the same capture variables.
\end{definition}
Patterns are essentially types in which capture variables (ranged over by $x$, $y$, \ldots) may occur in every position that is not under a negation or a recursion%
\ifCONSTANTPATTERN
,plus the default value
pattern $\patcon{x}{c}$, which is a pattern matched by every value and
returning the substitution $\{x/c\}$. 
\else.\fi
~A pattern is used to match a
value. The matching of a value $v$ against a pattern $p$, noted $v/p$, either fails (noted $\Omega$) or it returns a substitution from the
variables occurring in the pattern, into values.
 The
substitution is then used as an environment in which some expression
is evaluated. If the pattern is a type,
then the matching fails if and only if the pattern is matched against
a value that does not have that
type, otherwise it returns the empty substitution. If it is a variable, then the matching always succeeds and returns the
substitution that assigns the matched value to the variable. The pair pattern
$\patpair{p_1}{p_2}$ succeeds if  and only if it is matched against a pair of values
and each sub-pattern succeeds on the corresponding projection of the value (the
union of the two substitutions is then returned). Both record patterns are similar to the product pattern with the specificity that in the open record pattern ``\textbf{..}'' matches all the fields that are not specified in the pattern.
An intersection pattern
$\patand{p_1}{p_2}$ succeeds if and only if both patterns succeed (the union of
the two substitutions is then returned). The union pattern $\pator{p_1}{p_2}$
first tries to match the pattern $p_1$ and if it fails it tries the pattern $p_2$.

For instance, the pattern $\patpair{\patand{\Int}{x}}{y}$ succeeds only if
the matched value is a pair of values $(v_1,v_2)$ in which $v_1$ is an integer
---in which case it returns the substitution $\{x/v_1, y/v_2\}$--- and
fails otherwise. Finally notice that the notation ``\texttt{$p$ as $x$}'' we used in the examples of the introduction, is syntactic sugar for $\patand{p}{x}$.

This informal semantics of matching (see~\cite{jacm08} for the formal definition)
explains the reasons of the restrictions on capture variables in Definition~\ref{patdef}:
in intersections, pairs, and records all patterns
must be matched and, thus, they have to assign distinct variables, while in union
patterns just one pattern will be matched, hence the same set of variables must be
assigned, whichever alternative is selected.

The strength of patterns is their connections with types and the fact
that the pattern matching operator can be typed \emph{exactly}. This is
entailed by the following theorems (both proved in~\cite{jacm08}):
\begin{theorem}[Accepted type~\cite{jacm08}]\label{th:at}
For every pattern $p$, the set of all values $v$ such that $v/p
\neq \Omega$ is a type. We call this set the \emph{accepted type} of $p$ and note it by $\ftype{p}$.
\end{theorem}
\noindent The fact that the exact set of values for which a  matching succeeds is a type is not obvious. \rechanges{It states that for every pattern $p$ there exists a syntactic type produced by the grammar in Definition~\ref{def:types} whose semantics is exactly the set of all and only values that are matched by $p$. The existence of this syntactic type, which we note $\ftype{p}$,} is of  utmost
importance for a precise typing of pattern matching. In particular, given a pattern $p$ and a type $t$ contained in (\emph{ie}, subtype of) $\ftype p$, it allows us to compute the \emph{exact} type of the capture variables of $p$ when it is matched against a value in $t$:
\begin{theorem}[Type environment~\cite{jacm08}] \label{typenv}There exists an algorithm that
for every pattern $p$, and $t\sleq\ftype{p}$ returns a type
environment $t/p\in \Vars(p)\to\Types$ such that
$(t/p)(x)=\{(v/p)(x)~|~v : t\}$.
\end{theorem}

\subsection{Filters}
\begin{definition}[filters]
  A \emph{filter} is a term generated by:
  \begin{displaymath}
  \begin{array}{lclr}
  \textbf{Filters}\hfill     f & ::= & e           & \textrm{(expression)}\\
      & |   & \fpat{p}{f} & \textrm{(pattern)}\\
      & |   & \fprod{f}{f} & \textrm{(product)}\\
      &  |  &  \forec {\ell\col f, \ldots, \ell\col f}\hspace*{-20mm}  & \text{(record)}\\
      & |   & \falt{f}{f}  & \textrm{(union)}\\
      & |   & \frec{X}{f}   & \textrm{(recursion)}\\
      & |   & X a          & \textrm{(recursive call)}\\
      & |   & \fseq{f}{f}    & \textrm{(composition)}\\
      & |   & o & \textrm{(declarative operators)}\\
\\
 \textbf{Operators} \hfill     o & ::=   & \GROUPBY{f} & \textrm{(filter grouping)}\\
      & |   & \ORDERBY{f} & \textrm{(filter ordering)}\\
\\
   \textbf{Arguments}\quad  a & ::= & x & \textrm{(variables)}\\
       & |   &  c   & \textrm{(constants)}\\
       & |   &  \fprod{a}{a}   & \textrm{(pairs)}\\
       & |   &  \patrec{\ell\col a,...,\ell\col a}   & \textrm{(record)}
  \end{array}
  \end{displaymath}
such that for every subterm of the form $\fseq{f}{g}$, no recursion variable is free in$f$.
\end{definition}
Filters are like transducers, that when applied to a value
return another value. However, unlike transducers they possess more 
``programming-oriented'' constructs, like the ability to test an input
and capture subterms, recompose an intermediary result from captured
values and a composition operator. We first describe informally the
semantics of each construct.

The expression filter $e$ always returns the
value corresponding to the evaluation of $e$ (and discards its
argument). The filter $\fpat{p}{f}$ applies the
filter $f$ to its argument in the environment obtained by matching the
argument against $p$ (provided that the matching does not fail).
This rather powerful feature allows a filter to perform two critical
actions: \textit{(i)} inspect an input with regular pattern-matching
before exploring it and \textit{(ii)} capture part of the input that
can be reused during the evaluation of the subfilter $f$.
 If the argument application of $f\!_i$ to $v_i$ returns $v'_i$ then the
application of the product filter $\fprod{f_1}{f_2}$ to an argument
$(v_1,v_2)$ returns $(v'_1,v'_2)$; otherwise, if any application fails or if the argument is
not a pair, it fails. The record filter is similar: it applies to each specified
field the corresponding filter and, as stressed by the ``\textbf{.\,.}'', leaves the other fields unchanged; it fails if any of the applications does, or if any of
the specified fields is absent, or if the argument is not a record. The filter $\falt{f_1}{f_2}$ returns the
application of $f_1$ to its argument or, if this fails, the application
of $f_2$. The semantics of a recursive filter is given by standard
unfolding of its definition in recursive calls. 
The only real restriction that we introduce for filters is that recursive calls can
be done only on arguments of a given form (\emph{ie}, on arguments that have
the form of values where variables may occur). This restriction
in practice amounts to forbid recursive calls on the result of another
recursively defined filter (all other cases can be easily
encoded). The reason of this restriction is technical, since it greatly simplifies
the analysis of Section~\ref{sec:sound} (which ensures the termination of type inference) without hampering expressiveness: filters are
Turing complete even with this restriction  (see Theorem~\ref{th:tc}).
Filters can be composed:
the filter $\fseq{f_1}{f_2}$ applies $f_2$ to the result of applying
$f_1$ to the argument and fails if any of the two does.
The condition that in every subterm of the form $\fseq{f}{g}$, $f$
does not contain free recursion variables is not strictly
necessary. Indeed, we could allow such terms. The point is that the
analysis for the termination of the typing would then reject all such
terms (apart from trivial ones in which the result of the recursive
call is not used in the composition). But since this restriction \emph{does
not} restrict the expressiveness of the calculus (Theorem~\ref{th:tc}
proves Turing completeness with this restriction), then the addition of this restriction is just a design (rather than a technical)
choice: we prefer to
forbid the programmer to write recursive calls on the left-hand side  of a
composition, than systematically reject all the programs that use them
in a non-trivial way.

\begin{figure*}[t!]
\small\vspace{-2mm}
\begin{minipage}{250pt}
  \begin{displaymath}
\begin{array}{lc{r}}
        \textbf{(expr)} & \fract{}{ \japp{\delta\textbf{;}\gamma}{e}{v}{r} } &\hspace*{-8mm}r =
        \texttt{eval}(\gamma , e ) \\
        & & \\
        \textbf{(prod)} &  \fract{
          \japp{\delta\textbf{;}\gamma}{f_1}{v_1}{r_1}~~~\japp{\delta\textbf{;}\gamma}{f_2}{v_2}{r_2}
        }{\japp{\delta\textbf{;}\gamma}{\fprod{f_1}{f_2}}{v_1,v_2}
          {(r_1,r_2)}} & \begin{array}{r}\text{if}~~ r_1 \neq \Omega\\\text{and}~ r_2
        \neq  \Omega\end{array}\\
        & & \\
     \textbf{(patt)} &  \fract{\japp{\delta\textbf{;}\gamma ~,~
          v/p}{f}{v}{r}}{ \japp{\delta\textbf{;}\gamma}{(\fpat{p}{f})}{v}{r}} &
    \textrm{if}~ v/p \neq \Omega \\
      & & \\
      \textbf{(comp)} & \fract{
          \japp{\delta\textbf{;}\gamma}{f_1}{v}{r_1}~~~\japp{\delta\textbf{;}\gamma}{f_2}{r_1}{r_2}
      }{
        \japp{\delta\textbf{;}\gamma}{(\fseq{f_1}{f_2})}{v}{
         r_2}} & \textrm{if}~ r_1 \neq \Omega  \\
    \end{array}
  \end{displaymath}
\end{minipage}\qquad
\begin{minipage}{250pt}
  \begin{displaymath}
    \begin{array}{lcr}
      \textbf{(union1)} &    \fract{\japp{\delta\textbf{;}\gamma}{f_1}{v}{ r_1}}{
        \japp{\delta\textbf{;}\gamma}{(\falt{f_1}{f_2})}{v}{r_1}} & \textrm{if}~ r_1 \neq \Omega  \\
      & & \\
      \textbf{(union2)} & \fract{
        \japp{\delta\textbf{;}\gamma}{f_1}{v}{\Omega}~~~\japp{\delta\textbf{;}\gamma}{f_2}{v}{
          r_2}}{\japp{\delta\textbf{;}\gamma}{(\falt{f_1}{f_2})}{v}{r_2} }& \\
      & & \\
      \textbf{(rec)} & \fract{
       \japp{\delta,(X\mapsto f)\textbf{;}\gamma}{f}{v}{r}}{\japp{\delta\textbf{;}\gamma}{(\frec{X}{f})}{v}{r}}  & \\
      & & \\
      \textbf{(rec-call)} & \fract{
       \japp{\delta\textbf{;}\gamma}{(\delta(X))}{a}{r}}{\japp{\delta\textbf{;}\gamma}{(Xa)}{v}{r}}  & \\
      & &\\
      \textbf{(error)} & \fract{}{
        \japp{\delta\textbf{;}\gamma}{f}{a}{\Omega} }
        & \makebox[3mm][r]{if no other rule applies}\\
\end{array}
\end{displaymath}
\end{minipage}
\begin{displaymath}
\begin{array}{lcr}
        \textbf{(recd)}\hspace*{6mm} & \fract{
          \japp{\delta\textbf{;}\gamma}{f_1}{v_1}{r_1}\qquad\cdots\qquad\japp{\delta\textbf{;}\gamma}{f_n}{v_n}{r_n}
        }
        {\japp{\delta\textbf{;}\gamma}{\forec{\ell_1\col f_1,...,\ell_n\col f_n}}%
          {\patrec{\ell_1\col v_1,...,\ell_n\col v_n,...,\ell_{n+k}\col v_{n+k}}}
          {\patrec{\ell_1\col r_1,...,\ell_n\col r_n,...,\ell_{n+k}\col v_{n+k}}}} &
          \text{if}~ \forall i,r_i \not= \Omega

\end{array}
\end{displaymath}
\caption{Dynamic semantics of filters}
\label{fig:sem}\vspace{-3mm}
\end{figure*}

Finally, we singled out some specific filters (specifically, we chose
\GROUPBY{} and \ORDERBY) whose semantics is generally specified in a declarative
rather than operational way. These do not bring any expressive power to
the calculus (the proof of Turing completeness, Theorem~\ref{th:tc},
does not use these declarative operators) and actually they can be encoded
by the remaining filters,
but it is interesting to single
them out because they yield either simpler encodings or more precise typing.

\section{Semantics}
\label{label:dynamic}
The operational semantics of our calculus is given by the
reduction semantics for filter application and for the record operations. 
Since the former is the only novelty of our work, we save space and omit the latter, which are standard anyhow.

\ifLONGVERSION
\subsection{Big step semantics}
\else
\medskip

\noindent
\fi
We define a big step operational semantics for filters.
The definition is given by the inference rules in Figure~\ref{fig:sem} for judgements
of the form $\japp{\delta\textbf{;}\gamma}{f}{a}{r}$ and describes
how the evaluation of the application of filter
$f$ to an argument $a$ in an environment $\gamma$ yields an object $r$ where $r$ is
either a value or $\Omega$. The latter is a special value which
represents a runtime error: it is raised by the rule \textbf{(error)} 
either because a filter did not match the form of its argument
(\emph{eg}, the argument of a filter product was not a pair) or because some pattern matching failed
(\emph{ie}, the side condition of \textbf{(patt)} did not
hold). Notice that the argument $a$ of a filter is always a value $v$ unless the filter is the unfolding of a recursive call, in which case variables may occurr in it (\emph{cf.} rule \textbf{rec-call}). Environment $\delta$ is used to store the
body of recursive definitions.

The semantics of filters is quite straightforward and inspired by the
semantics of patterns.
The \emph{expression} filter discards its input and
evaluates (rather, asks the host language to evaluate) the expression
$e$ in the current environment (\textbf{expr}). It can be thought of
as the right-hand side of a branch in a \texttt{match\_with} construct.

The \emph{product} filter  expects a pair as input, applies its
sub-filters component-wise and returns the pair of the results
(\textbf{prod}). This
filter is used in particular to express sequence mapping, as the first
component $f_1$ transforms the element of the list and $f_2$ is applied
to the tail. In practice it is often the case that $f_2$ is a recursive call
that iterates on arbitrary lists and stops
when the input is \NIL. If the input is not a pair, then the filter fails
(rule (\textbf{error}) applies).

\rechanges{The \emph{record} filter expects as input a record value with \emph{at least} the same fields as those specified by the filter. It applies each sub-filter to the value in the corresponding field leaving the contents of other fields  unchanged (\textbf{recd}). If the argument is not a record value or it does not contain all the fields specified by the record filter, or if the application of any subfilter fails, then the whole application of the record filter fails.} 

The \emph{pattern} filter
matches its input value $v$ against the pattern $p$. If the matching fails 
so the filter does, otherwise it evaluates its
sub-filter in the environment augmented by the substitution $v/p$
(\textbf{patt}).

The \emph{alternative} filter follows a standard first-match policy:
If the filter $f_1$ succeeds, then its result is returned
(\textbf{union-1}). If $f_1$ fails, then $f_2$ is evaluated
against the input value (\textbf{union-2}). This filter is particularly useful to write
the alternative of two (or more) \emph{pattern} filters, making it possible to
conditionally continue a computation based on the shape of the input.

The \emph{composition} allows us to pass the result of $f_1$
as input to $f_2$. The composition filter is of paramount
importance. Indeed, without it, our only way to iterate (deconstruct)
an input value is to use a \emph{product} filter, which always rebuilds
a pair as result.

Finally, a \emph{recursive} filter is evaluated by recording its body in
$\delta$ and evaluating it (\textbf{rec}),  while for a \emph{recursive call}
we replace the recursion variable by its definition (\textbf{rec-call}).

This concludes the presentation of the semantics of non-declarative filters (\emph{ie}, without groupby and orderby). These form a Turing complete formalism (full proof in \appcrazy~\ref{label:tc}):

\begin{theorem}[Turing completeness]\label{th:tc}
The language formed by constants, variables, pairs, equality, and applications of non-declarative filters is Turing complete.
\end{theorem}
\paragraph{\it Proof \rm(sketch).} We can encode untyped call-by-value $\lambda$-calculus by first applying continuation passing style (CPS) transformations and encoding CPS term reduction rules and substitutions via filters. Thanks to CPS we eschew the restrictions on composition.\hfill\qed

\ifLONGVERSION
\subsection{Semantics of declarative filters}
\else
\medskip

\noindent
\fi
To conclude the presentation of the semantics we have to define the semantics of groupby and orderby. We prefer to give the semantics in a declarative form rather than operationally in order not to tie it to a particular order (of keys or of the execution):

\paragraph{Groupby:}
$\GROUPBY{f}$ applied to a sequence \LIST{$v_1$ \ldots{} $v_m$}
reduces to a sequence \LIST{ $(k_1,l_1)$ \ldots
  $(k_n, l_n)$ } such that:
\begin{enum}
\item $\forall i, ~1\leq i \leq m, ~\exists j,~1\leq j \leq n,
  ~\text{s.t. } k_j = f(v_i)$
\item $\forall j, ~1\leq j \leq n, ~\exists i,~1\leq i \leq m,
  ~\text{s.t. } k_j = f(v_i)$
\item $\forall j, ~1\leq j \leq n$, $l_j$ is a sequence: \LIST{
    $v^j_1$ \ldots{} $v^j_{n_j}$}
\item $\forall j, ~1\leq j \leq n$, $\forall k, ~1\leq k \leq n_j$,
  $f(v^j_k)=k_j$
\item $k_i=k_j\Rightarrow i=j$
\item $l_1$, \ldots, $l_n$ is a partition of \LIST{$v_1$ \ldots{} $v_m$ }
\end{enum}

\paragraph{Orderby:} 
$\ORDERBY{f}$ applied to $\LIST{v_1 \ldots{} v_n}$ reduces to $\LIST{v'_1 \ldots{} v'_n}$ such
that:
\begin{enum}
\item  $\LIST{v'_1 \ldots{} v'_n}$ is a permutation of  $\LIST{v_1 \ldots{} v_n}$,
\item $\forall i,j ~\text{s.t.} ~1\leq i \leq j \leq n, f(v_i) \leq f(v_j)$
\end{enum}
Since the semantics of both operators is deeply connected to a notion
of equality and order on values of the host language, we give them as
``built-in'' operations. However we will illustrate how our type
algebra allows us to provide very precise typing rules, specialized
for their particular semantics. It is also possible to encode
co-grouping (or $\GROUPBY{}$ on several input sequences) with a
combination of $\GROUPBY{}$ and filters
(\emph{cf}.\ \appcrazy~\ref{sec:cogroupingencoding}). 

\ifLONGVERSION
\subsection{Syntactic sugar}
Now that we have formally defined the semantics of filters we can use
them to introduce some syntactic sugar.

\paragraph{Expressions.} 
\else
\paragraph{Syntactic sugar.}
\fi
\label{sec:sugar}
The reader may have noticed that the productions for expressions (Definition~\ref{def:exp}) do not define any destructor (\emph{eg}, projections, label selection, \ldots), just
constructors. The reason is that destructors, as well as other common expressions, can be encoded by
filter applications:
\begin{tabbing}
$\ITE{e}{e_1}{e_2}$ ~ \= ~$\eqdef$~ \= ~\kill
$e.\ell$  \> $\eqdef$ \> $(\fpat{\patorec{\ell\col x}}{x})e$\\[.1mm]
$\fst(e)$ \> $\eqdef$ \> $(\fpat{\patpair{x}{\tany}}{x})e$\\[.1mm]
$\snd(e)$ \> $\eqdef$ \> $(\fpat{\patpair{\tany}{x}}{x})e$\\[.1mm]
$\LET{p}{e_1}{e_2}$ \> $\eqdef$ \> $(\fpat{p}{e_2})e_1$\\ [.1mm]
$\ITE{e}{e_1}{e_2}$ \> $\eqdef$ \> $(\pator{\fpat{\true}{e_1}}{\fpat{\false}{e_2}})e$\\[.6mm]
$\MATCH{e}{\fpat{p_1}{e_1}\texttt{|\ldots|}\fpat{p_n}{e_n}}$ \\[-.5mm]
 \>$\eqdef$\> $(\pator{\fpat{p_1}{e_1}}{\pator{\ldots}{\fpat{p_n}{e_n}}})e$
\ignore{
$\MATCH{e}{}$\\
\hspace*{0.7cm}$p_1 \texttt{ -> } e_1$\\
\hspace*{0.4cm}$\texttt{\ \ }$\vdots\\
\hspace*{0.4cm}$\texttt{| }p_n\texttt{ -> } e_n$ \> $\eqdef$ \> $(\pator{\fpat{p_1}{e_1}}{\pator{\ldots}{\fpat{p_n}{e_n}}})e$
}
\end{tabbing}
These are just a possible choice, but others are possible. For instance in Jaql dot selection is overloaded: when \texttt{\_.$\ell$} is applied to a record, Jaql returns the content of its $\ell$ field; if the field is absent or the argument is \Null{}, then Jaql returns \Null{} and fails if the argument is not a record; when applied to a list (`array' in Jaql terminology) it recursively applies to all the elements of the list. So Jaql's ``$\_.\ell$'' is precisely defined as \\[1.5mm]
\centerline{$\frec{X}{(\fpat{\patorec{\ell{\col} x}}{x}\texttt{\;\pmb|\;}\fpat{(\patrec{\textbf{..}}\pmb{|}\Null)}{\Null}\texttt{\;\pmb|\;}{\fpat{\patpair{h}{t}}{\fprod{Xh}{Xt\,}}})}$}
\\[1.5mm]
\ifCONSTANTPATTERN
or, equivalently, $\mu X.(\;\fpat{\patorec{\ell\col x}}{x}\texttt{\pmb|}\fpat{\patrec{\textbf{..}}}{\Null}\texttt{\pmb|}{\fprod{X}{X}}\;)$ where $X$ is syntactic sugar for  $\fpat{x}{Xx}$.

\paragraph{Types and patterns.}
Syntactic sugar can also be defined for types, in particular to denote optional record fields whose name, following Jaql notation, is suffixed by a question mark:
\begin{displaymath}
\begin{array}{r@{\quad\eqdef\quad}l}
\patrec{\ell\whycol t} & \patrec{\ell\col \pator{t}{\bot}}\\
\patrec{\ell\col  x\texttt{ else } c , m\col p} & \patand{(\pator{\patrec{\ell\col x}}{(x:=c)})}{\patrec{m\col p}}\\
\end{array}
\end{displaymath}
The first encoding is used to specify the type of optional fields: if
an expression has type \texttt{\{name\col string , age\why\col\Int\}} then it will
return a record with a name field and an optional age field that is either of type \Int, or is undefined as specified by the constant $\bot$ (see Appendix~\ref{tsum} for the formal definition of $\bot$).
present, will have type \Int. The second syntactic sugar is to work with such optional fields. For instance
$$\LET{\patrec{\texttt{name} \col  n , \texttt{age} \col  a \texttt{ else }99}}{\textit{person}}{e}$$
executes $e$ in an environment where $n$ is
bound to the name of \textit{person} and $a$ is bound to its age, or to
$99$ if the age field is absent.
\fi
Besides the syntactic sugar above, in the next section we will use
$t_1+t_2$ to denote the record type formed by all field types in $t_2$
and all the field types in $t_1$ whose label is not already present in
$t_2$. Similarly $t\setminus\ell$ will denote the record types formed
by all field types in $t$ apart from the one labelled by $\ell$, if
present. Finally, we will also use for expressions, types, and patterns the syntactic sugar for lists
used in the introduction. So, for instance, $\LIST{p_1\ p_2\ ...\ p_n}$ is matched by lists of $n$ elements provided that their $i$-th element matches $p_i$.

\section{Type inference}
\label{label:typing}
\ifLONGVERSION
In this section we describe a type inference algorithm for our expressions.

\subsection{Typing of simple and foreign expressions}
Variables, constants, and pairs are straightforwardly typed by
\\[2mm]
\hspace*{-3mm}
\begin{tabular}{l}
  {\small\sc [Vars]}\\
  \AxiomC{\BLANK}
  \UnaryInfC{$\Gamma\vdash x:\Gamma(x)$}
\DisplayProof
\end{tabular}
\quad
\begin{tabular}{l}
  {\small\sc [Constant]}\\
  \AxiomC{\BLANK}
  \UnaryInfC{$\Gamma\vdash c:c$}
\DisplayProof
\end{tabular}
\quad
\begin{tabular}{l}
  {\small\sc [Prod]}\\
  \AxiomC{$\Gamma\vdash e_1:t_1~~
           \Gamma\vdash e_2:t_2$}
  \UnaryInfC{$\Gamma\vdash (e_1,e_2):\patpair{t_1}{t_2}$}
  \DisplayProof
\end{tabular}\\[2mm]
where $\Gamma$ denotes a typing environment that is a function from
expression variables to types and $c$ denotes both a constant and the singleton type containing the constant.
Expressions of the host language are typed by the \texttt{type} function
which given a type environment and a foreign expression returns the
type of the expression, and that we suppose to be given for each host
language.\footnote{The function \texttt{type} must be able to handle
  type environments with types of our system. It can do it either by
  subsuming variable with specific types to the types of the host
  language (\emph{eg}, if the host language does not support singleton
  types then the singleton type \texttt{3} will be subsumed to \Int)
  or by typing foreign expressions by using our types.}
\\[2mm]
\hspace*{-2mm}\begin{tabular}{l}
  {\small\sc [Foreign]}\\
  \AxiomC{$\Gamma\vdash e_1:t_1\quad\cdots{}\quad\Gamma\vdash
    e_n:t_n$} \UnaryInfC{$\Gamma\vdash \textsf{op}(e_1{,}...,e_n):
    \texttt{type}((\Gamma,x_1{:}t_1,...,x_n{:}t_n),
    \textsf{op}(x_1{,}...,x_n))$} \DisplayProof
\end{tabular}\\[2mm]
Since the various $e_i$ can contain filter applications, thus unknown to the host language's type system, the rule  {\sc [Foreign]} swaps them with variables having the same type.

Notice that our expressions, whereas they include \emph{filter} applications,
do not include applications of expressions to expressions. Therefore if
the host language provides function definitions, then the applications of
the host language must be dealt as foreign expressions, as well (\emph{cf.} the expression operator \textsf{apply} in Section~\ref{sec:types}).
\else
\rechanges{The type inference system assign types to \emph{expressions}}. Variables, constants, and pairs are typed by standard rules, while we suppose that the typing of foreign expressions is provided by the host language.\footnote{\rechanges{Notice that our expressions, whereas they include \emph{filter} applications,
do not include applications of expressions to expressions. Therefore if
the host language provides function definitions, then the applications of
the host language must be dealt as foreign expressions, as well (\emph{cf.} \textsf{apply} in \S\ref{sec:types}).}} So we omit the corresponding rules (they are in \appcrazy~\ref{sec:foreign}). The core of our type system starts with records.
\fi

\subsection{Typing of records}\label{se:tyrcd}

The typing of records is novel and challenging because record
expressions may contain string \emph{expressions} in label position, such as
in $\patrec{e_1\col e_2}$, while in all type systems for record we are
aware of, labels are never computed. It is difficult to give a type to
$\patrec{e_1\col e_2}$ since, in general, we do not statically know
the value that $e_1$ will return, and which is required to form a
record type. All we can (and must) ask is that this value will be a
string. To type a record expression $\patrec{e_1\col e_2}$, thus, we distinguish two cases according to whether the type $t_1$
of $e_1$ is finite (\emph{ie}, it contains only finitely many values,
such as, say, \texttt{Bool}) or not. If a type is finite, (finiteness of
regular types seen as tree automata can be decided
in polynomial time \cite{tata07}), then it is possible to
write it as a finite union of values (actually, of singleton types). So
consider again $\patrec{e_1\col e_2}$ and let $t_1$ be the type of $e_1$
and $t_2$ the type of $e_2$. First, $t_1$ must be a subtype of
\String{} (since record labels are strings). So if $t_1$ is
finite it can be expressed as $\pator{\ell_1}{\pator{\cdots}{\ell_n}}$
which means that $e_1$ will return the string $\ell_i$ for some
$i\in[1..n]$. Therefore $\patrec{e_1\col e_2}$ will have type
$\patrec{\ell_i:t_2}$ for some $i\in[1..n]$ and, thus, the union of
all these types, as expressed by the rule \textsc{[Rcd-Fin]} below. If $t_1$
is infinite instead, then all we can say is that it will be a record with
some (unknown) labels, as expressed by rule
\textsc{[Rcd-Inf]}.\looseness -1
\begin{center}
\begin{tabular}{l}
  {\small\sc [Rcd-Fin]}\\
  \AxiomC{$\Gamma\vdash e:\pator{\ell_1}{\pator{\cdots}{\ell_n}}$}
  \AxiomC{$\Gamma\vdash e':t$}
  \BinaryInfC{$\Gamma\vdash\{e\col e'\}:\pator{\patrec{\ell_1\col t}}{\pator{\cdots}{\patrec{\ell_n\col t}}}$}
  \DisplayProof
\end{tabular}
\end{center}

\begin{center}
\hspace*{18mm}
\begin{tabular}{l}
  {\small\sc [Rcd-Inf]}\\
  \AxiomC{$\Gamma\vdash e:t$}
  \RightLabel{\small$\begin{array}{c}
                  t\leq\String\\
                  t \text{ is infinite}
                \end{array}
  $}
  \AxiomC{$\Gamma\vdash e':t'$}
  \BinaryInfC{$\Gamma\vdash\{e\col e'\}:\patrec{\textbf{..}}$}
  \DisplayProof
\end{tabular}
\end{center}

\begin{center}
\begin{tabular}{l}
  {\small\sc [Rcd-Mul]}\\
  \AxiomC{$\Gamma\vdash\{e_1\col e'_1\}:t_1\quad\cdots\quad\Gamma\vdash\{e_n\col e'_n\}:t_n$}
\UnaryInfC{$\Gamma\vdash\{e_1\col e'_1,\ldots,e_n\col e'_n\}:t_1+\cdots+t_n$}
\DisplayProof
\end{tabular}
\end{center}

\hspace*{-7mm}
\begin{tabular}{l}
  {\small\sc [Rcd-Conc]}\\
  \AxiomC{$\Gamma\vdash e_1:t_1$\quad $\Gamma\vdash e_2:t_2$}
  \RightLabel{\small\hspace*{-2mm}$\begin{array}{l}t_1\leq\patrec{\textbf{..}}\\t_2\leq\patrec{\textbf{..}}\end{array}$}
  \UnaryInfC{$\Gamma\vdash e_1+e_2:t_1+t_2$}
  \DisplayProof
\end{tabular}
\hspace*{-2mm}
\begin{tabular}{l}
  {\small\sc [Rcd-Del]}  \\
  \AxiomC{$\Gamma\vdash e:t$}
  \RightLabel{\small$t\leq\patrec{\textbf{..}}$}
  \UnaryInfC{$\Gamma\vdash e\setminus\ell:t\setminus\ell$}
  \DisplayProof
\end{tabular}
\smallskip\\{\color{orange}
Records with multiple fields are handled by the rule
\textsc{[Rcd-Mul]} which ``merges'' the result of typing single fields
by using the type operator $+$ as defined in \cduce~\cite{BCF03,alainthesis}, which is a right-priority record
concatenation defined to take into account undefined and unknown fields: for instance,
$\patrec{a\col\Int,b\col\Int}+\patrec{a\texttt{?}\col\Bool}= \patrec{a\col\pator{\Int}{\Bool},b\col\Int}$; unknown fields in the right-hand side may override known fields of the left-hand side, which is why, for instance, we have
$\patrec{a\col\Int,b\col\Bool}+\patorec{b\col\Int}=\patorec{b\col\Int}$; likewise, for every record type $t$  (\emph{ie}, for every $t$ subtype
of $\patrec{\textbf{..}}$) we have $t+\patrec{\textbf{..}}=\patrec{\textbf{..}}$.}
 Finally, \textsc{[Rcd-Conc]} and
\textsc{[Rcd-Del]} deal with record concatenation and field deletion,
respectively, in a straightforward way: the only constraint is that all
expressions must have a record type (\emph{ie}, the constraints of the form $...\leq\patrec{\textbf{..}}$).
See \appcrazy~\ref{tsum}
for formal definitions of all these type operators. 

Notice that these rules do not ensure that a record will not have two
fields with the same label, which is a run-time error. Detecting such an error
needs sophisticated type systems (\emph{eg}, dependent types) beyond the scope of this work. This is why in the rule \textsc{[Rcd-Mul]} we used
type operator ``$+$'' which, in case of multiple occurring labels, since records are
unordered, corresponds to randomly choosing one of the types bound
to these labels: if such a field is selected, it would yield a run-time
error, so its typing can be ambiguous. We can fine tune the rule
\textsc{[Rcd-Mul]} so that when all the $t_i$ are finite unions of
record types, then we require to have pairwise disjoint sets of
labels; but since the problem would still persist for infinite types
we prefer to retain the current, simpler formulation.

\subsection{Typing of filter application}\label{sec:tfp}
\rechanges{Filters are not first-class: they can be applied but not passed around or computed. Therefore we do not assign types to filters but, as for any other expression, we assign types to \emph{filter applications}.} The typing rule for filter application
\begin{center}
\hspace*{12mm}
\begin{tabular}{l}
  {\small\sc [Filter-App]}\\
  \AxiomC{$\Gamma\vdash e:t$}
  \AxiomC{\ftypej{$\Gamma\sep\emptyset\sep \emptyset}{f}{t}{s}$}
  \BinaryInfC{$\Gamma\vdash fe:s$}
  \DisplayProof
\end{tabular}
\end{center}
relies on an auxiliary deduction system for judgments of the form
$\ftypej{\Gamma\sep\Delta\sep M}{f}{t}{s}$ that states that if in the
environments $\Gamma,\Delta, M$ (explained later on) we apply the
filter $f$ to a value of type $t$, then it will return a result of
type $s$.

To define this auxiliary deduction system, which is the core of our type analysis, we first need to define $\ftype{f}$, the type accepted by a filter $f$. Intuitively, this type
gives a necessary condition on the input for the filter not to fail:
\begin{definition}[Accepted type]
Given a filter $f$, the \emph{accepted type} of $f$, written $\ftype{f}$ is
the set of values defined by:
\begin{displaymath}
\hspace*{-1mm}
\begin{array}{l}
\begin{array}{l@{~~=~~}l}
  \ftype{e} & \tany\\
  \ftype{\fpat{p}{f}} & \patand{\ftype{p}}{~\ftype{f}} \\
  \ftype{\falt{f_1}{f_2}} & \pator{\ftype{f_1}}{\ftype{f_2}}\\
  \ftype{\fprod{f_1}{f_2}} & \patpair{\ftype{f_1}}{\ftype{f_2}}\\
  \ftype{\fseq{f_1}{f_2}} & \ftype{f_1}
\end{array}\qquad
\begin{array}{l@{~~=~~}l}
  \ftype{X a} & \tany\\
  \ftype{\frec{X}{f}} & \ftype{f}\\
  \ftype{\GROUPBY{f}} & \LIST{\tany\ks} \\
  \ftype{\ORDERBY{f}} & \LIST{\tany\ks}
\end{array}\\
\begin{array}{l@{~~=~~}l}
  \ftype{\forec{\ell_1{\col} f_1{,}..,\ell_n{\col} f_n}} & \patorec{\ell_1{\col}\ftype{f_1}{,}..,\ell_n{\col}\ftype{f_2}}\\
\end{array}
\end{array}
\end{displaymath}
\end{definition}\noindent
It is easy to show that an argument included in the accepted type is a necessary (but not sufficient, because of the cases for composition and recursion) condition
for the evaluation of a filter not to fail:
\begin{lemma}
\label{lem:omega}
Let $f$ be a filter and $v$ be a value such that $v\notin\ftype{f}$.
For every $\gamma$, $\delta$,
if $\japp{\delta\textbf{;}\gamma}{f}{v}{r}$, then $r\equiv\Omega$.
\end{lemma}
\ifLONGVERSION
The proof is a straightforward induction on the structure of the
derivation, and is detailed in Appendix~\ref{prf:subred}.
\fi
The last two auxiliary definitions we need are related to product
and record types. In the presence of unions, the most
general form for a product type is a finite union of products (since intersections distribute on products). For
instance consider the type\\
\centerline{$\pator{\patpair{\Int}{\Int}}{\patpair{\String}{\String}}$}
This type denotes the set of pairs for which either both projections
are $\Int$ or both projections are $\String$. A type such as\\
\centerline{$\patpair{\pator{\Int}{\String}}{\pator{\Int}{\String}}$}
is less precise, since it also allows pairs whose first projection is an
$\Int$ and second projection is a $\String$ and \emph{vice
  versa}. We see that it is necessary to manipulate finite
unions of products (and similarly for records), and therefore, we
introduce the following notations:
\begin{lemma}[Product decomposition]
Let $t\in\Types$ such that $t\leq \patpair{\ANY}{\ANY}$. A
\emph{product decomposition} of $t$, denoted by $\pdec(t)$ is a set of
types:\\[-1mm]
\centerline{$
\pdec(t) = \{ \patpair{t^1_1}{t^1_2}, \ldots, \patpair{t^n_1}{t^n_2}\}
$}\\[2mm]
such that $t = \bigvee_{t_i\in\pdec(t)}t_i$. For a given
product decomposition, we say that $n$ is the \emph{rank} of $t$,
noted $\prank(t)$, and use the notation $\pdec_i^j(t)$ for the type
$t_i^j$.
\end{lemma}
There exist several suitable decompositions whose details are out of
the scope of this paper. We refer the interested reader to
\cite{alainthesis} and \cite{kimthesis} for practical algorithms
that compute such decompositions for any subtype of
$\patpair{\ANY}{\ANY}$ or of $\patrec{\pmb{..}}$.
These notions of decomposition, rank and projection can be generalized
to records:
\begin{lemma}[Record decomposition]\label{le:rcddec}
Let $t\in\Types$ such that $t\leq \patrec{..}$. A
\emph{record decomposition} of $t$, denoted by $\rdec(t)$ is a finite set of
types $\rdec(t){=}\{r_1,\ldots,r_n\}$ where each $r_i$ is either of the form $\patrec{\ell^i_1\col t^i_1,\ldots, \ell^i_{n_i}\col
  t^i_{n_i}}$ or of the form  $\patorec{\ell^i_1\col t^i_1,\ldots, \ell^i_{n_i}\col
  t^i_{n_i}}$ and such that $t = \bigvee_{r_i\in\rdec(t)}r_i$.
For a given
record decomposition, we say that $n$ is the \emph{rank} of $t$, noted $\prank(t)$, and use the notation $\rdec_{\ell}^j(t)$ for the
type of label $\ell$ in the $j^{\text{th}}$ component of $\rdec(t)$.
\end{lemma}
In our calculus we have three different sets of variables. The set
\Vars of term variables, ranged over by $x, y, ...$, introduced in
patterns and used in expressions and in arguments of calls of recursive
filters. The set \RVars of term recursion variables, ranged over by
$X, Y, ...$ and that are used to define recursive filters. The set
\TVars of type recursion variables, ranged over by $T, U, ...$ used to
define recursive types. In order to use them we need to define three
different environments: $\Gamma:\Vars\to\Types$ denoting \emph{type
  environments} that associate term variables with their types;
$\Delta:\RVars\to\Filters$ denoting \emph{definition environments}
that associate each filter recursion variable with the body of its
definition; $M:\RVars\times\Types\to\TVars$ denoting \emph{memoization
  environments} which record that the call of a given recursive
filter on a given type yielded the introduction of a fresh recursion
type variable. Our typing rules, thus work on judgments of the form
$\Gamma\sep\Delta\sep M\vdash f(t):t'$ stating that applying $f$ to an
expression of type $t$ in the environments $\Gamma,\ \Delta,\ M$
yields a result of type $t'$. This judgment can be derived with the
set of rules given in Figure~\ref{reffig:typfilter}.

\begin{figure*}[ht!]
\small\vspace{-2mm}
\begin{tabular}{l}
{\small\sc [Fil-Expr]}\\
\AxiomC{\BLANK}
\UnaryInfC{$\ftypej{\Gamma\sep\Delta\sep M}{e}{t}{\tyof{\Gamma,e}}$}
\DisplayProof
\end{tabular}
\quad
\begin{tabular}{l}
{\small\sc[Fil-Pat]}\\
\AxiomC{$\ftypej{\Gamma\cup t/p\sep\Delta\sep M}{f}{t}{s}$}
\RightLabel{\scriptsize$t\leq\patand{\ftype{p}}{\ftype{f}}$}
\UnaryInfC{$\ftypej{\Gamma\sep\Delta\sep M}{\fpat{p}{f}}{t}{s}$}
\DisplayProof
\end{tabular}
\begin{tabular}{l}
{\small\sc[Fil-Prod]}\\
\AxiomC{$i{=} 1..\prank(t),\ 
    j {=} 1,2\qquad\ftypej{\Gamma\sep\Delta\sep M}{f_j}{\pdec_j^i(t)}{s_j^i}$}
\UnaryInfC{$\ftypej{\Gamma\sep\Delta\sep M}{\fprod{f_1}{f_2}}{t}{\displaystyle\bigvee_{i=1..\prank(t)}\patpair{s_1^i}{s_2^i}}$}
\DisplayProof
\end{tabular}
\\
\begin{tabular}{l}
{\small\sc [Fil-Rec]}\\
\AxiomC{${\footnotesize i{=} 1..\prank(t),\ j{=} 1..m}\qquad\ftypej{\Gamma\sep\Delta\sep M}{f_j}{\rdec_{\ell_j}^i(t)}{s_j^i}$}
\UnaryInfC{$\ftypej{\Gamma\sep\Delta\sep
    M}{\forec{\ell_1\!\col\!f_1,\ldots,\ell_m\!\col\!f_m}}{t}{\displaystyle\bigvee_{i=1..\prank(t)}\!\!\!\!\patorec{\ell_1\!\col\!s_1^i,\ldots,\ell_m\!\col\! s_m^i}}
$}
\DisplayProof
\end{tabular}
\qquad
\begin{tabular}{l}
{\small\sc [Fil-Union]}\\
\AxiomC{{\small $i = 1,2$
}\quad$\ftypej{\Gamma\sep\Delta\sep M}{f_i}{t_i}{s_i}$}
\RightLabel{\scriptsize $\begin{array}{l}
t\leq \pator{\ftype{f_1}}{\ftype{f_2}}\\
t_1 = \patand{t}{\ftype{f_1}}\\
t_2 = \patand{t}{\lnot{\ftype{f_1}}}
\end{array}$}
\UnaryInfC{$\ftypej{\Gamma\sep\Delta\sep M}{\falt{f_1}{f_2}}{t}{\displaystyle\bigvee_{\{i|s_i\neq\tempty\}}\!\!\!\!s_i}$}
\DisplayProof
\end{tabular}
\\
\begin{tabular}{l}
{\small\sc[Fil-Comp]}\\
\AxiomC{$\ftypej{\Gamma\sep\Delta\sep M}{f_1}{t}{s}$}
\AxiomC{$\ftypej{\Gamma\sep\Delta\sep M}{f_2}{s}{s'}$}
\BinaryInfC{$\ftypej{\Gamma\sep\Delta\sep M}{\fseq{f_1}{f_2}}{t}{s'}$}
\DisplayProof
\end{tabular}
\qquad\qquad
\begin{tabular}{l}
{\small\sc [Fil-Fix]}\\
  \AxiomC{$\ftypej{\Gamma\sep\Delta,(X\mapsto f)\sep M,((X,t)\mapsto T)}{f}{t}{s}$}
  \RightLabel{\scriptsize$T$~fresh}
  \UnaryInfC{$\ftypej{\Gamma\sep\Delta\sep M}{(\frec{X}{f})}{t}{\frec{T}{s}}$}
  \DisplayProof
\end{tabular}\\[2mm]
\\
\begin{tabular}{l}
{\small\sc [Fil-Call-New]}\\
  \AxiomC{$\ftypej{\Gamma\sep\Delta\sep M,((X,t)\mapsto T)}{\Delta(X)}{t}{t'}$}
  \RightLabel{\scriptsize$
                 \begin{array}{c}
                  t=\tyof{\Gamma}{a}\\
                  (X,t)\not\in\dom(M)\\
                  T \text{~fresh}
                \end{array}
                $}
  \UnaryInfC{$\ftypej{\Gamma\sep\Delta\sep M}{(Xa)}{s}{\frec{T}{t'}}$}
  \DisplayProof
\end{tabular}
\qquad\qquad\quad
\begin{tabular}{l}
{\small\sc [Fil-Call-Mem]}\\
  \AxiomC{\BLANK}
  \RightLabel{\scriptsize$
                  \begin{array}{c}
                  t=\tyof{\Gamma}{a}\\
                  (X,t)\in\dom(M)
                \end{array}
                $}
  \UnaryInfC{$\ftypej{\Gamma\sep\Delta\sep M}{(Xa)}{s}{M(X,t)}$}
  \DisplayProof
\end{tabular}\\[2mm]
\\
\begin{tabular}{l}{\small\sc [Fil-OrdBy]}\\
\AxiomC{$\ftypej{\forall t_i\in\ITEM{t}\quad \Gamma\sep\Delta\sep M}{f}{t_i}{s_i}$}
\RightLabel{\scriptsize$\begin{array}{l}t\leq\LIST{\tany\ks}\\\bigvee_i\!\!s_i\text{ is ordered}\end{array}$}
\UnaryInfC{$\ftypej{\Gamma\sep\Delta\sep M}{(\ORDERBY{f})}{t}{\TORDERBY{(t)}}$}
\DisplayProof
\end{tabular}
\qquad\qquad
\begin{tabular}{l}{\small\sc [Fil-GrpBy]}\\
\AxiomC{$\ftypej{\forall t_i\in\ITEM{t}\quad \Gamma\sep\Delta\sep M}{f}{t_i}{s_i}$}
  \RightLabel{\scriptsize$t\leq\LIST{\tany\ks}$}
\UnaryInfC{$\ftypej{\Gamma\sep\Delta\sep M}{(\GROUPBY{f})}{t}{\LIST{\fprod{(\bigvee_i s_i)}{\TORDERBY{(t)}}\ks}}$}
\DisplayProof
\end{tabular}
\caption{Type inference algorithm for filter application}
\label{reffig:typfilter}\vspace{-3mm}
\end{figure*}

These rules are straightforward, when put side by side with the
dynamic semantics of filters, given in
Section~\ref{label:dynamic}. It is clear that this type system
simulates \emph{at the level of types} the computations that are carried
out by filters on values at runtime. For instance, rule
\RSTYLE{Fil-Expr} calls the typing function of the host language to
determine the type of an expression $e$. Rule \RSTYLE{Fil-Prod}
applies a product filter recursively on the first and second
projection for each member of the product decomposition of the input
type and returns the union of all result types. Rule \RSTYLE{Fil-Rec}
for records is similar, recursively applying sub-filters label-wise
for each member of the record decomposition and returning the union of
the resulting record types. As for the pattern filter (rule
\RSTYLE{Fil-Pat}), its subfilter $f$ is typed in the environment
augmented by the mapping $t/p$ of the input type against the pattern (\emph{cf.} Theorem~\ref{typenv}).
The typing rule for the union filter, \RSTYLE{Fil-Union} reflects the
first match policy: when typing the second branch, we know that the
first was not taken, hence that at runtime the filtered value will
have a type that is in $t$ but not in $\ftype{f_1}$. 
{\color{orange} Notice that this is \emph{not} ensured by the
definition of accepted type ---which is a rough approximation that
discards grosser errors but, as we stressed right after its
definition, is not sufficient to ensure that evaluation of $f_1$ will
not fail--- but by the type system itself: the premises \emph{check}
that $f_1(t_1)$ is well-typed which, by induction, implies that $f_1$
will never fail on values of type $t_1$ and, ergo, that these values
will never reach $f_2$}. 
Also, we discard
from the output type the contribution of the branches that cannot be
taken, that is, branches whose accepted type have an empty intersection
with the input type $t$. Composition (rule \RSTYLE{Fil-Comp}) is
straightforward. In this rule, the restriction that $f_1$ is a filter
with no open recursion variable ensures that its output type $s$ is
also a type without free recursion variables and, therefore, that we can
use it as input type for $f_2$. The next three rules work
together. The first, \RSTYLE{Fil-Fix} introduces for a recursive filter
a fresh recursion variable for its output type. It also memoize in
$\Delta$ that the recursive filter $X$ is associated with a body $f$
and in $M$ that for an input filter $X$ and an input type $t$, the
output type is the newly introduced recursive type variable.
When dealing with a recursive call $X$ two situations may arise. One
possibility is that it is the first time the filter $X$ is applied to
the input type $t$. We therefore introduce a fresh type variable $T$
and recurse, replacing $X$ by its definition $f$. Otherwise, if the
input type has already been encountered while typing the filter
variable $X$, we can return its memoized type, a type variable $T$.
Finally, Rule \RSTYLE{Fil-OrdBy} and Rule \RSTYLE{Fil-GrpBy} handle
the special cases of $\GROUPBY$ and $\ORDERBY$ filters. Their typing
is explained in the following section.\looseness -1

\subsection{Typing of $\ORDERBY{}$ and $\GROUPBY{}$}
While the ``structural'' filters enjoy simple, compositional typing
rules, the ad-hoc operations $\ORDERBY{}$ and $\GROUPBY{}$ need
specially crafted rules. Indeed it is well known that when
transformation languages have the ability to compare data values
type-checking (and also type inference) becomes undecidable (\emph{eg}, see~\cite{Alon01,Alon03}). We therefore provide two typing approximations
that yield a good compromise between precision and
decidability.
First we define an auxiliary function over sequence
types:
\begin{definition}[Item set]
Let $t\in\Types$ such that $t\leq\LIST{\tany\ks}$. The \emph{item set} of
$t$ denoted by $\ITEM{t}$ is defined by:
\begin{displaymath}
\begin{array}{l@{~=~}l}
\ITEM{\EMPTY} & \emptyset\\
\ITEM{t}    & \ITEM{\patand{t}{\fprod{\tany}{\tany}}} \hspace*{0.5cm}\text{if }t\not\leq\fprod{\tany}{\tany}\\
\ITEM{\displaystyle\bigvee_{\mathclap{1\leq i \leq
      \prank(t)}}\patpair{t^1_i}{t^2_i}} &
\displaystyle\bigcup_{\mathclap{1\leq i \leq
      \prank(t)}} (\{t^1_i\}\cup\ITEM{t^2_i})
\end{array}
\end{displaymath}
\end{definition}\noindent
The first and second line in the definition ensure that $\ITEM{}$
returns the empty set for sequence types that are not
products, namely for the empty sequence. The third line handles the
case of non-empty sequence type. In this case $t$ is a finite union
of products, whose first components are the types of the ``head'' of
the sequence and second components are recursively the types of the
tails. Note also that this definition is well-founded. Since types are
regular trees the number of distinct types accumulated by $\ITEM{}$ is
finite. We can now defined typing rules for the $\ORDERBY$ and
$\GROUPBY$ operators.

\paragraph{\ORDERBY{$f$}:} The \ORDERBY{} filter uses its argument
filter $f$ to compute a key from each element of the input sequence
and then returns the same sequence of elements, sorted with respect to
their key. Therefore, while the types of the elements in the result
are still known, their order is lost. We use $\ITEM$ to compute the
output type of an \ORDERBY{} application:\\
\centerline{$\TORDERBY(t) =
  \LIST{ (\displaystyle\bigvee_{\mathclap{t_i\in\ITEM{t}}}t_i)* }$}

\paragraph{\GROUPBY{$f$}:}
The typing of \ORDERBY can be used to give a rough approximation of the typing of \GROUPBY as stated by rule {\sc [Fil-GrpBy]}.
%
In words, we obtain a list of pairs where the key component is the
result type of $f$ applied to the items of the sequence, and use
\TORDERBY{} to shuffle the order of the list. A far more precise
typing of \GROUPBY that keeps track of the relation between
list elements and their images via $f$ is given in
\appcrazy~\ref{groupby}.

\subsection{Soundness, termination, and complexity}\label{sec:sound}
The soundness of the type inference system is given by the property of subject reduction for filter application
\begin{theorem}[subject reduction]
\label{thm:subred}
If $\ftypej{\emptyset\sep\emptyset\sep\emptyset}{f}{t}{s}$, then for
all $v:t$,  $\japp{\emptyset\textbf{;}\emptyset}{f}{v}{r}$ implies
$r:s$.
\end{theorem}
\noindent whose full proof is given in \appcrazy~\ref{prf:subred}.
It is easy to write a filter for which the type inference algorithm, that  is the deduction of $\vdash_{\text{\tiny fil}}$, does not terminate: $\frec{X}{\fpat{x}{X\fprod{x}{x}}}$.
The deduction of $\ftypej{\Gamma\sep\Delta\sep M}{f}{t}{s}$ simulates
an (abstract) execution of the filter $f$ on the type $t$. Since
filters are Turing complete, then in general it is not possible to
decide whether the deduction of $\vdash_{\text{\tiny fil}}$ for a
given filter $f$ will terminate for every input type $t$. For this
reason we define a static analysis \chk{f} for filters that ensures
that if $f$ passes the analysis, then for every input type $t$ the
deduction of $\ftypej{\Gamma\sep\Delta\sep M}{f}{t}{s}$
terminates. For space reasons the formal definition of \chk{f} is
relegated to \appcrazy~\ref{label:termination}, but its behavior can be easily explained.
Imagine that a recursive filter $f$ is applied to some input type
$t$. The algorithm tracks all the recursive calls occurring in $f$;
next it performs one step of reduction of each recursive call by
unfolding the body; finally it checks in this unfolding that if a
variable occurs in the argument of a recursive call, then it is
bound to a type that is a subtree of the original type $t$. In other
words, the analysis verifies that in the execution of the derivation for
$f(t)$ every call to $s/p$ for some type $s$ and pattern $p$ always
yields a type environment where variables used in recursive calls are
bound to subtrees of $t$. This implies
that the rule \RSTYLE{Fil-Call-new} will always memoize for a
given $X$, types that are obtained from the arguments of the recursive
calls of $X$ by replacing their variables with a subtree of the original
type $t$ memoized by the rule \RSTYLE{Fil-Fix}. Since $t$ is regular,
then it has finitely many distinct subtrees, thus
\RSTYLE{Fil-Call-New} can memoize only finitely many distinct types,
and therefore the algorithm terminates.\looseness -1

More precisely, the analysis proceeds in two passes. In the first pass
the algorithm tracks all recursive filters and for each of them it
$(i)$ marks the variables that occur in the arguments of its recursive
calls, $(ii)$ assigns to each variable an abstract identifier
representing the subtree of the input type to which the variable will
be bound at the initial call of the filter, and $(iii)$ it returns the
set of all types obtained by replacing variables by the associated
abstract identifier in each argument of a recursive call. The last set
intuitively represents all the possible ways in which recursive calls
can shuffle and recompose the subtrees forming the initial input type.
The second phase of the analysis first
abstractly reduces by one step each recursive filter by applying it on
the set of types collected in the first phase of the analysis and then
checks whether, after this reduction, all the variables marked in the
first phase (\emph{ie}, those that occur in arguments of recursive
calls) are still bound to subtrees of the initial input type: if this
checks fails, then the filter is rejected.\looseness -1

It is not difficult to see that the type inference algorithm
converges if and only if for every input type there exists a integer
$n$ such that after $n$ recursive calls the marked variables are bound
only to subtrees of the initial input type (or to something that does not
depend on it, of course). Since deciding whether
such an $n$ exists is not possible, our analysis checks whether for all
possible input types a filter satisfies it for $n{=}1$, that is to say, that
at every recursive call its marked variables satisfy the property;
otherwise it rejects the filter.\looseness -1

\begin{theorem}[Termination]
If \chk{f}, then for every type $t$ the deduction of
$\ftypej{\Gamma\sep\emptyset\sep \emptyset}{f}{t}{s}$ is in
2-EXPTIME. Furthermore, if $t$ is given as a non-deterministic tree
automaton (NTA) then $\ftypej{\Gamma\sep\emptyset\sep
  \emptyset}{f}{t}{s}$ is in EXPTIME, where the size of the problem is
$|f|\times|t|$.
\end{theorem}
\noindent
(for proofs see \appcrazy~\ref{label:termination} for termination and \appcrazy~\ref{sec:complex} for complexity).
This complexity result is in line with those of similar
formalisms. For instance in \cite{Martens03}, it is shown that
type-checking non deterministic top-down tree transducers is in EXPTIME
when the input and output types are given by a NTA.

All filters defined in this paper (excepted those in \appcrazy~\ref{label:tc}) pass the analysis. As an example consider the filter \texttt{rotate} that applied to
a list returns the same list with the first element moved to the last
position (and the empty list if applied to the empty list):\\
\centerline{
$\frec{X}{\ (\ \falt{\fpat{\fprod{x}{\fprod{y}{z}}}{\fprod{y}{X\fprod{x}{z}}}\quad}{\quad\fpat{w}{w}}\ )}$}\\
The analysis succeeds on this filter. If we denote by $\iota_x$ the
abstract subtree bound to the variable $x$, then the recursive call
will be executed on the abstract argument
$\fprod{\iota_x}{\iota_z}$. So in the unfolding of the recursive call
$x$ is bound to $\iota_x$, whereas $y$ and $z$ are bound to two distinct
subtrees of $\iota_z$. The variables in the recursive call, $x$ and $z$,
are thus bound to subtrees of the original tree (even though the argument of the recursive call is \emph{not} a
subtree of the original tree), therefore the filter is
accepted . In order to appreciate the precision of the inference
algorithm consider the type
$\LIST{\Int\kss~~\Bool\kss}$, that is, the type of lists formed by some integers
(at least one) followed by some booleans (at least one). For the
application of \texttt{rotate} to an argument of this type our
algorithm \emph{statically} infers the most precise type, that is,
$\LIST{\Int\ks~~\Bool\kss~~\Int}$. If we apply it once more the inferred type is $\LIST{\Int\ks~~~\Bool\kss~~~\Int~~~\Int}\texttt{|}\LIST{\Bool\ks~~~\Int~~~\Bool}$.

Generic filters are Turing complete. However, requiring
that \chk{} holds ---meaning that the filter is typeable by our system---
restricts the expressive power of our filters by preventing them from
\emph{recomposing} a new value before doing a recursive call. For
instance, it is not possible to typecheck a filter which reverses the
elements of a sequence. Determining the exact class of transformations
that typeable filters can express is challenging. However it is
possible to show (\emph{cf.} \appcrazy~\ref{sec:comptttrl}) that typeable
filters are strictly more expressive than top-down tree transducers
with regular look-ahead, a formalism for tree transformations
introduced in \cite{Engelfriet76}. The intuition about this result can be conveyed by and example. Consider the tree:\\
\centerline{$a(u_1(\ldots(u_n())) v_1(\ldots(v_m())))$}
that is, a tree whose root is labeled $a$ with
two children, each being a monadic tree of height $n$ and $m$,
respectively. Then it is not possible to write a top-down tree
transducer with regular look-ahead that creates the tree\looseness -1\\
\centerline{$a(u_1(\ldots(u_n(v_1(\ldots v_m())))))$}
which is just the concatenation of the two children of the root, seen
as sequences, a transformation that can be easily programmed by typeable filters. The key difference in expressive power comes from the
fact that filters are evaluated with an \emph{environment} that binds
capture variables to sub-trees of the input. This feature is essential
to encode sequence concatenation and sequence flattening ---two pervasive operations when dealing with sequences--- that cannot be expressed  by top-down tree
transducers with regular look-ahead.\looseness -1

\section{Jaql}
\label{label:jaql}
In this Section, we show how filters can be used to capture some
popular languages for processing data on the Cloud. 
We consider Jaql~\cite{jaql}, a query language for JSON developed by
IBM. We give translation rules from a subset of Jaql into filters.
\begin{defn}[Jaql expressions]
We use the following simplified grammar for Jaql (where we distinguish simple expressions, ranged over by $e$, from ``core expressions'' ranged over by $k$).\vspace{-1.5mm}
\begin{jaql}[escapeinside='']
'\ibm$e$' ::= '\ibm c' '$\hfill (\text{\rm constants})$'
  |  '\ibm$x$' '$\hfill (\text{\rm variables})$'
  |  '\ibm\$' '$\hfill (\text{\rm current value})$'
  |  '\ibm[$e$,'...'\ibm, $e$]' '$\hfill (\text{\rm arrays})$'
  |  '\ibm\{ $e$:$e$,'...'\ibm, $e$:$e$ \}' '$\hfill (\text{\rm records})$'
  |  '\ibm$e$.l' '$\hfill (\text{\rm field~access})$'
  |  '\ibm{\sf op}($e$,'...'\ibm,$e$)' '$\hfill (\text{\rm function~call})$'
  |  '\ibm$e$ -> $k$' '$\hfill (\text{\rm pipe})$'
'\ibm$k$' ::= '\ibm filter' '$($\ibm each $x$' '$)?$' '\ibm$e$''$\hfill (\text{\rm filter})$'
  |  '\ibm transform ''$($''\ibm each $x$''$)?$' '\ibm$e$''$\hfill (\text{\rm transform})$'
  |  '\ibm expand' '$(($''\ibm each $x$''$)?$' '\ibm$e$''$)?$''$\hfill (\text{\rm expand})$'
  |  '\ibm group\;''$( ($''\ibm each$\;x$''$)?$' '\ibm by $x\,$=$\,e$' '$($\ibm as$\;x$''$)?)?\;$\ibm into$\;e$''$\hfill (\text{\rm grouping})$'
\end{jaql}\vspace{-1.5mm}
\end{defn}
\ifLONGVERSION
\subsection{Built-in filters}
\fi
\label{builtin}
\noindent In order to ease the presentation we extend our syntax by adding ``filter definitions''  (already informally used in the introduction) to  filters and ``filter calls'' to expressions:
\begin{displaymath}
\quad\begin{array}{lr}
e\ \ ::=\ \ \LET{\FILTER F\LIST{F_1,\ldots,F_n}}{f}{e} & \textbf{(filter defn.)}\\[1mm]
f\ \ ::=\ \ F\LIST{f,\ldots,f}    & \textbf{(call)}
\end{array}
\end{displaymath}
where $F$ ranges over \emph{filter names}.
The mapping for most of the language we consider rely on the following
built-in filters.
\begin{tabbing}
LETT\=FI\=LTER \kill
$\LETT{\FILTER \Filter\LIST{F}}\textup{\texttt{\,=\,}}\frec{X}{}$\\
\>\>$\fpat{\NIL}{\NIL}$\\
\>\pmb{|}\>$\fpat{\patpair{\patpair{x}{~xs}}{tl}}{\fprod{X(x,xs)}{X(tl)}}$\\
\>\pmb{|}\>$\fpat{\patpair{x}{tl}}{\fseq{Fx\;}{(\falt{\fpat{\true}{\fprod{x}{~X(tl)}}}{\fpat{\false}{X(tl)}})}}$
\ifLONGVERSION\\[5mm]\else\\[1.8mm]\fi
$\LETT{\FILTER \Transform\LIST{F}}\textup{\texttt{\,=\,}}\frec{X}{}$\\
\>\>$\fpat{\NIL}{\NIL}$\\
\>\pmb{|}\>$\fpat{\patpair{\patpair{x}{~xs}}{tl}}{\fprod{X(x,xs)}{X(tl)}}$\\
\>\pmb{|}\>$\fpat{\patpair{x}{tl}}{\fprod{Fx}{~X(tl)}}$
\ifLONGVERSION\\[5mm]\else\\[1.8mm]\fi
$\LETT{\FILTER \Expand}\textup{\texttt{\,=\,}}\frec{X}{}$\\
\>\>$\fpat{\NIL}{\NIL}$\\
\>\pmb{|}\>$\fpat{\patpair{\NIL}{tl}}{X(tl)}$\\
\>\pmb{|}\>$\fpat{\patpair{\patpair{x}{~xs}}{tl}}{\fprod{x}{X(xs,tl)}}$
\end{tabbing}
\ifLONGVERSION
\subsection{Mapping}
\else\noindent
\fi
\label{sec:jaqltr}
Jaql expressions are mapped to our expressions as
follows  (where \texttt{\$} is a distinguished expression variable interpreting Jaql's \texttt{\ibm\$}):\\
$\begin{array}{lcl}
    \SemParen{\texttt{c}} & = & c\\
    \SemParen{x} & = & x\\
    \SemParen{\texttt{\$}} & = & \textup{\texttt{\tt\$}}\\
    \!\left\llbracket{\ibm\jq!{!e_1 \jq!:! e'_1\jq!,! ...\jq!,! e_n\jq!:! e'_n\jq!}!}\right\rrbracket & = & \patrec{\SemParen{e_1}\col\SemParen{e'_1}, ..., \SemParen{e_n}\col\SemParen{e'_n}}\\
    \SemParen{e\jq!.l!}  & = & \SemParen{e}.l\\
    \SemParen{{\sf op}(e_1,...,e_n)} & = & {\sf op}(\SemParen{e_1},...,\SemParen{e_n})\\
    \!\left\llbracket{\ibm\jq![!e_1\jq!,...,! e_n\jq!]!}\right\rrbracket & = & (\SemParen{e_1},...(\SemParen{e_n},\NIL)...)\\
    \!\left\llbracket{\ibm e\;\jq!->!\;k}\right\rrbracket & = & \fseq{\SemParen{e}}{\SemParen{k}_{\textsf{F}}}\\[1mm]
  \end{array}$

\noindent Jaql core expressions are mapped to filters as follows:\\[1mm]
\hspace*{-1.5mm}
$\begin{array}{l@{}c@{\;}l}
  \SemParen{\jq!filter!~e}_{\textsf{F}}  &=&  \SemParen{\jq!filter each $!~e}_{\textsf{F}}\\
  \SemParen{\jq!filter each!~x~e}_{\textsf{F}}  &=& \Filter\LIST{\fpat{x}{\SemParen{e}}}\\
  \SemParen{\jq!transform!~e}_{\textsf{F}}  &=&  \SemParen{\jq!transform each $!~e}_{\textsf{F}}\\
  \SemParen{\jq!transform each!~x~e}_{\textsf{F}}  &=& \Transform\LIST{\fpat{x}{\SemParen{e}}}\\
  \SemParen{\jq!expand each!~x~e}_{\textsf{F}}  &=&  \SemParen{\jq!expand!}_{\textsf{F}}; \SemParen{\jq!transform!\;\jq!each!\;x~e}_{\textsf{F}}\\
  \SemParen{\jq!expand!}_{\textsf{F}}  &=& \Expand\\
  \SemParen{\jq!group into!~e}_{\textsf{F}}  &=&  \SemParen{\jq!group by!~y\jq!=true into!~e}_{\textsf{F}}\\
  \SemParen{\jq!group!\,\jq!by!\,y\jq!=!e_1\,\jq!into!\,e_2}_{\textsf{F}}  &=&  \SemParen{\jq!group!\;\jq!each $ by!\,y\jq!=!e_1\;\jq!into!\;e_2}_{\textsf{F}}\\
\multicolumn{3}{l}{\SemParen{\jq!group!\,\jq!each!\,x\,\jq!by!\,y\jq!=!e_1\,\jq!into!\,e_2}_{\textsf{F}} =}\\
   \multicolumn{3}{r}{\SemParen{\jq!group!\;\jq!each!~x~\jq!by!~y~\jq!=!~e_1~\jq!as $ into!~e_2}_{\textsf{F}}}\\
\multicolumn{3}{l}{\SemParen{\jq!group each!~x~\jq!by!~y~\jq!=!~e_1~\jq!as!~g~\jq!into!~e_2}_{\textsf{F}} =}\\
\multicolumn{3}{r}{\fseq{\GROUPBY{\fpat{x}{\SemParen{e_1}}}}{\Transform\LIST{\fpat{\fprod{y}{g}}{\SemParen{e_2}}}}\mbox{ }}
\end{array}$\\[1mm]
This translation defines the (first, in our knowledge) formal semantics of
Jaql.  Such a translation is \emph{all} that is needed to
define the semantics of a NoSQL language and, as a bonus, endow it 
with the type inference system we described \changes{\emph{without requiring any modification of the original language}}. No further action is demanded since the machinery to exploit it is
all developed in this work. 

As for typing, every Jaql expression is encoded into a filter for which type-checking is ensured to terminate: \textit{Check$()$} holds for \texttt{Filter[]}, \texttt{Transform[]}, and
\texttt{Expand} (provided it
holds also for their arguments) since they only perform recursive
calls on recombinations of subtrees of their input; by its definition, the encoding does not introduce any new recursion and, hence, it always yields a composition and application of filters for
which \textit{Check$()$} holds.


\def\negspace{-2mm}
\subsection{Examples}\label{sec:example}
To show how we use the encoding, let us encode the
example of the introduction. For the sake of the concision we
will use filter definitions (rather than expanding them in details). We
use \texttt{Fil} and \texttt{Sel} defined in the
introduction, \texttt{Expand} and \texttt{Transform[]} defined
at the beginning of the section, the encoding of Jaql's field selection as defined in Section~\ref{sec:sugar}, and finally \texttt{Head} that returns the first element of a sequence and a family of recursive filters
\texttt{Rgrp}$i$ with $i\in\mathbb{N^+}$ both defined below:
\begin{alltt}
let filter Head = \NIL => \Null | (x,xs) => x\\[\negspace]
let filter Rgrp\(i\) = \NIL => \NIL
                 | ((\(i\),x),tail) => (x , Rgrp\(i\) tail) 
                 | _ => Rgrp\(i\) tail
\end{alltt}
Then, the query in the introduction is encoded as follows
\begin{alltt}\internallinenumbers
  [employees depts];
  [Sel Fil];
  [Transform[x =>(1,x)]  Transform[x =>(2,x)]];
  Expand;
  groupby ( (1,$)=>$.dept | (2,$)=>$.depid );
  Transform[(g,l)=>(
           [(l; Rgrp1) (l; Rgrp2)];
           [es ds] =>
                  \{ dept: g,
                    deptName: (ds ; Head).name),
                    numEmps: count(es) \} )]
\end{alltt}
In words, we perform the selection on employees and filter the
departments (lines\,{\small1-2}); we tag each element by \texttt{1} if it comes from
employees, and by \texttt{2} if it comes from departments (line\,{\small3}); we merge
the two collections (line\,{\small4}); we group the heterogeneous list according to
the corresponding key (line\,{\small5}); then for each element of the result of
grouping we capture in \texttt{g} the key (line\,{\small6}), split the group into
employees and depts (line\,{\small7}), capture each subgroup into the corresponding
variable (\emph{ie}, \texttt{es} and \texttt{ds}) (line\,{\small8}) and return the
expression specified in the query after the ``\texttt{\ibm into}'' (lines\,{\small8-10}).
The general definition of the encoding for the co-grouping is given in \appcrazy~\ref{sec:cogroupingencoding}.

Let us now illustrate how the above composition of filters is typed. Consider an instance where:
\begin{itemize}
\item \texttt{employees} has type \texttt{ [ \textrm{\it Remp}\ks\ ]}, where\\
      \textit{Remp} $\equiv$ \texttt{\{ dept: int,
    income:int, ..\} }
\item \texttt{depts} has type \texttt{[ (\textrm{\it Rdep}
    | \textrm{\it Rbranch})* ]}, where\\
     \textit{Rdep} $\equiv$ \texttt{\{depid:int, name:
    string, size: int\}}\\
    \textit{Rbranch} $\equiv$ \texttt{\{brid:int, name:
    string\}}\\
    (this type is a subtype of \texttt{Dept} as defined in the introduction)
\end{itemize}
The global input type is therefore (line\,{\small1})\\
\centerline{\texttt{[ [ \textrm{\it Remp}* ] [ (\textrm{\it Rdep}
    | \textrm{\it Rbranch})* ] ] }}
which becomes, after selection and filtering (line\,{\small2})\\
\centerline{\texttt{[ [ \textrm{\it Remp}* ] [ \textrm{\it Rdep}* ] ] }}
(note how all occurrences of \textit{Rbranch} are ignored by
\texttt{Fil}). Tagging with an integer (line\,{\small3}) and flattening (line\,{\small4}) yields\\
\centerline{\texttt{[ (1,\textrm{\it Remp})* (2,\textrm{\it Rdep})*  ] }}
which illustrates the precise typing of products coupled
with singleton types (\emph{ie}, \texttt{1} instead of
\texttt{int}). While the \texttt{groupby} (line\,{\small5}) introduces an approximation
the dependency between the tag and the corresponding type is kept\\
\centerline{\texttt{[ (int, [ ((1,\textrm{\it Remp}) | (2,\textrm{\it Rdep})
  )+ ]) * ]}}
Lastly the transform is typed exactly, yielding the final type\\[.5mm]
\centerline{\texttt{ [ \{dept:int, deptName:string|null, numEmps:int \}* ]}}\\[.5mm]
Note how \texttt{null} is retained in the output type (since there may
be employees without a department, then \texttt{Head} may be applied
to an empty list returning \Null{}, and the selection of \texttt{name}
of \Null{} returns \Null). 
For instance suppose to pipe the Jaql grouping defined in the
introduction into the following Jaql expression, in order  to
produce a printable representation of the records of the result
\begin{alltt}\ibm
transform each x (
 (x.deptName)@":"@(to_string x.dep)@":"@(x.numEmps))
\end{alltt}
where  \texttt{@} denotes string concatenation and \texttt{to\_string}
is a conversion operator (from any type to string). The composition is ill-typed for three reasons:  the field \texttt{dept} is misspelled as \texttt{dep}, \texttt{x.numEmps} is of type \texttt{int} (so it must be applied to \texttt{to\_string} before concatenation), and the programmer did not account for the fact
that the value stored in the field \texttt{deptName} may be
\texttt{null}. The encoding produces the following lines to be appended to the previous code:
\resetlinenumber[12]
\begin{alltt}\internallinenumbers
Transform[ x =>  
 (x.deptName)@":"@(to_string x.dep)@":"@(x.numEmps)]
\end{alltt}
in which all the three errors are detected by our type system. A subtler example of error is given by the following alternative code
\resetlinenumber[12]
\begin{alltt}\internallinenumbers
Transform[
   \{ dept : d, deptName: n\&String, numEmps: e \} =>
      n @ ":" @ (to_string d) @ ":" @ (to_string e)
 | \{ deptName: null, .. \} => ""
 | _ => "Invalid department" ]
\end{alltt}
which corrects all the previous errors but adds a new one since, as
detected by our type system, the last branch can be never selected.
\rechanges{As we can see, our type-system ensures soundness, forcing the programmer
to handle exceptional situations (as in the
\texttt{null} example above) but is also precise enough to detect
that some code paths can never be reached.}

In order to focus on our contributions we kept the language of types and filters simple. However there already exists several contributions on the types and expressions used here. Two in particular are worth mentioning in this context: recursive patterns and XML.

Definition~\ref{patdef} defines patterns inductively but,
alternatively, we can consider the (possibly infinite) regular trees
\emph{coinductively} generated by these productions and, on the lines of
what is done in \cduce, use the recursive patterns so obtained
to encode regular expressions patterns~(see~\cite{BCF03}). Although
this does not enhance expressiveness, it greatly improves the writing of
programs since it makes it possible to capture distinct subsequences of a
sequence by a single match. For instance, when a sequence  is
matched against a pattern such as \texttt{[ (\Int\ as x | \Bool\ as y |
  \_)\ks\ ]}, then \texttt{x} captures (the list of) all integer elements
 (capture variables in regular expression patterns are bound to lists), \texttt{y} captures all  Boolean elements, while the
remaining elements are ignored. By such patterns,
co-grouping can be encoded without the \texttt{Rgrp}. For instance,
the transform in lines 6-11  can be more compactly
rendered as:\vspace{-1mm}
\resetlinenumber[6]
\begin{alltt}\internallinenumbers
  Transform[(g,[ ((1,es)|(2,ds))* ]) =>
                     \{ dept: g, 
                       deptName: (ds;Head).name, 
                       numEmps: count(es) \}]
\end{alltt}
\vspace{-1mm}
For what concerns XML, the types used here were originally defined
for XML, so it comes as a no surprise that they can seamlessly express
XML types and values. For example \cduce{} uses the very same types used here
to encode both XML types and elements as triples, the first element being the
tag, the second a record representing attributes, and the third a
heterogeneous sequence for the content of the element. Furthermore, we
can adapt the results of~\cite{Castagna2008} to encode forward
XPath queries in filters. Therefore, it requires little effort to use the
filters presented here to encode languages such as JSONiq~\cite{JSONiq}
designed to integrate JSON and XML, or to precisely type regular
expressions, the import/export of XML data, or XPath queries embedded in Jaql programs. 
\ifLONGVERSION
This is shown in the section that follows.
\else
\rechanges{The description of these encodings can be found in the long version of
this paper, where we also argue that it is better to extend NoSQL
languages with XML primitives directly derived from our system rather
than to use our system to encode languages such as JSONiq. As a matter
of fact, existing approaches tend to juxtapose XML and JSON operators
thus yielding to stratified (\emph{ie}, not tightly integrated)
systems which have several drawbacks (\emph{eg}, JSONiq does not allow
XML nodes to contain JSON objects and arrays). Such restrictions are absent
from our approach since both XML and JSON operators are encoded in the same
basic building blocks and, as such, can be freely nested and combined.}
\fi
  
\ifLONGVERSION
\section{JSON, XML, Regex}
\label{label:xml}
There exist various attempts to integrate JSON and XML. For instance
JSONiq~\cite{JSONiq} is a query language designed to allow XML and JSON to be used in the same query. The motivation is that JSON and XML are both
widely used for data interchange on the Internet. In many
applications, JSON is replacing XML in Web Service APIs and data
feeds, while more and more applications support both formats. More
precisely, JSONiq embeds JSON into an XML query language (XQuery), but
it does it in a stratified way: JSONiq does not allow XML nodes to
contain JSON objects and arrays. The result is thus similar to
OCamlDuce, the embedding of  \cduce{}'s XML types and expressions into
OCaml, with the same drawbacks.

Our type system is derived from the type system of \cduce{}, whereas the
theory of filters was originally designed to use \cduce{} as an host
language. As a consequence XML types and expressions can be
seamlessly integrated in the work presented here, without any
particular restriction. To that end it suffices to use for XML elements
and types the same encoding used in the implementation of \cduce,
where an XML element is just a triple formed by a tag (here, an
expression), a record (whose labels are the attributes of the element),
and a sequence (of characters and or other XML elements) denoting its
content. So for instance the following element
\begin{alltt}
<product system="US-size">
  <number>557</number>
  <name>Blouse</name>
</product>
\end{alltt}
is encoded by the following triple:
\begin{alltt}
("product" , \{ system : "US-size" \} ,
   [ 
     ("number" , \{\} , [ 557 ])
     ("name", \{\}, "Blouse") 
   ]
)
\end{alltt}
and this latter, with the syntactic sugar defined for \cduce, can be written as:
\begin{alltt}
<product system="US-size">[
  <number>[ 557 ]
  <name>[ Blouse ]
]
\end{alltt}
Clearly in our system there are no restrictions in merging and nesting JSON and XML and no
further extension is required to our system to define XML query and processing
expressions. Just, the introduction of syntactic sugar to make the
expressions readable, seems helpful:
$$
\begin{array}{rcl}
e &::= &\texttt{<$e$ $e$>$e$}\\
f &::= &\texttt{<$f$\;$f$>$f$}
\end{array}$$
The system we introduced here is already able to reproduce (and type) the same
transformations as in JSONiq, but without the restrictions and drawbacks of the latter (this is why we argue that it is better to extend NoSQL languages with XML primitives directly derived from our system rather than to use our system to encode JSONiq).
For instance, the example given in the JSONiq draft to
show how to render in Xhtml the following JSON data:
\begin{alltt}
\{
  "col labels" : ["singular", "plural"],
  "row labels" : ["1p", "2p", "3p"],
  "data" :
     [
        ["spinne", "spinnen"],
        ["spinnst", "spinnt"],
        ["spinnt", "spinnen"]
     ]
\}
\end{alltt}
can be encoded in the filters presented in this work (with the new syntactic sugar) as:
\begin{alltt}
\{ "col labels" : cl ,
  "row labels" : rl ,
  "data" : dl 
\} =>
 <table border="1" cellpadding="1" cellspacing="2">[
   <tr>[ <th>[ ] !(cl; Transform[ x -> <th>x ]) ]
   !(rl; Transform[ h ->
          <tr>[ <th>h !(dl; Transform[ x -> <td>x ]) ] 
         ]
   )
 ] 
\end{alltt}
(where \texttt{!} expands a subsequence in the containing sequence). The resulting Xhtml document is rendered in a web browser as:
\begin{center}
\includegraphics[width=7cm]{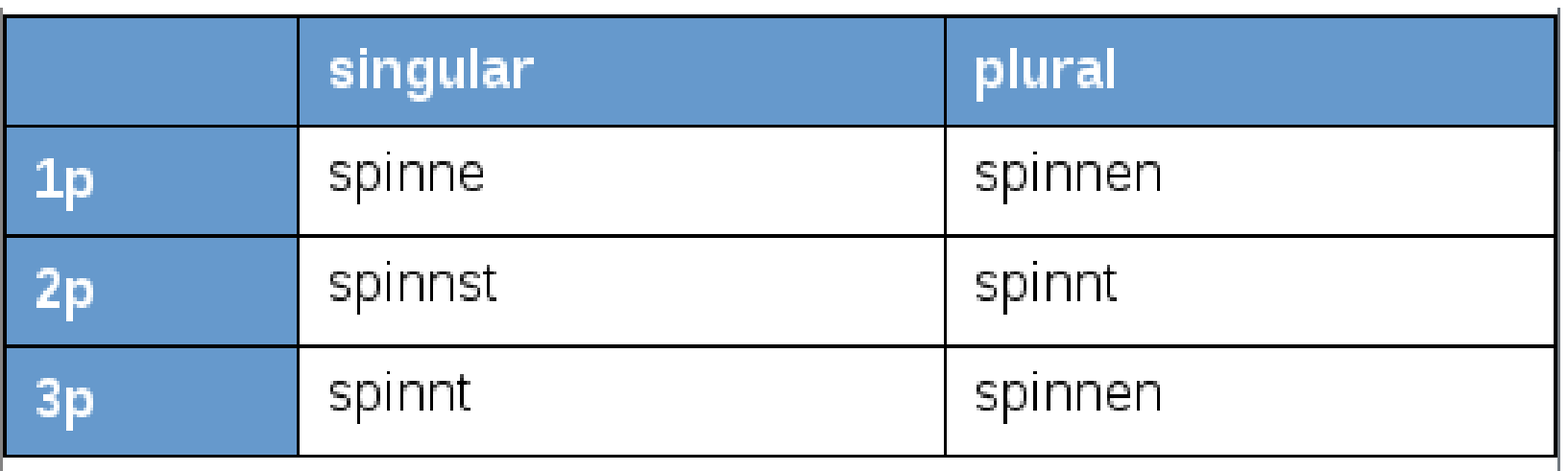} 
\end{center}

Similarly, Jaql built-in libraries include functions to convert and
manipulate XML data. So for example as it is possible in Jaql to embed
SQL queries, so it is possible to evaluate XPath expressions, by the
function \texttt{\ibm xpath()} which takes two arguments, an XML document
and a string containing an xpath expression---\emph{eg},
\texttt{\ibm xpath(read(seq(("conf/addrs.xml"))\,,\,"content/city")} ---.  Filters can
encode forward XPath expressions (see~\cite{kimthesis}) and precisely type them. So while
in the current implementation there is no check of the type of the
result of an external query (nor for XPath or for SQL) and the XML document is produced independently from Jaql, by
encoding (forward) XPath into filters we can not only precisely type calls
to Jaql's \texttt{\ibm xpath()} function but also feed them with documents produced
by Jaql expressions.

Finally, the very same regular expressions types that are used to
describe heterogeneous sequences and, in particular, the content of XML
elements, can be used to type regular expressions. Functions working on
regular expressions (regex for short) form, in practice, yet another
domain specific language that is embedded in general purpose languages
in an untyped or weakly typed (typically, every result is of
type \String) way. Recursive patterns can straightforwardly encode
regexp matching. Therefore, by combining the pattern filter with other filters it is
possible to encode any regexp library, with the important advantage
that, as stated by Theorem~\ref{th:at}, the set of values
(respectively, strings) accepted by a pattern (respectively, by a
regular expression), can be precisely computed and can be expressed by a type. So a
function such as the built-in Jaql's function \texttt{\ibm
regex\_extract()}, which extracts the substrings that match a given
regex, can be easily implemented by filters and precisely typed by our
typing rules. Typing will then amount to intersect the type of the
string (which can be more precise than just {\ibm\String}) with the type
accepted by the pattern that encodes the regex at issue.

\section{Programming with filters}
\label{label:programming}

Up to now we have used the filters to encode operators hard-coded in
some languages in order to type them. Of course, it is possible to
embed the typing technology we introduced directly into the compilers
of these language so as to obtain the flexible typing that characterizes our
system. However, an important aspect of filters we have ignored so far
is that they can be used directly by the programmer to define user-defined operators that are typed as precisely as the hard-coded
ones. Therefore a possibility is extend existing NoSQL
languages by adding to their expressions the filter application $fe$ expression.

The next problem is to decide how far to go in the definition of filters. A
complete integration, that is taking for $f$ all the definitions given
so far, is conceivable but might disrupt the execution model of the
host language, since the user could then define complex iterators that
do not fit map-reduce or the chosen distributed compilation policy.
A good compromise could be to add to the host language
only filters which have ``local'' effects, thus avoiding to affect
the map-reduce or distributed compilation execution model. The minimal
solution consists in choosing just the filters for patterns, unions, and expressions:
$$f ~~:: =~~ e~~\mid~~ p\texttt{ => }f~~\mid~~ f\texttt{|}f$$
Adding such filters to Jaql (we use the ``\texttt{=>}'' arrow for patterns in order to avoid confusion with Jaql's ``\texttt{->}'' pipe operator) would not allow the user to define powerful operators, but their use would already dramatically improve type precision. For instance we could define the following Jaql expression
\else
\subsection{Extensions}\label{label:programming}
Hitherto we used filters only to encode primitive operators of some
NoSQL languages, in particular Jaql. However, it is possible
to \emph{add} filters to other languages, so as to have user-defined
operators typed as precisely as primitive ones. From a linguistic
point of view this is a no-brainer: it suffices to add filter
application to the expressions of the host language. However, such an
extension can be problematic from a computational viewpoint, since
it may disrupt the execution model, especially for what concerns
aspects of parallelism and distribution. A good compromise is to
add only filters that have ``local'' effects, which
can already bring dramatic increases in expressiveness and type precision
without disrupting the distributed compilation model. For instance, one
can add just pattern and union filters as in the following (extended)
Jaql program:
\fi
\\[1mm]
\texttt{\ibm transform\;(\;\patrec{a\col x,..}\;as\;y => \patrec{y.*,\;sum\col
x+x} | y\;=>\;y\;)}
\\[1mm] 
(with the convention that a filter occurring as an expression
denotes its application to the current argument
\texttt{\$}). With this syntax, our inference system
is able to deduce that feeding this expression with an argument of type
\LIST{\texttt{\patrec{a?\col int, c\col bool}*}} returns a result of
type \LIST{\texttt{(\pator{\patrec{a\col int, c\col bool, sum\col
int}}{\patrec{c\col bool}})*}}. This precision comes from the capacity of our
inference system to discriminate between the two branches of the filter and
deduce that a \texttt{\ibm sum} field will be added only if the \texttt{\ibm a}
field is present. Similarly by using pattern matching in a Jaql
``\texttt{\ibm filter}'' expression, we can deduce that \texttt{\ibm filter\,(\;int\,=>\,true | \_\;=>\,false\;)} fed with any sequence of elements always
returns a (possibly empty) list of integers. An even greater precision
can be obtained for grouping expressions when the generation of the
key is performed by a filter that discriminates on types: the result
type can keep a precise correspondence between keys and the corresponding groups.   
\ifLONGVERSION
As an example consider the following (extended) Jaql grouping
expression:{\ibm
\begin{verbatim}
group e by({town: ("Roma"|"Pisa"), ..} => "Italia" 
          |{town: "Paris", ..} => "France" 
          | _ => "?")
\end{verbatim}
}
if $e$  has type \texttt{[\{town:string, addr:string\}*]}, then the type inferred by our system for this groupby expression is 
\begin{verbatim}
[( ("Italia", [{town:"Roma"|"Pisa", addr:string}+])
  |("France", [{town:"Paris", addr:string}+])
  |("?", [{town:string/("Roma"|"Pisa"|"Paris"), 
           addr:string}+])
 )*]
\end{verbatim}
which precisely associates every key to the type of the elements it groups.

Finally, in order to allow a modular usage of filters, adding just filter application to the expression of the foreign language does not suffice: parametric filter definitions are also needed.
$$e::= fe ~~\mid~~ \LET{\FILTER F\LIST{F_1,\ldots,F_n}}{f}{e}$$ 
However, on the line of what we already said about the disruption of
the execution model, recursive parametric filter definitions should be
probably disallowed, since a compilation according to a map-reduce model would require to
disentangle recursive calls.

\fi













  




\ifLONGVERSION\else
\section{JSON, XML, Regex}
\label{label:xml}

\fi

\section{Commentaries}
\label{label:comment}

Finally, let us explain some subtler design choices for our system.
\paragraph{Filter design:}
The reader may wonder whether products and record \emph{filters} are
really necessary since, at first sight, the filter $\fprod{f_1}{f_2}$
could be encoded as $\fpat{(x,y)}{(f_1x,f_2y)}$ and similarly for
records. The point is that $f_1x$ and $f_2y$ are expressions ---and
thus their pair is a filter--- only if the $f_i$'s are closed \changes{(\emph{ie}, wihtout free term recursion variables). Without an explicit
product filter it would not be possible to program a filter as simple
as the identity map, $\frec{X}\falt{\fpat{\NIL}{\NIL}}{\fpat{(h,t)}{\fprod{h}{Xt}}}$ since
$Xt$ is not an expression ($X$ is a free term recursion variable)}%
\ifLONGVERSION
.\footnote{Syntactically we could write
$\frec{X}{\falt{\fpat{\NIL}{\NIL}}{\fpat{(h,t)}{\fprod{h}{(Xt)v}}}}$
where $v$ is any value, but then this would not pass
type-checking since the expression $(Xt)v$ must be typeable
without knowing the $\Delta$ environment (cf.\ rule {\sc[Filter-App]} at the beginning of Section~\ref{sec:tfp}). We purposedly
stratified the system in order to avoid mutual recursion between filters and expressions.}
\else
. 
\fi
Similarly, we need an
explicit record filter to process recursively defined record types
such as
$\frec{X}{(\falt{\patrec{\texttt{head}\col\Int,\texttt{tail}\col
X}}{\NIL})}$.

Likewise, one can wonder why we put in filters only the ``open'' record variant that copy extra fields and not the closed one. The reason is that if we want a filter to be applied only to records with exactly the fields specified in the filter, then this can be simply obtained by a pattern matching. So the filter $\patrec{\ell_1\col f_1,\ldots,\ell_n\col f_n}$ (\emph{ie}, without the trailing ``\textbf{.\,.}'') can be simply introduced as syntactic sugar for $\fpat{\patrec{\ell_1\col \tany,\ldots,\ell_n\col \tany}}{\forec{\ell_1\col f_1,\ldots,\ell_n\col f_n}}$

\paragraph{Constructors:} 
The syntax for constructing records and pairs is exactly the same in
patterns, types, expressions, and filters. The reader may wonder why
we did not distinguish them by using, say, $\times$ for product types
or $=$ instead of $\col$ in record values. This, combined with the
fact that values and singletons have the same syntax, is a critical
design choice that greatly reduces the confusion in these languages,
since it makes it possible to have a unique representation for
constructions that are semantically equivalent. Consider for instance
the pattern $\patpair{x}{\patpair{3}{\NIL}}$. With our syntax
$\patpair{3}{\NIL}$ denotes both the product type of two singletons
$3$ and $\NIL$, or the value $\patpair{3}{\NIL}$, or the singleton
that contains this value. According to the interpretation we choose,
the pattern can then be interpreted as a pattern that matches a
product or a pattern that matches a value. If we had differentiated
the syntax of singletons from that of values (\emph{eg},
\texttt{\{$v$\}}) and that of pairs from products, then the pattern
above could have been written in five different ways. The point is
that they all would match exactly the same sets of values, which is why we
chose to have the same syntax for all of them.

\paragraph{Record types: }
\ifLONGVERSION
 The definition of records is redundant (both for types and patterns). Instead of the current definition we could have used  just $\patrec{\ell\col t}$, $\patrec{}$,
  and $\patrec{\textbf{..}}$, since the rest can be encoded by intersections. For instance,
  $\patorec{\ell_1\col t_1, \ldots,
    \ell_n\col t_n}= \patand{\patrec{\ell_1\col t}}{\patand{...}{\patand{\patrec{\ell_n\col t_n}}{\patrec{..}}}}$. We opted to use the redundant definition for the sake of clarity.
\fi
\rechanges{
In order to type records with computed labels we distinguished two
cases according to whether the type of a record label is
finite or not. Although such a distinction is simple, it is not
unrealistic. Labels with singleton types cover the (most common) case of
records with statically fixed labels. The dynamic choice of a label from a
statically known list of labels is a usage pattern seen in JavaScript when
building an object which must conform to some interface based on a
run-time value. Labels with infinite types cover the fairly common
usage scenario in which records are used as dictionaries: we deduce
for the expression computing the label the type \texttt{string}, thus forcing
the programmer to insert some code that checks that the label is
present before accessing it.

The rationale behind the typing of records was twofold. First and foremost, in this
work we wanted to avoid type annotations at all costs (since there is
not even a notion of schema for JSON records and collections ---only
the notion of basic type is defined--- we cannot expect the Jaql
programmer to put any kind of type information in the code).  
More sophisticated type
systems, such as dependent types, would probably preclude type
reconstruction: dependent types need a lot of annotations and this does not fit
our requirements. Second, we wanted the type-system to be simple yet
precise.  Making the finite/infinite distinction increases typing
precision at no cost (we do not need any extra machinery since
we already have singleton types). Adding heuristics or complex analysis just
to gain some precision on records would have blurred the main focus of
our paper, which is not on typing records but on typing \emph{transformations} on records. We leave such additions for future work.
}

\paragraph{Record polymorphism:}
The type-oriented reader will have noticed that we do not use row
variables to type records, and nevertheless we have a high degree of
polymorphism. Row variables are useful to type functions or
transformations since they can keep track of record fields that are
not modified by the transformation. In this setting we do not need
them since we do not type transformations (\emph{ie}, filters) but
just the application of transformations (filters are not first-class terms). We have polymorphic typing
via filters 
(see how the first example given in Section~\ref{label:programming} keeps track of the {\ibm\tt c} field) and therefore open records suffice.
\ifLONGVERSION 

\paragraph{Record selection:}
Some languages ---typically, the dynamic ones such as Javascript, Ruby, Python---
allow the label of a field selection to be computed by an expression. We considered the definition of a fine-grained rule to type expressions of the form $e_1\pmb.e_2$: whenever $e_2$ is typed by a finite unions of strings, the rule would give a finite approximation of the type of the selection. However, such an extension would complex the definition of the type system, just to handle few interesting cases in which a finite union type can be deduced. Therefore, we preferred to omit its study and leave it for future work.
\fi


\ifLONGVERSION
\section{Related Work}
\label{label:related}
\else\paragraph{Related work:}
\fi

In the (nested) relational (and SQL) context, many works have studied the  integration of (nested)-relational algebra or SQL into general purpose programming languages. Among the first attempts was the integration of the relational model in Pascal~\cite{pascalr} or in Smalltalk~\cite{opal}. Also, monads or comprehensions~\cite{WadTri89,DBLP:conf/icdt/TannenBW92,DBLP:journals/sigmod/BunemanLSTW94} have been successfully used to design and implement query languages including a way to embed queries within host languages. 
Significant efforts have been done to equip those languages with type systems and type checking disciplines~\cite{buneman1981, buneneman1988,machiavelli1989} and more recently~\cite{DBLP:conf/icfp/OhoriU11} for integration and typing aspects. 
However, these approaches only support homogeneous sequences of records in the context of specific classes of queries (practically equivalent to a nested relational algebra or calculus), they do not account for records with computable labels, and therefore they are not easily transposable to a setting where sequences are heterogeneous, data are semi-structured, and queries are much more expressive.

While the present work is inspired and stems from previous works on
the XML iterators, targeting
NoSQL languages made the filter calculus presented here substantially
different from the one of \cite{Castagna2008,kimthesis}
(dubbed XML filters in what follows), as well in  syntax as
in dynamic and static semantics.
In~\cite{Castagna2008} XML filters behave as some kind of
top-down tree transducers, termination is enforced by
heavy syntactic restrictions, and a
\emph{less} constrained use of the composition makes 
type inference challenging and requires sometimes cumbersome type
annotations. While XML filters are allowed to
operate by composition on the \emph{result} of a recursive call
(and, thus, simulate bottom-up tree transformations), the absence of explicit arguments in recursive calls makes programs understandable only to well-trained programmers. In contrast,
the main focus of the current work
was to make programs immediately intelligible to any functional programmer and make filters effective for the typing of sequence
transformations: sequence iteration, element filtering, one-level
flattening. The last two are especially difficult to write with XML
filters (and require type annotations). Also,
the integration of filters with record types (absent in
\cite{Castagna2008} and just sketched in \cite{kimthesis}) is novel
and much needed to encode JSON transformations.\looseness -1

%


\vspace{-1mm}

\section{Conclusion}
\label{label:conclusion}
\rechanges{Our work addresses two very practical problems, namely the typing of NoSQL languages and a comprehensive definition of their semantics. These languages add to list
comprehension and SQL operators the ability to work on
heterogeneous data sets and are based on JSON (instead of
tuples). Typing precisely  each of these features
using the best techniques of the literature would probably yield quite a
complex type-system (mixing row polymorphism for records,
   parametric polymorphism, some form of dependent typing,...) and we
   are skeptical that this could be achieved without using any
   explicit type annotation.
Therefore we explored the formalization of these languages from scratch, by
defining a calculus and a type system.} The thesis we
defended is that all operations typical of current NoSQL
languages, as long as they operate structurally (\emph{ie},
without resorting on term equality or relations), amount to a combination of more basic bricks: our filters. On the structural side, the claim is that
combining recursive records and pairs by unions, intersections, and
negations suffices to capture all possible structuring of data,
covering a palette ranging from comprehensions, to heterogeneous lists
mixing typed and untyped data, through regular expressions types and XML
schemas. Therefore, our calculus not only provides a simple way to give a
formal semantics to, reciprocally compare, and combine operators of different NoSQL
languages, but also offers a means to
equip  these languages, in they current definition (\emph{ie}, without any type definition or annotation), with precise type inference.
\ifLONGVERSION

As such we accounted
for both components that, according to Landin, constitute the design
of a language: operators and data structures. But while
Landin considers the design of terms and types as independent activities we,
on the contrary, advocate an approach in which the design of former
is \emph{driven} by the form of latter. Although other approaches are
possible, we tried to convey the idea that this approach is
nevertheless the only one that yields a type system whose precision,
that we demonstrated all the work long, is comparable only to the precision
obtained with hard-coded (as opposed to user-defined)
operators. As such, our 
\else
This
\fi
type inference yields and surpasses in precision systems
using parametric polymorphism and row variables. The price to pay is
that \emph{transformations} are not first class: we do not type 
filters but just their applications. However, this seems an advantageous
deal in the world of NoSQL languages where ``selects'' are never
passed around (at least, not explicitly), but early error detection is critical, 
especially in the view of
the cost of code deployment.\footnote{%
Only filter-encoded operators are not first class: if the host language provides, say, higher-order functions, then they stay higher-order and are typed by embedding the host type system, if any, via ``foreign type calls''.}

The result are filters, a set of untyped terms that can be easily included in a host language to complement in a typeful framework existing operators with user-defined ones. 
The requirements to include filters into a host
language are so minimal that every modern typed programming language
satisfies them. The interest resides not in the fact that we
can add filter applications to any language, rather that filters can be
used to define a smooth integration of calls to domain specific
languages (\emph{eg}, SQL, XPath, Pig, Regex) into general purpose ones
(\emph{eg}, Java, C\#, Python, OCaml) so as both can share the
same set of values and the same typing discipline. 
\rechanges{Likewise, even though filters provide an early prototyping platform 
for queries, they cannot currently be used as a final compilation
stage for NoSQL languages: their operations rely on a Lisp-like
encoding of sequences and this makes the correspondence with optimized bulk operations on lists awkward. Whether we
can derive an efficient compilation from filters to map-reduce
(recovering the bulk semantics of the high-level language) is a challenging question.

Future plans include practical experimentation of our technique: we
intend to benchmark our type analysis against existing collections of
Jaql programs, gauge the amount of code that is ill typed and verify on this
how frequently the programmer adopted defensive programming to cope
with the potential type errors.}

\vspace{-2mm}

\bibliographystyle{abbrv}
\bibliography{main}



\clearpage

\begin{figure*}[h]
\begin{center}
\Huge Appendix\\

\smallskip

\end{center}

\end{figure*}

\clearpage

\appendix
\section{Termination analysis algorithm}
\label{label:termination}
In order to deduce the result type of the application of a filter to
an expression, the type inference algorithm abstractly executes the
filter on the type of the expression.   
As explained in Section~\ref{label:typing}, the algorithm essentially
analyzes what may happen to the original input type (and to its subtrees) after
a recursive call and checks that, in all possible cases, every subsequent
recursive call will only be applied to subtrees of  the original
input type. In order to track the subtrees of the original input type, we use an
infinite set $\Id = \{ \idf_1, \idf_2, \ldots \}$ of \emph{variable
  identifiers} (ranged over by, possible indexed, \idf). These
identifiers are used to identify the subtrees of the original input type that are
bound to some variable after a recursive call. 
Consider for instance
the recursive filter:
$$\frec{X}{\falt{\fpat{\patpair{x_1}{\patpair{x_2}{x_3}}}{X\fprod{x_1}{x_3}}~~}}{~~\fpat{\_}{\NIL}}.$$ 
The
algorithm records that at the first call of this filter each variable
$x_i$ is bound to a subtree $\idf_i$ of the input type. The
recursive call $X\fprod{x_1}{x_3}$ is thus applied to the ``abstract argument''
$\fprod{\idf_1}{\idf_3}$. If we perform the substitutions for this
call, then we see that $x_1$ will be bound to $\idf_1$ and that $x_2$ and
$x_3$ will be bound to two subtrees of the $\idf_3$ subtree. We do not \emph{care} about $x_2$
since it is not used in any recursive call. What is important is that
both $x_1$ and $x_3$ are bound to subtrees of the original input type
(respectively, to $\idf_1$ and to a subtree of $\idf_3$) and therefore,
for this filter, the type inference algorithm will terminate for every
possible input type.

The previous example introduces the two important concepts that are
still missing for the definition of our analysis:
\begin{enumerate}
\item Recursive calls are applied to \emph{symbolic arguments}, such as
  $\fprod{\idf_1}{\idf_3}$, that are obtained from arguments by replacing variables
  by variable identifiers. Symbolic arguments are ranged
  over by $A$ and are formally defined as follows:
  \begin{displaymath}
   \begin{array}{lclr}
\textbf{Symb Args}
\quad  A & ::= & \idf & \textbf{(variable identifiers)}\\
       & |   &  c   & \textbf{(constants)}\\
       & |   &  \fprod{A}{A}   & \textbf{(pairs)}\\
       & |   &  \patrec{\ell\col A,\ldots,\ell\col A}\hspace*{-8mm}   & \textbf{(records)}\\
       & |   &  \bottom   & \textbf{(indeterminate)}\\
  \end{array}
  \end{displaymath}
 \item We said that we ``care'' for some variables and disregard
   others. When we analyze a recursive filter $\frec{X}f$ the
   variables we care about are those that occur in arguments of
   recursive calls of $X$. Given a filter $f$ and a filter recursion
   variable $X$ the set of these variables is formally defined as
   follows
   $$\Mark{X}{f} = \{ x \mid X a \sqsubseteq f\mbox{ and }x \in
   \vars{a}\}$$
   where $\sqsubseteq$ denotes the subtree containment relation and
   $\vars{e}$ is the set of expression variables that occur free in
   $e$. With an abuse of notation we will use $\vars{p}$ to denote
   the capture variables occurring in $p$ (thus,
   $\vars{fe}=\vars{f}\cup\vars{e}$ and
   $\vars{\fpat{p}{f}}=\vars{f}\setminus\vars{p}$, the rest of the
   definition being standard).
\end{enumerate}
As an aside notice that the fact that for a given filter $f$ the type inference
algorithm terminates on all possible input types \emph{does not}
imply that the execution of $f$ terminates on all possible input
values. For instance, our analysis correctly detects that for the filter
$\frec{X}{\falt{\fpat{\patpair{x_1}{x_2}}{X\fprod{x_1}{x_2}}}}{\fpat{\_}{\_}}$
type inference terminates on all possible input types (by returning
the very same input
type) although the application of this same filter never terminates on arguments of the
form $\fprod{v_1}{v_2}$.

We can now formally define the two phases of our analysis algorithm. 

\paragraph{First phase.} The first
phase is implemented by the function \Trees{\_}{\_}{\_}. For every
filter recursion variable $X$, this function explores a filter and does
the following two things:
\begin{enumerate}
\item It builds a substitution $\EE:\Vars\to\Id$ from expression
  variables to variables identifiers, thus associating each capture
  variable occurring in a pattern of the filter to a fresh identifier
  for the abstract subtree of the input type it will be bound to.
\item It uses the substitution $\EE$ to compute the set of symbolic
  arguments of recursive calls of $X$.
\end{enumerate}
In other words, if $n$ recursive calls of $X$ occur in the filter
$f$, then $\Trees{X}{\EE}{f}$ returns the set $\{A_1,...,A_n\}$ of
the $n$ symbolic arguments of these calls, obtained under the
hypothesis that the free variables of $f$ are associated to subtrees
as specified by $\EE$. The formal definition is as follows:
$$
\hspace*{-1mm}\begin{array}{l@{\;=~~}l}
\Trees{X}{\EE }{e} & \emptyset\\
\Trees{X}{\EE }{\fpat{p}{f}} &\Trees{X}{\EE \cup\hspace{-3mm} \bigcup \limits_{x_i \in \vars{p}}\hspace{-4mm}\{x_i \mapsto \idf_i\}}{f}\quad \mbox{($\idf_i$ fresh)}\\
\Trees{X}{\EE }{\fprod{f_1}{f_2}} & \Trees{X}{\EE}{f_1} \cup\Trees{X}{\EE }{f_2}\\
\Trees{X}{\EE }{\falt{f_1}{f_2}} &\Trees{X}{\EE }{f_1} \cup\Trees{X}{\EE }{f_2}\\
\Trees{X}{\EE }{\frec{X}{f}} & \emptyset\\
\Trees{X}{\EE }{\frec{Y}{f}} &\Trees{X}{\EE }{f}\hfill (X\not=Y)\\
\Trees{X}{\EE }{X a} & \{a \EE\} \\
\Trees{X}{\EE }{\fseq{f_1}{f_2}} &\Trees{X}{\EE }{f_2}\\
\Trees{X}{\EE }{of} &\Trees{X}{\EE }{f}\hfill (o=\GROUPBY{},\ORDERBY{})\\
\multicolumn{2}{l}{\Trees{X}{\EE }{\forec{\ell_i{\col}f_i}_{i\in
I}}= \bigcup \limits_{i \in I}\Trees{X}{\EE}{f_i}}\\
\end{array}
$$
where $a \EE$ denotes the application of the substitution $\EE$ to $a$.

The definition above is mostly straightforward. The two important
cases are those for the pattern filter where the substitution $\EE$ is
updated by associating every capture variable of the pattern with a
fresh variable identifier, and the one for the recursive call where
the symbolic argument is computed by applying $\EE$ to the actual
argument.

\paragraph{Second phase.}
The second phase is implemented by the function $\chk{}$. The
intuition is that $\chk{f}$ must ``compute'' the application of $f$ to
all the symbolic arguments collected in the first phase and then
check whether the variables occurring in the arguments of recursive
calls (\emph{ie}, the ``marked'' variables) are actually bound to subtrees of
the original type (\emph{ie}, they are bound either to variable identifiers
or to subparts of variable identifiers). If we did not have filter composition, then this
would be a relatively easy task: it would amount to compute
substitutions (by matching symbolic arguments against patterns), apply
them, and finally verify that the marked variables satisfy the sought
property. Unfortunately, in the case of a composition filter
$\fseq{f_1}{f_2}$ the analysis is more complicated than that. Imagine that we want
to check the property for the application of  $\fseq{f_1}{f_2}$ to a
symbolic argument $A$. Then to check the property for the recursive
calls occurring in $f_2$ we must compute (or at least, approximate)
the set of the symbolic arguments that
will be produced by the application of $f_1$ to $A$ and that will be
thus fed into $f_2$ to compute the composition.  Therefore $\chk{f}$
will be a function that  rather than returning just true or false, it will
return a set of symbolic arguments that are the result of the
execution of $f$, or it will fail if any recursive call does not
satisfy the property for marked variables. 

\begin{figure*}[t!]
\begin{itemize}
\item $\chk[{\cal V}]{\EE ,f,\{A_1,...,A_n\}} = \bigcup \limits_{i=1}^n \chk[{\cal V}]{\EE ,f,A_i}$

\item \textbf{If } $(\patand{\atype{A}}{\accept{f}} = \emptyset)$ \textbf{then} $\chk[{\cal V}]{\EE ,f,A}= \emptyset$ \textbf{otherwise:}\\

$
\begin{array}{rclr}
  \chk[{\cal V}]{\EE ,e,A} &=& 
      \begin{cases}
         \{a \EE \} & \text{if } e\equiv a\\
         \bottom  &\text{otherwise}\\
       \end{cases}\\
\\
  \chk[{\cal V}]{\EE ,\fseq{f_1}{f_2},A} &=& \chk[{\cal V}]{\EE ,f_2, \chk[\cal V]{\EE ,f_1,A}}\\
\\
  \chk[{\cal V}]{\EE ,X a,A} &=& \bottom\\
\\
  \chk[{\cal V}]{\EE ,\frec{X}{f},A} &=&
  \begin{cases}
    \textit{fail} &\text{if } $\chk[\Mark{X}{f}]{\EE ,f,A} = $\textit{fail}\\
    \bottom       &\text{otherwise}
  \end{cases}\\
\\
  \chk[{\cal V}]{\EE ,\falt{f_1}{f_2},A} &=&
  \begin{cases}
    \chk[{\cal V}]{\EE,f_2,A} &\text{if } \patand{\accept{f_1}}{\atype{A}}=\emptyset\\
    \chk[{\cal V}]{\EE,f_1,A} &\text{if } \atype{A} \setminus \accept{f_1}=\emptyset\\
    \chk[{\cal V}]{\EE,f_1,A} \cup \chk[{\cal V}]{\EE,f_2,A}& \text{otherwise }\\
  \end{cases}\\
\\
  \chk[{\cal V}]{\EE ,\fprod{f_1}{f_2},A} &=&
  \begin{cases}
    \chk[{\cal V}]{\EE ,f_1,A_1}\,{\times}\, \chk[{\cal V}]{\EE ,f_2,A_2} &\mbox{if } A \equiv (A_1,A_2)\\
    \chk[{\cal V}]{\EE ,f_1,\idf_1\,\,} \times \chk[{\cal V}]{\EE ,f_2,\idf_2} &\mbox{if } A \equiv \idf \qquad\text{\small(where $\idf_1$ and $\idf_2$ are fresh)}\\
    \textit{fail} & \text{if }A\equiv\bottom\\
  \end{cases}\\
\\
  \chk[{\cal V}]{\EE ,\forec{\ell_i{\col}f_i}_{1\leq i \leq n},A} &=&
  \begin{cases}
    \bigcup\limits_{B_i\in\chk[{\cal V}]{\EE ,f_i,A_i}} \hspace{-5mm}\forec{\ell_1{\col}B_1,...,\ell_n{\col}B_n,...,\ell_{n+k}{\col}A_{n+k}} &\mbox{if } A \equiv \forec{\ell_i{\col}f_i}_{1\leq i \leq n+k}\\
    \bigcup\limits_{B_i\in\chk[{\cal V}]{\EE ,f_i,\idf_i}} \hspace{-5mm}\forec{\ell_1{\col}B_1,...,\ell_n{\col}B_n,...,\ell_{n+k}{\col}A_{n+k}}&\mbox{if } A \equiv \idf \qquad\text{\small(where all $\idf_i$ are fresh)}\\
    \textit{fail} & \text{if }A\equiv\bottom\\
  \end{cases}\\
\\  \chk[{\cal V}]{\EE ,\fpat{p}{f},A} &=& 
       \begin{cases}
         \textit{fail} & \text{if }\matchv{A}{p}=\textit{fail}\\
         \chk[{\cal V}]{\EE  \cup (\matchv{A}{p}), f, A} &\text{otherwise}
       \end{cases}
     \end{array}
$
\end{itemize}
\caption{Definition of the second phase of the analysis}
\label{fig:chk}
\end{figure*}

More precisely, the function $\chk{}$ will have the
form $$\chk[\cal V]{\EE,f,\{A_1,...,A_n\}}$$ where $\cal
V\subseteq\Vars$ stores the set of marked variables, $f$ is a filter,
$\EE$ is a substitution for (at least) the free variables of $f$ into $\Id$, and $A_i$ are symbolic
arguments. It will either fail if the marked variables do not satisfy
the property of being bound to (subcomponents of) variable identifiers
or return an over-approximation of  the
result of applying $f$ to all the $A_i$ under the hypothesis
$\EE$. This approximation is either a set of new symbolic arguments, or
\bottom. The latter simply indicates that \chk{} is not able to
compute the result of the application, typically because it is the
result of some expression belonging to the host language
(\emph{eg}, the application of a function) or because it is the result
of a recursive filter or of a recursive call of some filter.  

The full definition of $\chk[\_]{\_,\_,\_}$ is given in
Figure~\ref{fig:chk}. 
Let us comment the different cases in detail.
$$\chk[{\cal V}]{\EE ,f,\{A_1,...,A_n\}} = \bigcup \limits_{i=1}^n \chk[{\cal V}]{\EE ,f,A_i}$$
simply states that to compute the filter $f$ on a set of symbolic
arguments we have to compute $f$ on each argument and return the union
of the results. Of course, if any $\chk[{\cal V}]{\EE ,f,A_i}$ fails,
so does $\chk[{\cal V}]{\EE ,f,\{A_1,...,A_n\}}$. The next case is
straightforward: if we know that an argument of a given form makes the
filter $f$ fail, then we do not perform any check (the
recursive call of $f$ will never be called) and no result will be
returned. For instance, if we apply the filter $\forec{\ell\col f}$ to
the symbolic argument $\fprod{\idf_1}{\idf_2}$, then this will always
fail so it is useless to continue checking this case. Formally, what we do is to
consider $\atype{A}$, the set of all values that have the form of
$A$ (this is quite simply obtained from $A$ by replacing  $\tany$ for every
occurrence of a variable identifier or of \bottom) and check whether in it
there is any value accepted by $f$, namely, whether the intersection
$\patand{\atype{A}}{\accept{f}}$ is empty. If it is so, we directly
return the empty set of symbolic arguments, otherwise we have to
perform a fine grained analysis according to the form of the filter.

If the filter is an expression, then there are two possible cases:
either the expression has the form of an argument ($\equiv$ denotes
syntactic equivalence), in which case we return it after having applied 
the substitution $\EE$ to it; or it is some other expression, in which case
the function says that it is not able to compute a result and returns
\bottom.

\begin{figure*}[t!]
\begin{displaymath}
\begin{array}{rcl}
\matchvv{A}{x} &=&
\begin{cases}
  \{ x \mapsto A \} &
     \hspace*{-2mm}%
     \begin{array}{l}
           \text{if } A\equiv v\text{ or }A\equiv\idf\text{ or }x \notin {\cal V}
     \end{array}
  \\[1mm]
  \textit{fail} &\text{otherwise}
\end{cases}\\
\\
\matchvv{A}{\patpair{p_1}{p_2}} &=&
  \begin{cases}
    (\matchvv{A_1}{p_1}) \cup (\matchvv{A_2}{p_2})&\text{if } A\equiv\patpair{A_1}{A_2}\\
    (\matchvv{\idf_1}{ p_1}) \cup (\matchvv{\idf_2 }{ p_2})& 
                 \text{if } A\equiv\idf\qquad \text{\small(where $\idf_1$, $\idf_2$ are fresh)}\\
    \textit{fail} & \text{if }A\equiv\bottom\\
    \Omega  &\text{otherwise}
  \end{cases}\\
  \\
\matchvv{A}{\patand{p_1}{p_2}} &=& \matchvv{A}{p_1} \cup \matchvv{A}{p_2}\\
\\
\matchvv{A}{\pator{p_1}{p_2}} &=&
    \begin{cases} 
       \matchvv{A}{p_1} &\text{if }\matchvv{A}{p_1}\not=\Omega\\
       \matchvv{A}{p_2} &\text{otherwise}
    \end{cases}\\
\\
\matchvv{A}{\patrec{\ell_1\col p_1,...,\ell_n\col p_n}} &=&
  \begin{cases}
      \bigcup \limits_{i\leq i\leq n} \matchvv{A_i}{p_i}&
                  \text{if } A\equiv\patrec{\ell_1\col A_1,...,\ell_n\col A_n} \\
      \bigcup \limits_{i\leq i\leq n} \matchvv{\idf_i}{p_i}&
                  \text{if } A\equiv\idf\qquad \text{\small(where $\idf_i$ are fresh)}\\
    \textit{fail} & \text{if }A\equiv\bottom\\
    \Omega  &\text{otherwise}
  \end{cases}\\
\\
\matchvv{A}{\patorec{\ell_1\col p_1,...,\ell_n\col p_n}} &=&
  \begin{cases}
      \bigcup \limits_{i\leq i\leq n} \matchvv{A_i}{p_i}&
                  \text{if } A\equiv\patrec{\ell_1\col A_1,...,\ell_n\col A_n,...,\ell_{n+k}\col A_{n+k}} \\
      \bigcup \limits_{i\leq i\leq n} \matchvv{\idf_i}{p_i}&
                  \text{if } A\equiv\idf\qquad \text{\small(where $\idf_i$ are fresh)}\\
    \textit{fail} & \text{if }A\equiv\bottom\\
    \Omega  &\text{otherwise}
  \end{cases}\\
\\
\matchvv{A}{t} &=& \emptyset\\
\end{array}
\end{displaymath}
\caption{Pattern matching with respect to a set ${\cal V}$ of marked variables}\label{fig:match}
\end{figure*}

When the filter is a composition of two filters, $\fseq{f_1}{f_2}$, we first call
$\chk[\cal V]{\EE ,f_1,A}$ to analyze $f_1$ and compute the result of
applying $f_1$ to $A$, and then we feed this result to the analysis of $f_2$:
 $$\chk[{\cal V}]{\EE ,\fseq{f_1}{f_2},A} = \chk[{\cal V}]{\EE ,f_2,
   \chk[\cal V]{\EE ,f_1,A}}$$
When the filter is recursive or a recursive call, then we are not able
to compute the symbolic result. However, in the case of a recursive filter
we must check whether the type inference  terminates on it. So
we mark the variables of its recursive calls, check whether its
definition passes our static analysis, and return \bottom\ only if this
analysis ---\emph{ie}, $\chk[\Mark{X}{f}]{\EE ,f,A}$--- did not fail.
 Notice that \bottom\ is quite different from failure since it
allows us to compute the approximation as long as the filter after the
composition does not bind (parts of) the result that are \bottom\ to
variables that occur in arguments of recursive calls. A key example is
the case of the filter \texttt{Filter[$F$]} defined in
Section~\ref{builtin}. If the parameter filter $F$ is recursive, then
without \bottom\ \texttt{Filter[$F$]} would be rejected by the static
analysis. Precisely, this is due to the composition
$\fseq{Fx}{(\falt{\fpat{\false}{Xl}}{\cdots})}$. When $F$ is recursive
the analysis supposes that $Fx$ produces \bottom, which is then passed
over to the pattern. The recursive call $Xl$ is executed in the
environment $\bottom/\true$, which is empty. Therefore the result of
the $Fx$ call, whatever it is, cannot affect the termination of the subsequent recursive
calls. Even though the composition uses the result of $Fx$, the
form of this result cannot be such as to make typechecking of \texttt{Filter[$F$]}
diverge.

For the union filter $\falt{f_1}{f_2}$ there are three possible cases:
$f_1$ will always fail on $A$, so we return the result of $f_2$; or
$f_2$ will never be executed on $A$ since $f_1$ will never fail on
$A$, so we return just the result of $f_1$; or we cannot tell
which one of $f_1$ or $f_2$ will be executed, and so we return the union
of the two results.

If the filter is a product filter $\fprod{f_1}{f_2}$, then its accepted
type $\accept{\fprod{f_1}{f_2}}$ is a subtype of $\fprod{\tany}{\tany}$. Since we
are in the case where  $\patand{\atype{A}}{\accept{\fprod{f_1}{f_2}}}$
is not empty, then $A$ can have just two forms: either it is a pair or
it is a variable identifier $\idf$. In the former case  we check the
filters component-wise and return as result the set-theoretic product
of the results. In the latter case, recall that $\idf$ represents a
subtree of the original input type. So if the application does not
fail it means that each subfilter will be applied to a subtree of
$\idf$. We introduce two fresh variable identifiers to denote these
subtrees of $\idf$ (which, of course are subtrees of the original
input type, too) and, as in the previous case, apply the check
component-wise and return the set-theoretic product of the results. The
case for the record filter is similar to the case of the product filter.

Finally, the case of pattern filter is where the algorithm checks that the marked
variables are indeed bound to subtrees of the initial input type. What
$\chk{}$ does is to match the symbolic argument against the pattern and
update the substitution $\EE$ so as to check the subfilter $f$ in an
environment with the corresponding assignments for the capture
variables in $p$. Notice however that the pattern matching
\matchv{A}{p} receives as extra argument the set $\cal V$ of marked
variables, so that while computing the substitutions for the capture variables
in $p$ it also checks that all capture variables that are also marked are
actually bound to a subtree of the initial input type. \matchv{A}{p}
is defined in Figure~\ref{fig:match} (to enhance readability we omitted the $\cal V$
index since it matters only in the first case and it does not change
along the definition). 
The check for marked variables is performed in the first case of the definition: when a symbolic
argument $A$ is matched against a variable $x$, if the variable is not
marked ($x\not\in\cal V$), then the substitution $\{ x \mapsto A \}$ is
returned; if it is marked, then the matching (and thus the analysis)
does not fail only if the symbolic argument is either a value (so it
does not depend on the input type) or it is a variable identifier (so
it will be exactly a subtree of the original input type).
The other cases of the definition of \matchv{A}{p} are standard. Just
notice that in the product and record patterns when $A$ is a variable
identifier $\idf$, the algorithm creates fresh
new variable identifiers to denote the new subtrees in which $\idf$
will be decomposed by the pattern. Finally, the reader should not
confound $\Omega$ and \textit{fail}. The former indicates that the argument
does not match the value, while the latter indicates that a marked variable
is not bound to a value or to exactly a subtree of the initial input
type (notice that the case $\Omega$ cannot happen in the definition of
$\chk{}$ since \matchv{A}{p} is called only when $\atype{A}$ is
contained in $\accept p$).

\paragraph{The two phases together.} Finally we put the two phases
together. Given a recursive filter $\frec{X}{f}$ we run the analysis
by marking the variables in the recursive calls of $X$ in $f$, running
the first phase and then and feeding the result to the second phase:
$\chk[\Mark{X}{f}]{\emptyset,f,\Trees{X}{\emptyset}{f}}$. If this
function does not fail then we say that
$\frec{X}{f}$ passed the check:
$$\chk{\frec{X}{f}}\iff \chk[\Mark{X}{f}]{\emptyset,f,\Trees{X}{\emptyset}{f}}\not=\textit{fail}$$
For filters that are not recursive at toplevel it suffices to add a
dummy recursion:  $\chk{f}\eqdef\chk{\frec{X}{f}}$ with $X$ fresh.

\begin{theorem}[Termination]\label{th:halt}
If  $\chk{f}$ then the type inference algorithm terminates for $f$ on every input type $t$.
\end{theorem}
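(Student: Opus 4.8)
The only construct that can make the inference algorithm loop is a recursive filter: every other case of the algorithm (pairs, records, composition, unions, group-by/order-by, host-language expressions) recurses on structurally smaller filters, and analysing $\frec{X}{g}$ reduces to analysing the strictly smaller $g$. So the plan is to argue by structural induction on $f$ and to reduce the statement to the claim that \emph{for every recursive filter $\frec{X}{g}$ that the algorithm enters while inferring the type of $f$ on $t$, the recursion variable $X$ is invoked on only finitely many distinct type arguments.} Since, as is standard for a type system over regular types, the inference procedure of Section~\ref{label:typing} computes a fixpoint and does not re-expand a recursion variable on an input type it has already started to analyse, this finiteness is exactly what is needed for termination.

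The core of the proof is a soundness statement relating the abstract run $\chk[\cdot]{\cdot,\cdot,\cdot}$ of Figures~\ref{fig:chk} and~\ref{fig:match} to the concrete inference run: the abstract run \emph{over-approximates} the concrete one. Precisely, I would prove by induction on the concrete execution, with a case analysis mirroring the clauses of $\chk{}$ and $\matchv{A}{p}$, the invariant that whenever the concrete inference of $\frec{X}{g}$ on $t$ reaches a recursive call $X a$ whose free variables are concretely bound to types $\tau_1,\dots,\tau_k$, the corresponding abstract state carries a symbolic argument $A$ and an environment $\EE$ in which each of these variables is mapped \emph{either} to a variable identifier $\idf$ that denotes exactly one subtree of $t$, \emph{or} to a closed symbolic value whose type is the corresponding $\tau_j$. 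The delicate cases are: (i) the composition clause, where $\chk[{\cal V}]{\EE,f_1,A}$ must be shown to cover every type the concrete evaluation of $f_1$ can return (or to collapse to $\bottom$), which also justifies the three-way split of the union clause and the ``empty intersection $\Rightarrow$ drop the branch'' pruning; (ii) the fact that $\bottom$ is never matched against a marked variable, since otherwise $\matchv{A}{p}$ returns \textit{fail}, contradicting the hypothesis $\chk{f}$; (iii) the decomposition of a variable identifier $\idf$ by a product or record pattern, which introduces \emph{fresh} identifiers, these still denoting subtrees of $t$ because the subtree relation is transitive and the set of subtrees of $t$ is closed under taking subtrees; and (iv) the nested-recursion clause of $\chk{}$, discharged by the structural induction hypothesis applied to the inner filter.

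Granting the invariant, finiteness is immediate. The first phase returns a \emph{finite} set $\{A_1,\dots,A_n\}=\Trees{X}{\emptyset}{g}$ of symbolic arguments, each a fixed finite term with, say, $k_j$ occurrences of variable identifiers and with its remaining leaves constants or closed values. Since $t$ is a regular tree it has only finitely many distinct subtrees; write $S_t$ for that finite set. By the invariant, every concrete type argument of $X$ is obtained from some $A_j$ by substituting an element of $S_t$ for each of its variable-identifier occurrences (keeping the constant/value leaves), hence there are at most $\sum_{j=1}^{n}|S_t|^{k_j}$ of them, finitely many. Together with the no-re-expansion property of the inference procedure, this shows the algorithm halts on $t$.

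\textbf{Main obstacle.}
I expect the soundness of the abstract interpretation to be the hard part, and within it the treatment of composition $\fseq{f_1}{f_2}$: one must make precise in what sense the \emph{set of symbolic arguments} returned by $\chk[{\cal V}]{\EE,f_1,A}$ over-approximates the \emph{types} produced by concretely running $f_1$ — which may itself involve pattern matching against structured and recursive types, unions, and $\bottom$ — and verify that threading this over-approximation into $f_2$ loses no recursive call. Tightly coupled with this is the bookkeeping of the correspondence between concrete types and symbolic arguments across the introduction of fresh variable identifiers, and checking that the \textit{fail}/$\Omega$ distinction of Figure~\ref{fig:match} fires exactly where claimed, so that a filter passing $\chk{}$ genuinely never binds an input-dependent non-subtree to a variable that controls a recursive call.
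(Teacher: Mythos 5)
Your proposal follows essentially the same route as the paper's proof: your soundness invariant relating the abstract run of $\chk{}$ to the concrete inference run is the paper's Lemma~\ref{lem:chk}, your finite set of possible argument types for recursive calls is the paper's \emph{support} $\supp{f}{t}$, and your informal appeal to the ``no-re-expansion'' property of the memoization set is made precise in the paper by assigning each judgment the lexicographic weight $\fprod{|\DD{f}{t}\setminus\dom(M')|}{\textit{size}(f')}$ and checking that every rule, in particular {\sc [Fil-Call-New]}, strictly decreases it. One correction to your finiteness count: the set you substitute for identifier occurrences cannot be merely the set $S_t$ of distinct subtrees of $t$, because the typing rules intersect the input type with accepted types of patterns and filters (and subtract them in the union rule), so the types actually bound to capture variables are Boolean combinations of subtrees of $t$ and of values occurring in $f$; the paper therefore closes this set under $\wedge$, $\vee$, $\neg$ --- the plinth $\hat\beth(f,t)$ of Definitions~\ref{def:plinth} and~\ref{def:explinth}, still finite by regularity of $t$ --- and your counting argument goes through verbatim with that replacement.
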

In order to prove the theorem above we need some auxiliary definitions:

\begin{definition}[Plinth~\cite{alainthesis}]
\label{def:plinth}
  A plinth $\beth\subset\Types$ is a set of types with the
  following properties:
\begin{itemize}
\item $\beth$ is finite
\item $\beth$ contains \tany, \tempty{} and is closed under Boolean
  connectives ($\patand{}{}$,$\pator{}{}$,$\synneg$)
\item for all types $t =\patpair{t_1}{t_2}$ in $\beth$,
  $t_1\in\beth$ and $t_2\in\beth$
\item for all types $t = \patrec{\ell_1\col t_1,\ldots,\ell_n\col t_n,(\pmb{..})}$ in $\beth$, $t_i\in\beth$ for all $i\in[1..n]$.
\end{itemize}
We define the plinth of $t$, noted $\beth(t)$, as the smallest plinth containing $t$.
\end{definition}
\noindent
Intuitively, the plinth of $t$ is the set of types that can be obtained by all possible boolean combination of the subtrees of $t$. Notice that  $\beth(t)$ is always defined since our types are regular: they have finitely many distinct subtrees, which (modulo type equivalence) can thus be combined in finitely many different ways. 
\begin{definition}[Extended plinth and support]
\label{def:explinth}
Let $t$ be a type and $f$ a filter. The \emph{extended plinth} of $t$ and $f$, noted $\hat\beth(f,t)$ is defined as $\beth(t\vee\bigvee_{v \sqsubseteq f}v)$ (where $v$ ranges over values).

The \emph{support} of $t$ and $f$, noted as \supp{f}{t}, is defined as   
$$\supp{f}{t}\eqdef\;\hat\beth(f,t)\;\;\cup\!\!\!\!\!\bigcup_{A\in\Trees{\emptyset}{\emptyset}{f}}\!\!\!\!\!\{A\sigma\mid \sigma:\Id(A)\to\hat\beth(f,t)\} $$
\end{definition}
The extended plinth of $t$ and $f$ is the  set of types that can be obtained by all possible boolean combination of the subtrees of $t$ and of values that occur in the filter $f$. The intuition underlying the definition of support of $t$ and $f$ is that it includes all the possible types of arguments of recursive calls occurring in $f$, when $f$ is applied to an argument of type $t$.

Lastly, let us prove the following technical lemma:
\begin{lemma}
\label{lem:chk}
Let $f$ be a filter such that \chk{f} holds. Let $t$ be a type. For
every derivation $D$ (finite or infinite) of\\
\centerline{$\ftypej{\Gamma\sep \Delta\sep M}{f}{t}{s}$},
for every occurrence of the rule {\sc [Fil-Call-New]}
\begin{center}
  \AxiomC{$\ftypej{\Gamma'\sep\Delta'\sep M',((X,t'')\mapsto T)}{\Delta'(X)}{t''}{t'}$}
  \RightLabel{\scriptsize$
    \begin{array}{c}
      t''=\tyof{\Gamma'}{a}\\
      (X,t'')\not\in\dom(M')\\
      T \text{~fresh}
    \end{array}
    $}
  \UnaryInfC{$\ftypej{\Gamma'\sep\Delta'\sep M'}{(X~a)}{s}{\frec{T}{t'}}$}
  \DisplayProof
\end{center}
in $D$, for every $x\in\vars{a}$, $\Gamma'(x)\in\hat\beth(f,t)$ (or
equivalently, $\tyof{\Gamma'}{a}\in\supp{f}{t}$)
\end{lemma}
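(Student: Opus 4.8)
The plan is to prove the lemma by a simulation argument: relate every node of the typing derivation $D$ to a step of the abstract execution performed by $\chk{}$, maintain this correspondence along $D$, and then read the conclusion off the resulting invariant using the hypothesis $\chk{f}$. Two closure facts about plinths do all the arithmetic and are immediate from Definition~\ref{def:plinth} (they are the very reason plinths were introduced in~\cite{alainthesis}): a plinth is closed under taking subtrees and, more generally, under every type operation the inference algorithm uses to \emph{descend into} a type --- the projections $\pi_1,\pi_2$, record field selection, the Boolean connectives. Hence if $\rho:\Id\to\hat\beth(f,t)$, then any type obtained by decomposing some $\rho(\idf)$ again lies in $\hat\beth(f,t)$; and since every value occurring in a sub-filter $f'$ of $f$ also occurs in $f$, the value-singletons met while analysing $f'$ stay in $\hat\beth(f,t)$ too. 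The invariant will \emph{not} claim that the current input type of a sub-filter belongs to $\hat\beth(f,t)$ --- it need not, since the concrete result of $f_1$ in a composition $\fseq{f_1}{f_2}$ can be an arbitrary type --- but only that a bounded family of \emph{tracked} types does, namely the image of $\rho$ together with the types assigned to the \emph{marked} variables.

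The heart of the proof is this simulation invariant, established by induction on the depth of nodes in $D$ (legitimate even when $D$ is infinite, since every node has finite depth). Fix a {\sc [Fil-Call-New]} occurrence: it concerns a recursive call $(X~a)$ whose enclosing recursive filter $\frec{X}{g}$, with $g=\Delta'(X)$, is reachable in $f$ (take $f$ itself when $f$ is recursive at top level, modulo the dummy recursion of the definition of $\chk{f}$), so by definition of $\chk{f}$ the abstract check $\chk[\Mark{X}{g}]{\EE_0,g,\Trees{X}{\EE_0}{g}}$ does not \textit{fail}. The invariant reads: at every node of $D$ typing $\ftypej{\Gamma'\sep\Delta'\sep M'}{f'}{t'}{s'}$ inside a re-typing of such a body $g$, there are a substitution $\rho:\Id\to\hat\beth(f,t)$, a substitution $\EE'$, and a symbolic argument $A'$ of the matching abstract call $\chk[\Mark{X}{g}]{\EE',f',A'}$ such that (a) $t'$ is a subtype of $\atype{A'}$ and coincides with the $\rho$-instance of $A'$ at every position of $A'$ other than $\bottom$; and (b) for every $y$ with $\EE'(y)$ free of $\bottom$, $\Gamma'(y)=\EE'(y)\rho\in\hat\beth(f,t)$ --- in particular for every marked $y\in\Mark{X}{g}$, because $\matchv{\cdot}{p}$ is designed to \textit{fail} the instant a marked variable would be bound from a $\bottom$-position, which $\chk{f}$ forbids. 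The inductive step follows the clauses of $\chk{}$ and $\matchv{\cdot}{p}$ one by one: product, record and pattern clauses mint fresh variable identifiers precisely when the concrete typing decomposes a type, so one extends $\rho$ by mapping those identifiers to the corresponding subtrees (still in $\hat\beth(f,t)$ by the closure fact); the composition clause matches the concrete result of $f_1$ against $\chk[\Mark{X}{g}]{\EE',f_1,A'}$, which is either a precise symbolic argument --- re-establishing (a) --- or $\bottom$, in which case (a) degenerates harmlessly and (b) survives because nothing marked is read out of a $\bottom$; the union clause uses that $t'$ is a subtype of $\atype{A'}$ to see that whichever branch the concrete typing enters is explored by the abstract run as well; and at a {\sc [Fil-Call-New]} step the derivation restarts on the body $g$ with input type $t''=\tyof{\Gamma'}{a}$, which by (b) equals the $\rho$-instance of the symbolic argument $a\EE'\in\Trees{X}{\EE'}{g}$, so the abstract run of \emph{that} symbolic argument is re-entered with $\rho$ suitably extended (and renamed onto the fresh identifiers of that run). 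Instantiating (b) at the node of the {\sc [Fil-Call-New]} occurrence that types $(X~a)$ then yields $\Gamma'(x)=\EE'(x)\rho\in\hat\beth(f,t)$ for every $x\in\vars{a}$, since all such $x$ are marked; and $\tyof{\Gamma'}{a}$, being a $\hat\beth(f,t)$-instance of a symbolic argument of a recursive call of $f$, lies in $\supp{f}{t}$ --- the parenthetical equivalent.

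I expect the main obstacle to be formulating the invariant so that it survives the composition filter: there $\chk{}$ may collapse the result of $f_1$ to $\bottom$ while the concrete typing produces a genuine type $s_1$ that is \emph{not} in $\hat\beth(f,t)$, so the invariant must be weak enough not to claim $s_1\in\hat\beth(f,t)$ and yet strong enough to still guarantee that nothing marked can be extracted from $s_1$ downstream --- which is precisely the property that $\chk{f}$ enforces through the \textit{fail} case of $\matchv{\cdot}{p}$. A secondary difficulty is the set-theoretic type algebra: the concrete input type reaching a product or record sub-filter is in general a union of products (or records) rather than a literal constructor application, so aligning it with the symbolic argument $A'$ and with $\rho$ again leans on the plinth-closure facts of~\cite{alainthesis} that keep the projections and field selections inside $\hat\beth(f,t)$.
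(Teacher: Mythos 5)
Your proposal is correct, and its core is the same correspondence the paper relies on: the abstract run of $\chk{}$ simulates the concrete typing derivation, variable identifiers only ever denote the input type or pieces of it obtained by projections and label selections, and $\chk{f}$ holding forces every marked variable (hence every $x\in\vars{a}$ of a recursive call) to be bound to an identifier or a value, whence $\Gamma'(x)\in\hat\beth(f,t)$. Where you differ is in presentation and in the level of rigor. The paper argues by contradiction in a few lines: if $\Gamma'(x)\notin\hat\beth(f,t)$ then $\Gamma'(x)$ is neither a value singleton from $f$ nor a projection/selection of $t$, yet $\chk{f}$ forces $x$ to be bound to an identifier or value, and identifiers arise only from the input parameter or from decomposing other identifiers --- contradiction. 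It never states, let alone proves, the simulation between the typing derivation and the $\chk{}$ run; it simply asserts that the binding discipline of $\matchv{A}{p}$ transfers to the concrete $\Gamma'$. Your explicit invariant (the substitution $\rho:\Id\to\hat\beth(f,t)$, clause (a) aligning the concrete input type with $\atype{A'}$, clause (b) for marked variables) is precisely the missing bridge, and your treatment of composition --- where $\chk{}$ collapses the result of $f_1$ to $\bottom$ while the concrete type of that result may lie outside $\hat\beth(f,t)$, so the invariant must be weak there yet still forbid marked variables from being bound out of a $\bottom$ position --- addresses the one genuinely delicate case that the paper's proof passes over in silence. In short, your argument proves the lemma the paper only sketches; the paper's version buys brevity at the cost of leaving the simulation, and in particular the $\fseq{f_1}{f_2}$/$\bottom$ interaction, implicit.
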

\begin{proof}
By contradiction. Suppose that there exists an instance of the rule
for which $x\in\vars{a}\land x\notin \hat\beth(f,t)$. This means that
$\Gamma'(x)$ is neither a singleton type occurring in $f$ nor a type
obtained from $t$ by applying left projection, right projection or
label selection ($\star$).
 But since $\chk{f}$ holds, $x$ must be
bound to either an identifier or a value $v$ (see
Figure~\ref{fig:match}) during the computation of
$\chk[\Mark{X}{f}]{\emptyset,f,\Trees{X}{\emptyset}{f}}$. Since
identifiers in $\chk{f}$ are only
introduced for the input parameter or when performing a left projection,
right projection or label selection of another identifier, this ensure
that $x$ is never bound to the result of an expression whose type is
not in $\hat\beth(f,t)$, which contradicts $(\star)$.
\end{proof}

We are now able to prove Theorem~\ref{th:halt}. Let $t$ be a type and $f$ a filter define
$$\DD{f}{t} = \{(X,t')\mid X\sqsubseteq f, t'\in\supp{f}{t}\}$$
Notice that since $\supp{f}{t}$ is finite and there are finitely many different recursion variables occuring in a filter, then $\DD{f}{t}$ is finite, too.  
Now let $f$ and $t$ be the filter and type mentioned in the statement of
Theorem~\ref{th:halt} and consider the (possibly infinite) derivation
of $\ftypej{\Gamma\sep\Delta\sep M}{f}{t}{s}$. Assign to every
judgment $\ftypej{\Gamma'\sep\Delta'\sep M'}{f'}{t'}{s'}$ the
following weight
$$\textit{Wgt}(\ftypej{\Gamma'\sep\Delta'\sep M'}{f'}{t'}{s'})=\fprod{|\DD{f}{t}\setminus\dom(M')|~}{~\textit{size}(f')}$$
where $|S|$ denotes the cardinality of the set $S$ (notice that in the definition $S$ is finite), $\dom(M')$ is the
domain of $M'$, that is, the set of pairs $(X,t)$ for which $M'$ is
defined, and $\textit{size}(f')$ is the depth of the syntax tree of $f'$. 

Notice that the set of all weights lexicographically ordered form a
well-founded order. It is then not difficult to prove that every
application of a rule in the derivation of
$\ftypej{\Gamma\sep\Delta\sep M}{f}{t}{s}$ strictly decreases
\textit{Wgt}, and therefore that this derivation must be finite.
This can be proved by case analysis on the applied rule, and we must distinguish only three cases:
\begin{description}
\item[{\sc[Fil-Fix]}] Notice that in this case the first component of \textit{Wgt} either decreases or remains constant, and the second component strictly decreases.
\item[{\sc[Fil-Call-New]}] In this case Lemma~\ref{lem:chk} ensures that the first component of the \textit{Wgt} of the premise strictly decreases.  Since this is the core of the proof let us expand the rule. Somewhere in the derivation of $\ftypej{\Gamma\sep\Delta\sep M}{f}{t}{s}$ we have the following rule: \\[2mm]
\hspace*{-8mm}\begin{tabular}{l}
{\small\sc [Fil-Call-New]}\\
  \AxiomC{$\ftypej{\Gamma'\sep\Delta'\sep M',((X,t'')\mapsto T)}{\Delta'(X)}{t''}{t'}$}
  \RightLabel{\scriptsize$
                 \begin{array}{c}
                  t''=\tyof{\Gamma'}{a}\\
                  (X,t'')\not\in\dom(M')\\
                  T \text{~fresh}
                \end{array}
                $}
  \UnaryInfC{$\ftypej{\Gamma'\sep\Delta'\sep M'}{(Xa)}{s}{\frec{T}{t'}}$}
  \DisplayProof
\end{tabular}\\[2mm]
and we want to prove that $(\DD{f}{t}\setminus\dom(M'))\supsetneq(\DD{f}{t}\setminus(\dom(M')\cup(X,t''))$ and the containment must be strict to ensure that the measure decreases. First of all notice that $(X,t'')\not\in\dom(M')$, since it is a side condition for the application of the rule. So in order to prove that containment is strict it suffices to prove that $(X,t'')\in\DD{t}{f}$. But this is a consequence of Lemma~\ref{lem:chk} which ensures that $t''\in\supp{f}{t}$, whence the result.

\item[{\sc[Fil-*]}] With all other rules the first component of \textit{Wgt} remains constant, and the second component strictly decreases.
\end{description}

\begin{figure*}[t!]
\begin{alltt}
  type M = (K,V) | (V,V,K)
  type V = \{ var : string \} | \{ lambda2 : (string, string, M) \}
  type K = \{ var : string \} | \{ lambda1 : (string, M) \}

  let filter Eval = 
    |  ( \{ lambda2 : (x, k, m) \}, v , h ) -> m ; Subst[x,v] ; Subst[k,h]; Eval
    |  ( \{ lambda1 : (x, m) \}, v ) -> m ; Subst[x,v] ; Eval
    |  x -> x

  let filter Subst[\(c\),F] = 
    |  ( Subst[\(c\),F] ,  Subst[\(c\),F] ,  Subst[\(c\),F] ) 
    |  ( Subst[\(c\),F] ,  Subst[\(c\),F] ) 
    |  \{ var : \(c\) \}  ->  F
    |  \{ lambda2 : (x&\({\neg}c\), k&\({\neg}c\), m) \}  ->  \{ lambda2 : (x, k, m;Subst[\(c\),F]) \}
    |  \{ lambda1 : (x&\({\neg}c\), m) \}  ->  \{ lambda1 : (x, m;Subst[\(c\),F]) \}
    |  x  ->  x
\end{alltt}
\caption{Filter encoding of $\lambda_\text{cps}$}\label{fig:tc}
\end{figure*}

\subsection{Improvements}

Although the analysis performed by our algorithm is already fine grained,
it can be further refined in two simple ways. As explained in
Section~\ref{label:typing}, the algorithm checks that after \emph{one
step} of reduction the capture variables occurring in recursive calls
are bound to subtrees of the initial input type. So a possible
improvement consists to try to check this property for a higher number
of steps, too. For instance consider the filter:
\begin{tabbing}
\qquad$\frec{X}{}$\=$\fpat{\patpair{\NIL}{x}}{X(\patrec{\ell\col x})}$\\
  \qquad\quad\pmb{|}\>$\fpat{\patpair{y}{\_}}{X(\patpair{\NIL}{\patpair{y}{y}})}$\\
  \qquad\quad\pmb{|}\>$\fpat{\_}{\NIL}$
\end{tabbing}
This filter does not pass our analysis since if $\idf_y$ is the
identifier bound to the capture variable $y$, then when unfolding
the recursive call in the second branch the $x$ in the
first recursive call will be bound to $\patpair{\idf_y}{\idf_y}$. But
if we had pursued our abstract execution one step further  we would have
seen that $\patpair{\idf_y}{\idf_y}$ is not used in a recursive call
and, thus, that type inference terminates. Therefore, a first
improvements is
to modify $\chk{}$ so that it does not stop just after one step of
reduction but tries
to go down $n$ steps, with $n$ determined by some heuristics
based on sizes of the filter and of the input type.

A second and, in our opinion, more promising improvement is to enhance
the precision of the test  $\patand{\atype{A}}{\accept{f}}
= \emptyset$ in the definition of \textit{Check}, that verifies whether
the filter $f$ fails on the given symbolic argument $A$. In the current
definition the only information we collect about the type of symbolic
arguments is their structure. But further type
information, currently unexploited, is provided by patterns. For instance, in the following
(admittedly stupid) filter
\begin{tabbing}
\qquad$\frec{X}{}$\=$\fpat{\patpair{\Int}{x}}{x}$\\
  \qquad\quad\pmb{|}\>$\fpat{\patand{y}{\Int}}{X({\patpair{y}{y}})}$\\
  \qquad\quad\pmb{|}\>$\fpat{z}{z}$
\end{tabbing}
if in the first pass we associate $y$ with $\idf_y$, then we know that
$\patpair{\idf_y}{\idf_y}$ has type $\patpair{\Int}{\Int}$. If we
record this information, then we know that in the second pass
$\patpair{\idf_y}{\idf_y}$ will always match the first pattern, and so it will
never be the argument of a further recursive call. In other words, there is at
most one nested recursive call. The solution is conceptually simple (but
yields a cumbersome formalization, which is why we chose the current
formulation) and amounts to modify $\Trees{}{}$ so that when it
introduces fresh variables it records their type information with
them. It then just suffices to modify the definition of $\atype{A}$ so
as it is obtained from $A$ by
replacing every 
occurrence of a variable identifier by its type information (rather
than by $\tany$) and the current definition of $\chk$ will do the
rest.

\section{Proof of Turing completeness (Theorem~\ref{th:tc})}
\label{label:tc}
In order to prove Turing completeness we show how to define by filters
an evaluator for untyped (call-by-value) $\lambda$-calculus. If we
allowed recursive calls to occur on the left-hand side of composition, then the encoding
would be straightforward: just implement $\beta$ and context
reductions. The goal however is to show that the restriction on
compositions does not affect expressiveness. To that end we have to
avoid context reductions, since they require recursive calls before
composition. To do so, we first translate $\lambda$-terms via
Plotkin's call-by-value CPS translation and apply Steele and Rabbit's 
administrative
reductions to them obtaining terms in $\lambda_\text{cps}$. The latter
is isomorphic to cbv $\lambda$-calculus (see for instance~\cite{SW96})
and defined as follows.
$$\begin{array}{rcl}
M & ::= & KV \mid V V K\\
V & ::= & x \mid \lambda x.\lambda k. M\\
K & ::= & k \mid \lambda x . M
\end{array}$$
with the following reduction rules (performed at top-level, without any reduction under context).
$$\begin{array}{rcl}
 (\lambda x.\lambda k. M)VK & \longrightarrow & M[x:=V][k:=K]\\
 (\lambda x.M)V& \longrightarrow & M[x:=V]
\end{array}
$$
In order to prove Turing completeness it suffices to show how to
encode $\lambda_\text{cps}$ terms and reductions in our calculus. For
the sake of readability, we use mutually recursive types (rather
than their encoding in $\mu$-types),  we use records (though pairs would
have sufficed), we write just the recursion variable $X$ for the filter {\tt
  x -> $X$x}, and use $\neg t$ to denote the type $\ANY\syndiff
t$. Term productions are encoded by the recursive types given at the beginning of Figure~\ref{fig:tc}.

Next we define the evaluation filter \texttt{Eval}. In its body it calls the filter \texttt{Subst[$c$,F]} which implements the capture free substitution and, when $c$ denotes a constant, is defined as right below.\footnote{We were a little bit sloppy in the notation and used a filter parameter as a pattern. Strictly speaking this is not allowed in filters and, for instance, the branch \texttt{ \{ var : \(c\) \} -> F} in \texttt{Subst} should rather be written as \texttt{ \{ var : y \} -> ( (y=$c$); (true -> F | false ->  \{ var : y \}) )}. Similarly, for the other two cases. }

It is straightforward to see that the definitions in Figure~\ref{fig:tc} implement the
reduction semantics of CPS terms. Of course the definition above would
not pass our termination condition. Indeed while \texttt{Subst} would
be accepted by our algorithm, \texttt{Eval} would fail since it is not
possible to ensure that the recursive calls of \texttt{Eval} will
receive from \texttt{Subst} subtrees of the original input type. This
is expected: while substitution always terminate they may return trees that are not subtrees of the original term.

\section{Proof of subject reduction (Theorem \ref{thm:subred})}
\label{prf:subred}
We first give the proof of Lemma~\ref{lem:omega}, which we restate:

Let $f$ be a filter and $v$ be a value such that $v\notin\ftype{f}$. 
Then for every $\gamma$, $\delta$,
if $\japp{\delta\textbf{;}\gamma}{f}{v}{r}$, $r\equiv\Omega$.

\begin{proof}
  By induction on the derivation of
  $\japp{\delta\textbf{;}\gamma}{f}{v}{r}$, and case analysis on the
  last rule of the derivation:
  \begin{description}
    \item[(\textbf{expr}):] here, $\ftype{e} = \ANY$ for any
      expression $e$, therefore this rule cannot be applied (since
      $\forall v, v\in\ANY$).
    \item[\textbf{(prod)}:] assume $v\notin
      \ftype{\fprod{f_1}{f_2}}\equiv\patpair{\ftype{f_1}}{\ftype{f_2}}$.
      then either $v\notin \patpair{\ANY}{\ANY}$, in which case only
      rule \textbf{(error)} applies and therefore $r\equiv\Omega$.
      Or $v\equiv(v_1,v_2)$ and $v_1\notin\ftype{f_1}$. Then by
      induction hypothesis on the first premise, 
      $\japp{\delta\textbf{;}\gamma}{f_1}{v_1}{r_1}$ and
      $r_1\equiv\Omega$, which contradicts the side condition
      $r_1\neq\Omega$ (and similarly for the second
      premise). Therefore this rule cannot be applied to
      evaluate $\fprod{f_1}{f_2}$.
    \item[(\textbf{patt}):] Similarly to the previous case. If $v\notin
      \ftype{\fpat{p}{f}}=\patand{\ftype p}{\ftype f}$ then either $v\notin\ftype{p}$ (which
      contradicts $v/p\neq\Omega$) or  $v\notin\ftype{f}$ and by induction hypotheis,
      $\japp{\delta\textbf{;}\gamma, v/p}{f}{v}{\Omega}$.
    \item[(\textbf{comp}):] If $v\notin
      \ftype{\fseq{f_1}{f_2}}\equiv\ftype{f_1}$, then by induction
      hypothesis on the first premise, $r_1\equiv\Omega$, which
      contradicts the side condition $r_1\neq\Omega$. Therefore only
      rule \textbf{(error)} can be applied here.
    \item[(recd):] Similar to product type: either $v$ is not a record
      and therefore only the rule \textbf{(errro)} can be applied, or
      one of the $r_i=\Omega$ (by induction hypothesis) which
      contradicts the side condition.
    \item[\textbf{(union1)} and \textbf{(union2)}:] since
      $v\notin\ftype{\falt{f_1}{f_2}}$, this means that
      $v\notin\ftype{f_1}$ \emph{and}
      $v\notin\ftype{f_2}$. Therefore if rule \textbf{(union1)} is
      chosen, by induction hypothesis, $r_1\equiv\Omega$ which
      contradicts the side condition. If rule \textbf{(union2)} is
      chosen, then by induction hypothesis $r_2\equiv\Omega$ which
      gives $r\equiv\Omega$.
    \item[\textbf{(rec-call)}] trivially true, since it cannot be that
    $v\notin\tany$
    \item[\textbf{(rec)}] we can apply straightfowardly the induction
      hypothesis on the premise and have that $r\equiv\Omega$.
    \item[\textbf{(groupby)} and \textbf{(orderby)}:] If $v\notin{f}$
      then the only rule that applies is \textbf{(error)} which gives
      $r\equiv\Omega$.
  \end{description}
\end{proof}
We are now equipped to prove the subject reduction theorem which we
restate in a more general manner:

For every $\Gamma$, $\Delta$,
$M$, $\gamma$, $\delta$ such that $\forall x\in \dom(\gamma),
x\in\dom(\Gamma)
\land \gamma(x):\Gamma(x)$, if
$\ftypej{\Gamma\sep\Delta\sep M}{f}{t}{s}$, then for all $v:t$,
$\japp{\delta\textbf{;}\gamma}{f}{v}{r}$ implies $r:s$.

The proof is by induction on the derivation of
$\japp{\delta\textbf{;}\gamma}{f}{v}{r}$, and by case analysis on the
rule. Beside the basic case, the only rules which are not a
straightforward application of induction are the rules \textbf{union1} and
\textbf{union2} that must be proved simulatneously. Other cases being
proved by a direct application of the induction hypothesis, we only detail the
case of the product constructor.
\begin{description}
  \item[\textbf{(expr)}:] we suppose that the host languages enjoys
    subject reduction, hence $e(v) = r : s$.
  \item[\textbf{(prod)}:] Here, we know that
    $t\equiv\bigvee_{i\leq\texttt{rank}(t)}(t_1^i,t_2^i)$. Since $v$ is a
    value, and $v:t$, we obtain that $v\equiv (v_1, v_2)$. Since
    $(v_1, v_2)$ is a value, $\exists i\leq \texttt{rank}(t)$ such
    that $(v_1, v_2):(t_1^i,t_2^i)$. The observation that $v$ is a
    value is crucial here, since in general given a type $t'\leq t$
    with $t'\equiv\bigvee_{j\leq\texttt{rank}(t')}({t'_1}^j,{t'_2}^j)$
    it is not true that $\exists i,
    j~\textrm{s.t.~}({t'_1}^j,{t'_2}^j)\leq({t_1}^i,{t_2}^i)$. However
    this property holds for singleton types and therefore for values.
    We have therefore that $v_1:t_1^i$ and $v_2:t_1^i$. Furthermore,
    since we suppose that a typing derivation exists and the typing
    rules are syntax directed, then 
    $\ftypej{\Gamma\sep\Delta\sep M}{f}{t_1^i}{s_1^i}$ must be a
    used to prove our typing judgement. We can apply the induction
    hypothesis and deduce that
    $\japp{\delta\textbf{;}\gamma}{f}{v_1}{r_1}$ and similarly for
    $v_2$. We have therefore that the filter evaluates to $(r_1, r_2)$
    which has type $(s_1^i, s_2^i)\leq
    \bigvee_{j\leq\texttt{rank}(s)}({s_1}^i,{s_2}^i)$ which proves
    this case.
\item[\textbf{(union1)} and \textbf{(union2)}:] both rules must be proved
  together. Indeed, given a filter $\falt{f_1}{f_2}$ for which we have
  a typing derivation for the judgement
  $\ftypej{\Gamma\sep\Delta\sep M}{\falt{f_1}{f_2}}{t}{s}$, either
  $v:\patand{t}{\ftype{f_1}}$ and we can apply the induction
  hypothesis and therefore, $\ftypej{\Gamma\sep\Delta\sep
    M}{\falt{f_1}}{\patand{t}{\ftype{f_1}}}{s_1}$ and if
  $\japp{\delta\textbf{;}\gamma}{f}{v_1}{r_1}$ (case \textbf{union1})
  $r_1:s_1$. However it might be that $v\notin
  \patand{t}{\ftype{f_1}}$. Then by Lemma~\ref{lem:omega} we have that
  $\japp{\delta\textbf{;}\gamma}{f}{v_1}{\Omega}$ (case
  \textbf{union2}) and we can apply the induction hypothesis on the
  second premise, which gives us
  $\japp{\delta\textbf{;}\gamma}{f}{v_1}{r_2}$ which allows us to
  conclude.
\item[\textbf(error):] this rule can never occur. Indeed, if $v:t$ and
  $\ftypej{\Gamma\sep\Delta\sep M}{f}{t}{s}$, that means in particular
  that $t\leq \ftype{f}$ and therefore that all of the side conditions
  for the rules \textbf{prod}, \textbf{pat} and \textbf{recd} hold.
  
\end{description}
\section{Complexity of the typing algorithm}
\label{sec:complex}
\begin{proof}  For clarity we restrict
  ourselves to types without record constructors but the proof can
  straihgtforwardly be extended to them. First remark that our
  types are isomorphic to alternating tree automata (ATA) with
  intersection, union and complement (such as those defined
  in~\cite{tata07, haruobook}). From such an ATA $t$ it is possible to
  compute a non-deterministic tree automaton $t'$ of size
  $O(2^{|t|})$. When seeing $t'$ in our type formalism, it is a type
  generated by the following grammar:
\begin{displaymath}
\begin{array}{lcl}
  \tau & ::= & \mu X. \tau~\mid~\pator{\tau}{\textit{const}} ~\mid~\textit{const}\\
  \textit{const} & ::= & \patpair{\tau}{\tau} ~\mid~\textit{atom}\\
  \textit{atom} & ::= & X ~\mid~ b
\end{array}
\end{displaymath}
where $b$ ranges over negation and intersections of basic
types. Intuitively each recursion variable $X$ is a state in the NTA,
a finite union of products is a set of non-deterministict transitions whose
right-hand-side are each of the products and \textit{atom} productions
are leaf-states. Then it is clear than if \chk{f} holds, the algorithm
considers at most $|f|\times|t'|$ distinct cases (thanks to the
memoization set $M$). Furthermore for each rule, we may test a
subtyping problem (\emph{eg}, $t\leq\ftype{f}$), which itself is EXPTIME
thus giving the bound
\end{proof}

\section{Precise typing of \GROUPBY{}}
\label{groupby}

The process of inferring a precise type for $\GROUPBY{f}(t)$ is decomposed in several steps.
First we have to compute the set $\DIS{f}$ of discriminant domains for
the filter $f$. The idea is that these domains are the types on which
we know that $f$ may give results of different types. Typically this corresponds to all possible branchings that the filter $f$ can do. For instance, if  $\{t_1,...,t_n\}$ are pairwise disjoint types and $f$ is the filter  $\fpat{t_1}{e_1}\pmb{|}\cdots\pmb{|}\fpat{t_n}{e_n}$, then the set of discriminant domains
for the filter $f$ is $\{t_1,...,t_n\}$, since the various $s_i$ obtained from $\ftypej{\emptyset\sep \emptyset\sep
          \emptyset}{f}{t_i}{s_i}$ may all be different, and we want to keep track of the relation between a result type $s_i$ and the  input type $t_i$ that produced it. Formally $\DIS{f}$ is defined as follows:
$$
\hspace*{-1mm}\begin{array}{l@{\;=~~}l}
\DIS{e} & \{\tany\}\\
\DIS{\fprod{f_1}{f_2}} & \DIS{f_1} \times\DIS{f_2}\\
\DIS{\fpat{p}{f}} &\{\patand{\accept{p}}{t}\mid t\in\DIS{f}\}\\
\DIS{\falt{f_1}{f_2}} &\DIS{f_1} \cup\{t\syndiff{\accept{f_1}}\mid t\in\DIS{f_2}\}\\
\DIS{\frec{X}{f}} & \DIS{f}\\
\DIS{X a} & \{\tany\} \\
\DIS{\fseq{f_1}{f_2}} &\DIS{f_1}\\
\DIS{of} &\{\LIST{\tany\ks}\}\qquad (o=\GROUPBY{},\ORDERBY{})\\
\DIS{\forec{\ell_i{\col}f_i}_{i\in
I}}&{\bigcup \limits_{t_i\in\DIS{f_i}}\forec{\ell_i{\col}t_i}_{i \in I}}\\
\end{array}
$$

Now in order to have a high precision in typing we want to compute the
type of the intermediate list of the groupby on a set of disjoint
types. For that we define a normal form of a set of types that given a
set of types computes a new set formed of pairwise disjoint types whose
union covers the union of all the original types.

$$\NF{\{t_i\mid i\in S\}} = \bigcup_{\emptyset\subset I \subseteq S}(\bigcap_{i\in I}t_i\setminus\bigcup_{j\in S\setminus I}t_j)$$

Recall that we are trying to deduce a type for  $\GROUPBY{f}(t)$, that is for the expression $\GROUPBY{f}$ when applied to an argument of type $t$. Which types shall we use as input for our filter $f$ to compute the
type of the intermediate result? Surely we want to have all the types
that are in the $\ITEM{t}$. This however is not enough since we
would not use the information of the descriminant domains for the
filter. For instance if the filter gives two different result types
for positive and negative numbers and the input is a list of integers,
we want to compute two different result types for positive and
negatives and not just to compute the result of the filter application
on generic integers (indeed  $\ITEM{\LIST{\Int\ks}} = \{\Int\}$). So the idea is to add to the set that must be
normalized all the types of the set of discriminant types. This
however is not precise enough, since these domains may be much larger
than the item types of the input list. For this reason we just take
the part of the domain types that intersects at least some possible
value of the list. In other terms we consider the normal form of the
following set
$$\ITEM{t}\cup\{\sigma_i\synwedge\!\!\bigvee_{t_i\in\DIS{f}}\!\!\!t_i\mid \sigma_i \in \ITEM{t}\}$$

The idea is then that if $\{t_i\}_{i\in I}$ is the normalized set
of the set above, then the type of grouby is the type
$\LIST{\bigvee_{i\in I}\patpair{f(t_i)}{\LIST{t_i\kss}}\ks}$ with the
optimization that we can replace the $\ks$ by a $\kss$ if we know that the
input list is not empty.

\section{Comparison with top-down tree transducers}
\label{sec:comptttrl}

We first show that every Top-down tree transducers with regular
look-ahead can be encoded by a filter. We use the definition of
\cite{Engelfriet76} for top-down tree transducers:
\begin{definition}[Top-down Tree transducer with regular look-ahead]
A \emph{top-down tree transducer with regular look-ahead} (TDTTR) is a
5-tuple $<\Sigma, \Delta, Q, Q_d, R>$ where $\Sigma$ is the input
alphabet, $\Delta$ is the output alphabet, $Q$ is a set of states,
$Q_d$ the set of initial states and $R$ a set of rules of the form:
\begin{displaymath}
q(a(x_1,\ldots, x_n)), D \rightarrow b(q_1(x_{i_1}), \ldots, q_m(x_{i_m}))
\end{displaymath}
where $a\in \Sigma_n$, $b\in \Delta_m$, $q$, $q_1$,\ldots, $q_m\in Q$,
$\forall j \in 1..m, i_j \in 1..n$ and $D$ is a mapping from
$\{x_1,\ldots, x_n\}$ to $2^{T_\Sigma}$ ($T_\Sigma$ being the set of
regular tree languages over alphabet $\Sigma$). A TDTTR is
\emph{deterministic} if $Q_d$ is a singleton and the left-hand sides
of the rules in $R$ are pairwise disjoint.
\end{definition}

Since our filters encode programs, it only make sense to compare
filters and \emph{deterministic} TDTTR.
We first show that given such a \emph{deterministic} TDTTR we can
write an equivalent filter $f$ and, furthermore, that $\chk{f}$ does
not fail. First we recall the encoding of ranked labeled trees into
the filter data-model:
\begin{definition}[Tree encoding]
Let $t\in T_\Sigma$. We define the tree-encoding of $t$ and we write
$\inter{t}$ the value defined inductively by:
\begin{displaymath}
\begin{array}{lclr}
\inter{ a } & = & \texttt{(`a,[])} & \forall a\in \Sigma_0\\
\inter{ a(t_1, \ldots, t_n) } & = & \texttt{(`a,[ $\inter{t_1}$ \ldots{} $\inter{t_n}$ ])} & \forall a \in \Sigma_n\\
\end{array}
\end{displaymath}
where the list notation \texttt{[ $v_1$ \ldots{} $v_n$ ]} is a
short-hand for \texttt{($v_1$, \ldots, ($v_n$, `nil))}. We generalize
this encoding to tree languages and types. Let $S\subseteq T_\sigma$,
we write $\inter{S}$ the set of values such that $\forall t\in S,
\inter{t}\in \inter{S}$.
\end{definition}
\noindent In particular it is clear than when $S$ is regular, $\inter{S}$ is a
type.
\begin{lemma}[TDTTR $\rightarrow$ filters]
Let $T = <\Sigma, \Delta, Q, \{ q_0 \} , R>$ be a deterministic
TDTTR. There exists a filter $f_T$ such that:
\begin{displaymath}
  \forall t\in \dom(T), \inter{T(t)} \equiv f_T \inter{t}
\end{displaymath}
\end{lemma}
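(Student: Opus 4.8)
The plan is to proceed by induction on the structure of the deterministic TDTTR, encoding each state $q\in Q$ as a filter recursion variable $X_q$ and each rule as one branch of a pattern-matching union filter. Concretely, I would first fix the encoding: a ranked tree $a(t_1,\dots,t_n)$ maps to the pair \texttt{(`a,[$\inter{t_1}\dots\inter{t_n}$])}, so a filter processing such a value must inspect the tag component (a constant pattern selecting \texttt{`a}) and then traverse the list component position by position. For each rule $q(a(x_1,\dots,x_n)),D\to b(q_1(x_{i_1}),\dots,q_m(x_{i_m}))$ I would build a filter fragment that (i) matches the tag against \texttt{`a}, (ii) walks down the list binding the $k$-th element to a variable $y_k$, with the pattern at position $k$ additionally intersected with $\inter{D(x_k)}$ to implement the regular look-ahead (this is a type, since $D(x_k)$ is regular and $\inter{\cdot}$ sends regular languages to types), and (iii) produces the output value \texttt{(`b,[ $X_{q_1}y_{i_1}\dots X_{q_m}y_{i_m}$ ])}. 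The top-level filter $f_T$ is then $\frec{X_{q_0}}{\dots}$ where the body is the union over all rules, with the recursion block also binding every other $X_q$; since the TDTTR is deterministic the left-hand sides are pairwise disjoint, so at most one branch applies and the union faithfully simulates rule selection.

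Next I would establish the correctness equation $\inter{T(t)}\equiv f_T\inter{t}$ for every $t\in\dom(T)$ by induction on $t$. The base case is a leaf $a\in\Sigma_0$: the only applicable rule has the form $q(a),D\to b(\,)$ and the corresponding branch matches \texttt{(`a,[])} and returns \texttt{(`b,[])}$=\inter{b}$. For the inductive step on $a(t_1,\dots,t_n)$, determinism guarantees a unique rule fires at state $q$; its branch binds each $y_k$ to $\inter{t_k}$ (the look-ahead intersections are satisfied precisely because $t_k\in D(x_k)$ whenever $t\in\dom(T)$), and then each sub-call $X_{q_j}y_{i_j}$ evaluates, by the induction hypothesis applied to the subtree $t_{i_j}$ at state $q_j$, to $\inter{q_j(t_{i_j})}$; assembling these under tag \texttt{`b} yields exactly $\inter{b(q_1(t_{i_1}),\dots,q_m(t_{i_m}))}=\inter{T(t)}$. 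A small amount of care is needed to phrase the induction hypothesis so that it quantifies over all states $q$ simultaneously (i.e. $\inter{q(t)}\equiv X_q\inter{t}$ for the mutually recursive block), not just the initial one.

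Finally I would verify that $\chk{f_T}$ does not fail, which is the part I expect to be the main obstacle — not because it is deep, but because it requires tracking how the termination analysis of \appcrazy~\ref{label:termination} behaves on the list-traversal idiom. The key observation is that every recursive call $X_{q_j}y_{i_j}$ has as argument a bare capture variable $y_{i_j}$, and $y_{i_j}$ is bound by a pattern that decomposes the input list purely by nested product/record patterns (no filter composition is involved in producing the arguments of recursive calls). Hence in the first phase $\Trees{}{}{}$ assigns $y_{i_j}$ a fresh variable identifier, and in the second phase $\matchv{A}{p}$ binds each such $y_{i_j}$ either to a variable identifier or (in degenerate cases) to a value, never to the result of an expression — so the marked-variable check in the first case of Figure~\ref{fig:match} always succeeds. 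The only composition in $f_T$, if any, sits inside the construction of output values where results of recursive calls are reassembled under a tag, but those results are not re-fed as arguments of further recursive calls, so $\bottom$ propagation is harmless exactly as in the \texttt{Filter[$F$]} example discussed earlier. I would therefore argue that $\chk[\Mark{X_{q}}{\cdot}]{\emptyset,\cdot,\Trees{}{}{\cdot}}$ returns a non-$\textit{fail}$ value for each recursion variable, so $\chk{f_T}$ holds, and hence by Theorem~\ref{th:halt} type inference for $f_T$ terminates on every input type — in particular the filter encoding not only simulates the transducer but is also accepted by our static analysis.
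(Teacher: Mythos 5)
Your proposal follows essentially the same route as the paper's proof: one recursion variable per state, one union branch per rule with the look-ahead realized as an intersection with the type $\inter{D(x_k)}$, correctness by induction on $t$ using determinism to select a unique branch, and $\chk{f_T}$ holding because every recursive call takes a bare capture variable bound to a strict subtree. The only point you gloss over is how to realize the mutually recursive block in a calculus with single-variable $\mu$-recursion --- the paper does this by threading a set $S$ of already-introduced states through the translation, emitting a call $X_i\,x$ when $q_i\in S$ and inlining a fresh $\mu X_i$ otherwise --- but this is a routine fix that does not change the argument.
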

\begin{proof}
The encoding is as follows. For every state $q_i\in Q$, we will introduce a
recursion variable $X_i$. Formally, the translation is performed by
the function $\textit{TR}: 2^{Q} \times Q \rightarrow \Filters$ defined as:
\begin{displaymath}
\begin{array}{lclr}
\textit{TR}(S, q_i) & = & \fpat{x}{X_i ~x} & \textrm{if } q_i \in S\\
\textit{TR}(S, q_i) & = & \mu X_i. (\falt{f_1}{\falt{\ldots}{f_n}}) & \textrm{if } q_i \notin S\\
\end{array}
\end{displaymath}
where every rule $r_j \in R$
\begin{displaymath}
r_j\equiv q_i(a_j(x_1, \ldots, x_n)), D_j
\rightarrow b_j(q_{j_1}(x_{j_{k_1}}), \ldots, q_{j_m}(x_{j_{k_m}}))
\end{displaymath}
is translated into:
\begin{displaymath}
\begin{array}{l}
\fpat{\texttt{(`a$_j$, [ $\patand{x_1}{\inter{D_j(x_1)}}$ \ldots{}
    $\patand{x_n}{\inter{D_j(x_n)}}$ ])}}{\texttt{`b$_j$,[]}} \\[1mm]
\hfill \textrm{if } b_j\in\Delta_0\\[2mm]
\fpat{\texttt{(`a$_j$, [ $\patand{x_1}{\inter{D_j(x_1)}}$ \ldots{}
    $\patand{x_n}{\inter{D_j(x_n)}}$ ])}}{}\\
\hfill {\fprod{\texttt{`b}_j}{\fprod{\fseq{x_{j_{k_1}}}{\textit{TR}(S',q_{j_1})}}{
      \fprod{\ldots}{\fprod{\fseq{x_{j_{k_m}}}{\textit{TR}(S',q_{j_m})}}{\NIL}}
}}}\\[1mm]
\textrm{where} S' = S\cup\{q_i\} \hfill  \textrm{otherwise}\\
\end{array}
\end{displaymath}
The fact that $  \forall t\in \dom(T), \inter{T(t)} \equiv
\textit{TR}(\varnothing,q_0) \inter{t}$ is proved by a straightforward
induction on $t$. The only important point to remark is that since $T$
is deterministic, there is exactly one branch of the alternate filter
$\falt{f_1}{\falt{\ldots}{f_n}}$ that can be selected by pattern
matching for a given input $v$.
For as to why $\chk{f_T}$ holds, it is sufficient to remark that each
recursive call is made on a strict subtree of the input which
guarantees that $\chk{f_T}$ returns true.
\end{proof}

\begin{lemma}[TDTTR $\not\leftarrow$ filters]
Filters are strictly more expressive than TDTTRs.
\end{lemma}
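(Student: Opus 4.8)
The plan is to exhibit a single filter whose input/output behaviour cannot be matched by any deterministic top-down tree transducer with regular look-ahead, together with the observation (already established by the previous lemma) that every such TDTTR \emph{can} be encoded. The classical obstruction for top-down transducers is that they process the input in a single downward sweep, so they cannot reorder or duplicate-then-inspect a subtree in a way that depends on information appearing elsewhere in the tree; in particular, a deterministic TDTTR cannot realize a transformation that ``swaps'' two sibling subtrees (the well-known fact that the exchange transformation $a(x_1,x_2)\mapsto a(x_2,x_1)$ composed with a leaf-level test is outside the top-down class), nor can it perform a transformation whose output size is not bounded by a linear function of a fixed traversal, such as iterating a subtree an unbounded number of times determined by the value at another node. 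I would pick the simplest such witness: a filter $f$ that reads a pair $\fprod{x_1}{x_2}$, and returns $\fprod{x_2}{x_1}$ after first checking (via a pattern, hence via an $\accept{\cdot}$ intersection) that $x_1$ has some particular shape. Using the tree encoding $\inter{\cdot}$ of the previous definition, this is a filter, and by inspection $\chk{f}$ holds since each component is returned directly with no recursive call.

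The key steps, in order, are: (1) fix the witness filter $f$ and describe the partial function on encoded trees it computes; (2) recall/state precisely the structural limitation of deterministic TDTTRs that rules this function out — namely that the image of an input subtree under a TDTTR run appears in the output in an order consistent with a depth-first left-to-right traversal modulo the (finitely many) states, so a genuine swap of two unboundedly large sibling subtrees is impossible; (3) argue by contradiction: if some deterministic TDTTR $T$ realized the same function, then on inputs of the form $a(u,v)$ with $u,v$ ranging over arbitrarily large trees, $T$'s run on the root applies a single rule $q_0(a(x_1,x_2)),D\to\beta$, and the output $\beta$ commits to placing the processed copy of $x_1$ and $x_2$ in fixed positions with fixed state sets; a pumping argument on the sizes of $u$ and $v$ then shows the output cannot simultaneously equal $\inter{a(v,u)}$ for all choices, contradiction; (4) conclude strict inclusion, combining this with the forward encoding lemma.

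I expect the main obstacle to be step (2)/(3): making the ``top-down transducers cannot swap siblings'' intuition into a clean, self-contained argument without importing a large body of transducer theory. The cleanest route is probably to cite the standard characterization (e.g.\ from \cite{Engelfriet76} or a survey) that a deterministic TDTTR, run on $a(u,v)$, produces an output whose $x_1$-descended and $x_2$-descended portions occur in the textual order $x_1$ before $x_2$ within each rule's right-hand side, and to note that iterating this down to the root forces the overall output to respect the input's left-to-right order on any two nodes that are in an ancestor-or-sibling relation reachable without reordering — so the swap filter's output is unreachable. If a fully elementary argument is wanted instead, one can use the height/size pumping lemma for the output language: fix the shapes so that $\inter{T(a(u,v))}$ would have to contain $\inter{v}$ as a prefix-positioned subtree while $T$'s root rule and bounded state set can only forward $\inter{u}$'s processed image into the left output position, and derive a contradiction by taking $u$ much larger than any constant depending on $T$. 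Either way, once the structural fact is in hand the contradiction is immediate and the lemma follows.
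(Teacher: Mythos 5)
Your proposed witness does not work, and the structural fact you lean on in steps (2)--(3) is false for top-down tree transducers. Look again at the rule format in the paper's definition: $q(a(x_1,\ldots,x_n)), D \rightarrow b(q_1(x_{i_1}),\ldots,q_m(x_{i_m}))$ with the only constraint that each $i_j\in 1..n$. Nothing forces $i_1\leq\cdots\leq i_m$, so a single rule such as $q_0(a(x_1,x_2)), D \rightarrow a(q_{\mathrm{id}}(x_2), q_{\mathrm{id}}(x_1))$ (with $q_{\mathrm{id}}$ the obvious identity state) already realizes the sibling swap, and the ``leaf-level test'' or shape check on $x_1$ you add is exactly what the regular look-ahead component $D$ is for. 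Reordering, duplicating, and deleting sibling subtrees are all things top-down tree transducers do natively; there is no ``output respects the input's left-to-right order'' theorem to cite, and the pumping argument in step (3) would collapse because $T$ \emph{can} place the processed copy of $x_2$ in the left output position. (Your aside that TDTTR output is linearly bounded is also wrong: copying a variable twice in a right-hand side yields exponential-size outputs.)

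The genuine limitation of TDTTRs, and the one the paper exploits, is about carrying an \emph{unbounded} amount of the input to an \emph{unbounded} distance: variables are scoped to a single rule, so a subtree can only be replanted at a fixed distance from where it was matched, and the finite state set can only remember boundedly much context along a path. The paper's witness is a filter on monadic trees $u_1(\cdots u_{n-1}(u_n)\cdots)$, $u_i\in\{a,b\}$, that replaces the leaf by a copy of the \emph{entire input tree}; for any fixed $T$ one takes $n$ larger than what $T$'s states and rule depths can account for and derives a contradiction (the paper notes a similar example in the cited Engelfriet reference separating top-down from bottom-up transducers). If you want to salvage your plan, you must replace the swap by a transformation of this ``copy the whole tree to depth $n$'' flavour; the rest of your skeleton (forward encoding from the previous lemma plus one inexpressible filter) is fine.
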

\begin{proof}
Even if we restrict filters to have the same domain as TDTTRs
(meaning, we used a fixed input alphabet $\Sigma$ for atomic types) we
can define a filter that cannot be expressed by a TDTTR. For instance
consider the filter:
\begin{displaymath}
 \begin{array}{r@{}r@{}l}
    \fpat{y}{} &(\mu X. &  \fpat{\patpair{\texttt{`a}}{\NIL}}{y}\\
    & \falt{}{} & \fpat{\patpair{\texttt{`b}}{\NIL}}{y}\\
    & \falt{}{} & \fpat{\patpair{\texttt{`a}}{\texttt{[$x$]}}}{(\texttt{`a},\texttt{[ $X\,x$ ]})}) \\
    & \falt{}{}     &\fpat{\patpair{\texttt{`b}}{\texttt{[$x$]}}}{(\texttt{`b},\texttt{[ $X\,x$ ]})}) \\
  \end{array}
\end{displaymath}
This filter works on monadic trees of the form $u_1(\ldots u_{n-1}(u_n)\ldots)$
where $u_i\in\{a, b\}$, and essentially replaces the leaf $u_n$ of a
tree $t$ by a copy of $t$ itself. This cannot be done by a
TDTTR. Indeed, TDTTR have only two ways to ``remember'' a subtree
and copy it. One is of course by using variables; but their scope
is restricted to a rule and therefore an arbitrary subtree can only be
copied at a \emph{fixed distance} of its original position.
For instance in a rule of the form
$q(a(x)), D \rightarrow a(q_1(x), b(b(q_1(x))) )$, assuming than $q_1$
copies its input, the copy of the original subtree is two-levels down
from its next sibling but it cannot be arbitrary far.
A second way to
copy a subtree is to remember it using the states. Indeed, states can
encode the knowledge that the TDTTR has accumulated along a
path. However, since the number of states is finite, the only thing
that a TDTTR can do is copy a fixed path. For instance for any given
$n$, there exists a TDTTR that performs the transformation above, for
trees of height  $n$ (it has essentially $2^n-1$ states which
remember every possible path taken). For instance for $n=2$, the TDTTR
is:
\begin{displaymath}
\begin{array}{lcl}
q_0(a(x)), D & \rightarrow & a(q_1(x))\\
q_1(a()), D & \rightarrow & a(a)\\
q_1(b()), D & \rightarrow & a(b)\\
q_0(b(x)), D & \rightarrow & b(q_2(x))\\
q_2(a()), D & \rightarrow & b(a)\\
q_2(b()), D & \rightarrow & b(b)\\
\end{array}
\end{displaymath}
where $\Sigma = \Delta = \{a,b\}$, $Q=\{q0, q1, q2\}$, $Q_d=\{q_0\}$
and $D = \{ x \mapsto T_\Sigma\}$. It is however impossible to write a
TDTTR that replaces a leaf with a copy of the whole input, for inputs of
arbitrary size. A similar example is used in \cite{Engelfriet75} to
show that TDTTR and bottom-up tree transducers are not comparable.
\end{proof}

\ifLONGVERSION\else
\section{JSON, XML, Regex}
\label{label:xml}

\fi
\section{Operators on record types.}
\label{tsum}

We use the theory of records defined for \cduce. We summarize here the
main definitions. These are adapted from those given in Chapter 9 of
Alain Frisch's PhD thesis~\cite{alainthesis} where the interested reader
can also find detailed definitions of the semantic interpretation of record types and of the subtyping relation it induces, the modifications that must be done to the algorithm to decide it, finer details on pattern matching definition and
compilation techniques for record types and expressions.

Let $Z$ denote some set, a function $r:\mathcal{L}\to Z$ is
\emph{quasi-constant} if there exists $z\in Z$ such that the set
$\{\ell{\in}\mathcal{L}\mid r(\ell)\not=z\}$ is finite; in this case
we denote this set by $\textsf{dom}(r)$ and the element $z$ by
$\textsf{def}(r)$. We use $\mathcal{L}\rightarrowtriangle Z$ to denote
the set of quasi-constant functions from $\mathcal{L}$ to $Z$ and the
notation $\{\ell_1=z_1, \ldots,\ell_n=z_n,\_=z\}$ to denote the
quasi-constant function $r:\mathcal{L}\rightarrowtriangle Z$ defined
by $r(\ell_i)=z_i$ for $i=1..n$ and $r(\ell)=z$ for
$\ell\in\mathcal{L}\setminus\{\ell_1,\ldots,\ell_n\}$. Although this
notation is not univocal (unless we require $z_i\not=z$), this is largely sufficient for the
purposes of this section.

Let $\bot$ be a
distinguished constant, then the sets
$\texttt{string}\rightarrowtriangle\textbf{Types}\cup\{\bot\}$ and
$\texttt{string}\rightarrowtriangle\textbf{Values}\cup\{\bot\}$
denote the set of all record types expressions and of all
record values, respectively. The constant $\bot$ represents the value of the fields
of a record that are ``undefined''. To ease the presentation we use
the same notation both for a constant and the singleton type that
contains it: so when $\bot$ occurs in
$\mathcal{L}\rightarrowtriangle\textbf{Values}\cup\{\bot\}$ it denotes a value,
while in $\texttt{string}\rightarrowtriangle\textbf{Types}\cup\{\bot\}$ it denotes
the singleton type that contains only the value $\bot$.

Given the definitions above, it is clear that the record types in Definition~\ref{def:types} are nothing but specific notations for some quasi-constant functions in  $\texttt{string}\rightarrowtriangle\textbf{Types}\cup\{\bot\}$. More precisely, the open record type expression $\patorec{\ell_1\col t_1,\ldots,\ell_n\col t_n}$ denotes the quasi-constant function $\{\ell_1= t_1,\ldots,\ell_n=t_n,\_=\tany\}$ while the closed record type  expression $\patrec{\ell_1\col t_1,\ldots,\ell_n\col t_n}$ denotes the quasi-constant function $\{\ell_1= t_1,\ldots,\ell_n=t_n,\_=\bot\}$. Similarly, the optional field notation
$\patrec{..., \ell\texttt{?:}t,...}$ denotes the record type expressions in which $\ell$ is mapped either to $\bot$ or to the type $t$, that is, $\{..., \ell = \pator{t\,}{\bot},...\}$. 

Let $t$ be a type and $r_1, r_2$ two record type expressions, that is $r_1, r_2:\texttt{string}\rightarrowtriangle\textbf{Types}\cup\{\bot\}$. The \emph{merge} of  $r_1,$ and $r_2$ with respect to $t$, noted $\oplus_t$ and used infix, is the record type expression defined as follows:
\begin{equation*}
(r_1\oplus_t r_2)(\ell) \eqdef \left\{
    \begin{array}{ll}
       r_1(\ell) &\text{if } \patand{r_1(\ell)}{t}\leq\tempty\\
       \pator{(r_1(\ell)\setminus t)}{r_2(\ell)} & \text{otherwise}  
     \end{array}\right.
\end{equation*}
Recall that by Lemma~\ref{le:rcddec} a \emph{record type} (\emph{ie}, a subtype of $\{..\}$) is equivalent to a finite union of \emph{record type expressions} (\emph{ie}, quasi-constant functions in  $\texttt{string}\rightarrowtriangle\textbf{Types}\cup\{\bot\}$). So the definition of \emph{merge} can be easily extended to all record types as follows
$$
(\bigvee_{i\in I}r_i)\oplus_t(\bigvee_{j\in J}r'_j)\eqdef\bigvee_{i\in I,j\in J}(r_i\oplus_tr'_j)
$$
Finally, all the operators we used for the typing of records in the rules of Section~\ref{se:tyrcd} are defined in terms of the merge operator:
\begin{eqnarray}
t_1 + t_2 & \eqdef & t_2 \oplus_\bot t_1\\
t\setminus \ell & \eqdef& \{\ell =\bot , \_ = c_0\} \oplus_{c_0} t
\end{eqnarray}
where $c_0$ is any constant different from $\bot$ (the semantics of the operator does not depend on the choice of $c_0$ as long as it is different from $\bot$).

Notice in particular that the result of the concatenation of two record type expressions $r_1+r_2$ may result for each field $\ell$ in three different outcomes: 
\begin{enumerate}
\item if $r_2(\ell)$ does not contain $\bot$ (\emph{ie}, the field $\ell$ is surely defined), then we take the corresponding field of $r_2$: $(r_1+r_2)(\ell) = r_2(\ell)$
\item if $r_2(\ell)$ is undefined (\emph{ie}, $r_2(\ell)=\bot$), then we take the corresponding field of $r_1$: $(r_1+r_2)(\ell) = r_1(\ell)$
\item  if $r_2(\ell)$ \emph{may} be undefined (\emph{ie},  $r_2(\ell)=\pator{t}{\bot}$ for some type $t$), then we take the union of the two corresponding fields since it can results either in $r_1(\ell)$ or  $r_2(\ell)$ according to whether the record typed by $r_2$ is undefined in $\ell$ or not:  $(r_1+r_2)(\ell) = \pator{r_1(\ell)}{(r_2(\ell)\setminus\bot)}$.
\end{enumerate}
This explains all the examples we gave in the main text. In particular, $\patrec{a\col\Int,b\col\Int}+\patrec{a\texttt{?}\col\Bool}= \patrec{a\col\pator{\Int}{\Bool},b\col\Int}$ since ``$a$'' \emph{ma}y be undefined in the right hand-side record while ``$b$'' \emph{is} undefined in it, and $\patrec{a\col\Int}+\patrec{\textbf{..}}= \patrec{\textbf{..}}$, since ``$a$'' in the right hand-side record  is defined (with $a\mapsto\tany$) and therefore has priority over the corresponding definition in the left hand-side record.

\section{Encoding of Co-grouping}
\label{sec:cogroupingencoding}

As shown in Section~\ref{sec:jaqltr}, our $\GROUPBY{}$ operator can
encode JaQL's \texttt{group each} $x=e$ \texttt{as} $y$ \texttt{by}
$e'$, where $e$ computes the grouping key, and for each distinct key,
$e'$ is evaluated in the environment where $x$ is bound to the key and
$y$ is bound to the sequence of elements in the input having that key.
Co-grouping is expressed by:
\begin{alltt}\ibm
group 
   \(l\sb{1}\) by \(x\) = \(e\sb{1}\) as \(y\sb{1}\)
         \(\vdots\)
   \(l\sb{n}\) by \(x\) = \(e\sb{n}\) as \(y\sb{n}\)
into \(e\)
\end{alltt}
Co-grouping is encoded by the following composition of
filters:
\begin{alltt}
[ \(\SemParen{l\sb{1}}\) \ldots  \(\SemParen{l\sb{n}}\) ];
[ Transform[\(\fpat{x}{(1,x)}\)] \ldots Transform[\(\fpat{x}{(n,x)}\)] ];
Expand;
\(\GROUPBY{(\falt{\fpat{\patpair{1}{\$}}{\SemParen{e\sb{1}}}}{\falt{\ldots}{\fpat{\patpair{n}{\$}}{\SemParen{e\sb{n}}}}})}\);
Transform[\(\;\fpat{\patpair{x}{l}}{}\)([\((\fseq{l}{\texttt{Rgrp1}})\)\ldots\((\fseq{l}{\texttt{ Rgrp\(n\)}})\)];
                  \(\fpat{\LIST{y\sb{1},\ldots,y\sb{n}}}{\SemParen{e}}\) )]
\end{alltt}
where
\begin{alltt}
let filter Rgrp\(i\) = \NIL => \NIL
                 | ((\(i\),x),tail) => (x , Rgrp\(i\) tail) 
                 | _ => Rgrp\(i\) tail
\end{alltt}
Essentially, the co-grouping encoding takes as argument a sequence of
sequences of values (the $n$ sequences to co-group). Each
 of these $n$ sequence, is tagged with an integer $i$. Then we flatten
 this sequences of tagged values. We can on this single sequence apply
 our $\GROUPBY{}$ operator and modify each key selector $e_i$ so that
 it is only applied to a value tagged with integer $i$. Once this is
 done, we obtain a sequence of pairs $(k,l)$ where $k$ is the commen
 grouping key and $l$ a sequence of tagged values. We only have to
 apply the auxiliary \texttt{Regrp} filter which extracts the $n$
 subsequences from $l$ (tagged with 1..$n$) and removes the integer
 tag for each value. Lastly we can call the recombining expression $e$
 which has in scope $x$ bound to the current grouping key and
 $y_1$,\ldots,$y_n$ bound to each of the sequences of values from
 input $l_1$, \ldots,$l_n$ whose grouping key is $x$.

\end{document}